\documentclass[12pt,a4paper,twoside]{article}
\usepackage{amsmath,amsthm,latexsym}

\usepackage{amsfonts,amssymb}

\newtheorem{definition}{Definition}
\newtheorem{theorem}[definition]{Theorem}
\newtheorem{proposition}[definition]{Proposition}
\newtheorem{lemma}[definition]{Lemma}
\newtheorem{corollary}[definition]{Corollary}
\newtheorem{rem}[definition]{Remark}
\newenvironment{remark}{\begin{rem}  \rm }{\end{rem}}
\newtheorem{rems}[definition]{Remarks}

\newtheorem{example}[definition]{Example}

% some author's macros
\newcommand{\initsp}{\mathsf{h}}
\newcommand\unit{\hbox{\rm 1\kern-2.8truept l}}

\newcommand\modd{\kern-8pt\mod}
\newcommand\modp{\kern-5pt\mod}

\hyphenation{po-wer ge-ne-ra-tors clas-si-cal pro-per-ties ge-ne-ra-tor ei-gen-vec-tors}
   %#1: argument of the trace

%\usepackage[italian]{babel}

\title{Stationary states of weak coupling limit type \\ Markov generators and quantum transport models  }
\begin{document}
\author{\'Alvaro Hern\'andez-Cervantes$^1$ and Roberto Quezada$^2$}
%\date{}
\maketitle
\begin{quote}
\footnotesize{

Universidad Aut\'onoma Metropolitana, Iztapalapa Campus, Av. San Rafael Atlixco 186, Col. Vicentina, 09340 Iztapalapa D.F., M\'exico. \\ 
E-mail: \texttt{ahc\_89@hotmail.com}$^{1}$,  \texttt{roqb@xanum.uam.mx}$^{2}$
}
\end{quote}
\begin{abstract}
We prove that every stationary state in the annihilator of all Kraus operators of a weak coupling limit type Markov generator consists of two pieces, one of them supported on the interaction-free subspace and the second one on its orthogonal complement. In particular, we apply the previous result to describe in detail the structure of a slightly modified quantum transport model due to Arefeva, Kozyrev and Volovich (modified AKV's model) studied first in Ref.\cite{gggq}, in terms of generalized annihilation and creation operators. 
\end{abstract}

\section{Introduction}
The main aim of this work is two-fold. We describe the structure of stationary states in the annihilator of all Kraus (or noise) operators of the class of weak coupling limit type Markov generators (WCLT generators) introduced in Ref.\cite{AccardiFQ} and, in particular, we describe in detail the structure of the stationary states of the modified transport AKV's model, studied first in Ref.\cite{gggq}, in terms of generalized annihilation and creation operators. 

A natural space to search for stationary states of WCLT generators is the annihilator of all Kraus or noise operators defined as \[Ann(D):=\{ \rho: \textrm{tr}(\rho D_{\omega})=0, \; \; \textrm{for every Bohr frequency} \; \; \omega\}\] that contains (sometimes properly) the commutant $\{H\}'$ of the reference Hamiltonian. We show that stationary states in $Ann(D)$ consist  of a piece supported on the intersection of the kernels of all Kraus operators and its adjoints, denoted by $W_{D}$ and called interaction-free subspace, and another part supported on $W_{D}^{\perp}$. While the part supported on $W_{D}$ belongs to the fixed points  sub-algebra of the semigroup, the second is much more interesting and include detailed balance as well as non-detailed balance but in the class of local detailed balance stationary states. 

For the modified AKV's model, stationary states supported on $W_{D}^{\perp}$, exhi\-bit a very interesting structure described explicitly by using an o\-pe\-ra\-tor $Z$, called interference operator in Ref.\cite{gggq}, and its adjoint $Z^{*}$. Operator $Z$ maps the support subspace $P_{2}(\initsp)$ of the degenerate second spectral projection of the reference Hamiltonian $H$, into the support subspace $P_{3}(\initsp)$ of the third spectral projection. Operators $Z$ and $Z^{*}$ perform transitions between $P_{2}(\initsp)$ and $P_{3}(\initsp)$, similar to birth and death transitions in classical stochastic processes or creation and annihilation operators in the quantum setting. We show in Theorem \ref{approach-equilibrium} that any stationary state of the modified AKV's model, is a mixture of a state $\sigma$ supported on a subspace of $P_{2}(\initsp)$ and its conjugation by $Z$, $Z\sigma Z^{*}$, that is supported on a subspace of $P_{3}(\initsp)$. It turns out that the slightly modification of AKV's transport model performed in Ref.\cite{gggq} yields another quantum transport model, indeed: the total probability mass of any initial state of the modified transport model can be redistributed as $t\to \infty$, choosing appropriate values of the parameters of the model (the $\Gamma$'s), so that most of the total probability of the final state is concentrated on the portion supported on $P_{3}$, even though the support of the final state is not contained in $P_{3}(\initsp)$ (there is a non-zero trace on $P_{2}(\initsp)$), see Remark \ref{transport}. Moreover, the evolution increases the energy of the small system when it is initially in a state supported on a distinguished subspace of $P_{2}(\initsp)$, so that when $t\to\infty$ there is an energy gain proportional to $(N-1)=\textrm{dim}P_{2}(\initsp)$, see Remark \ref{energy-gain}.  

The paper is organized as follows. After some preliminaries, in Section \ref{states-annihilator} we describe the structure of stationary state WCLT Markov generators as a convex combination of a state supported on the interaction-free subspace $W_{D}$ and another state supported on the orthogonal complement $W_{D}^{\perp}$. In Section \ref{AKV-model} we apply the results of the previous section to study the stationary state of the modified AKV's model by a method considerable shorter than the one used in Ref.\cite{gggq} In contradistinction with the previous reference, where the assumption that any stationary state belongs to $\{H\}'$ is used, we start with the computation of subharmonic projections and prove that any stationary state belongs to $\{H\}'$. Moreover we study the approach to equilibrium property for this semigroup and characterize the attraction domain of any stationary state. 
  
\section{Preliminaries}\label{preliminaries}
Following Ref.\cite{AccardiFQ}, we call Markov generator (or simply generator) any formal expression with the GKSL structure from which one can construct a Quantum Markov Semigroup. In particular we shall consider Markov generators of weak coupling limit type ${\mathcal L}$, associated with a positive self-adjoint operator (reference Hamiltonian) with discrete spectral decomposition 
\begin{equation}\label{discr-sa-H}
H=\sum_{\varepsilon_{m}\in{\rm Sp}(H)}
\varepsilon_{m} P_{\varepsilon_m}
\end{equation} Where $\varepsilon_{m}$ are the eigenvalues and $P_{\varepsilon_m}$ is the spectral projection on the eigenspace associated with $\varepsilon_{m}$. 

A weak coupling limit type Markov generator has the structure 
\begin{eqnarray}
{\mathcal L} = \sum_{\omega\in B_{+}} {\mathcal L}_{\omega} 
\end{eqnarray} 

The subset of ordered pairs of eigenvalues (Arveson spectrum) 
\begin{equation}\label{df-hat-B+}  
B_{+} :=\{(\varepsilon_{n} , \varepsilon_{m})  \ : \ 
\varepsilon_{n} - \varepsilon_{m} >0 \}
\end{equation} is the set of Bohr frequencies and 
\begin{equation}\label{df-hat-B+01}
{B}_{+,\omega}:=\{(\varepsilon_{n} ,\varepsilon_{m})\in{B}_{+}  \ : \ 
\varepsilon_{n} -\varepsilon_{m} =\omega\}
\end{equation} is the set of pairs of eigenvalues associated with the frequency $\omega$.

For every Bohr frequency $\omega$ and $x\in{\mathcal B}(\initsp)$ the corresponding generator ${\mathcal L}_{\omega}$ has the Gorini-Kossakowski-Sudarshan and Lindblad (GKSL) canonical form   

\begin{eqnarray}
\begin{aligned}
& \mathcal{L}_\omega(x)  := i[\Delta_\omega,x] \\ & 
 - \Gamma_{-,\omega}
\left(\frac{1}{2}\{D^{*}_\omega D_\omega,x\}
-D^{*}_\omega x D_\omega\right) - \Gamma_{+,\omega}
\left(\frac{1}{2}\{D_\omega D^{*}_\omega,x\}
-D_\omega x D^{*}_\omega\right)
\end{aligned}
\end{eqnarray} with partial interaction (or Kraus) operators 
\begin{equation}\label{df-Lomega1}       
D_\omega  := \sum_{(\varepsilon_n,\varepsilon_m)\in {B}_{+,\omega}}
P_{\epsilon_m}D P_{\epsilon_n}, 
\end{equation} with  $D\in{\mathcal B}(\initsp)$ an interaction operator.
The effective Hamiltonian 
\begin{equation}\label{prp-Domega2}       
\Delta_\omega =\Delta^*_\omega\in \{H\}' 
\end{equation}  
\begin{equation}\label{the-Gamma's}
\Gamma_{+ ,\omega} =c_{\omega}\frac{1}{e^{\beta(\omega)\omega} -1}
\quad ;\quad 
\Gamma_{- ,\omega} =c_{\omega}\frac{e^{\beta(\omega)\omega}}{e^{\beta(\omega)\omega}-1}
\qquad ;\qquad c_{\omega}\geq 0
\end{equation}
$\beta$ a regular enough function (inverse temperature) with 
\begin{equation}\label{beta(om)>0}    
\beta(\omega) >0 \qquad;\qquad \forall \omega\in B_+ 
\end{equation}
The predual generator has the form \[{\mathcal L}_{*}= \sum_{\omega\in B_{+}}{\mathcal L}_{\omega *}\] where for each Bohr frequency $\omega$, 
\begin{eqnarray}\label{df-Lomega0a} 
\begin{aligned} 
{\mathcal L}_{\omega *}(\rho) &  
=\left(-\frac{\Gamma_{-,\omega}}{2}
\{D^{*}_\omega D_\omega,\rho\}
+ \Gamma_{+,\omega}D^{*}_\omega\rho D_\omega\right) \\ &
+ \left(-\frac{\Gamma_{+,\omega}}{2}
\{D_\omega D^{*}_\omega,\rho\}
+\Gamma_{-,\omega}D_\omega\rho D^{*}_\omega\right) -i[\Delta_{\omega}, \rho]
\end{aligned}
\end{eqnarray}

From now on we will write simply $P_{n}$ instead $P_{\epsilon_n}$ whenever there is no confusion. 

\section{Stationary states in the annihilator of all Kraus operators}\label{states-annihilator}
In this section we study stationary states of WCLT generators that belong to the annihilator of all Kraus operators defined as \[Ann(D):=\{ \rho: \textrm{tr}(\rho D_{\omega})=0, \; \; \textrm{for every Bohr frequency} \; \; \omega\}\] Notice that \[\{H\}'' \subseteq \{H\}'\] with equality if $H$ is non degenerate. Moreover, for the class of WCLT Markov generators the inequality \[\{H\}' \subset Ann(D)\] holds. Indeed, for any $\rho\in\{H\}'$, 
\[Tr(\rho \sum_{(\varepsilon_n,\varepsilon_m)\in {B}_{+,\omega}}
P_{m}D P_{n} )= \sum_{(\varepsilon_n,\varepsilon_m)\in {B}_{+,\omega}} Tr(P_{n} P_{m}\rho D)=0,\] since $n\neq m$.  
But one can find WCLT generators with invariant states $\rho \in Ann(D)\setminus\{H\}'$. See Corollary \ref{inv-notin-comm}, below.

\subsection{Stationary states supported on the interaction-free subspace}
We start by describing the simplest class of stationary states in the annihilator of all Kraus operators, which are characterized by the stronger condition $\rho D_{\omega} =0=D_{\omega}\rho$ for all $\omega$.  The support subspace of these stationary states, defined as the closure of its range, is contained in the intersection of the kernels of all Kraus operators; moreover, these stationary states do not dependent on the coefficients $\Gamma_{\pm, \omega}$.

\begin{definition}
\begin{itemize}
\item[(i)]We call singular any subset $\Theta\subseteq {\mathcal B}(\initsp)$ satisfying the condition  
\[\ker\Theta:=\displaystyle\bigcap_{\theta\in \Theta}\ker\theta \neq \{0 \}\] 
\item[(ii)] The \textit{interaction-free subspace} 
%(also called decoherence-free subspace\cite{li12,LCW}), 
is defined by \[W_{D}:=\displaystyle\bigcap_{\omega\in\mathbf{B}_{+}}\left(\ker D_{\omega}\cap \ker D_{\omega}^{\ast}\right)\]
\end{itemize}
\end{definition} 

\begin{remark}
The relation of the interaction-free subspace $W_{D}$ with the notion of decoherence for quantum Markov semigroups\cite{BlOl03,BlOl06,CSU12} and in particular with the decoherence-free subspace\cite{li12,LCW,AgFfRr} will be discussed in forthcoming work.
\end{remark}

The proof of the following proposition is straightforward.

\begin{proposition}\label{theorem-general-hamiltonian-10} 
Let ${\mathcal L}$ be a WCLT generator and $D\in\mathcal{B}(\initsp)$ an interaction operator. Then, 
\begin{itemize}
\item[(i)] $W_{D}\neq \{0\}$ if and only if the subset \[\Theta:=\{P_{l}DP_{k}, P_{k}D^{\ast}P_{l} : l<k\}\] is singular. 
\item[(ii)] $0\neq u\in W_D$ if and only if $u=\sum_{l}P_{l}u= \sum_{l} u_{l}$, with \[u_{l}\in V_{l}:=\displaystyle\Big(\bigcap_{k<l}\ker(P_{k}DP_{l})\displaystyle\bigcap_{k>l}\ker(P_{k}D^{\ast}P_{l})\Big)\cap P_{l}({\initsp})\]  for every $l$. 
\item[(iii)] $V_{k}\perp V_{l}$ if $k\neq l$ and $W_D=\bigoplus_{l} V_{l}$.
\end{itemize}
\end{proposition}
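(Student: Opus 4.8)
The plan is to verify each of the three items by unwinding the definitions; this is why the authors call the proof straightforward. First I would establish (i). The point is that $W_D \neq \{0\}$ exactly means $\ker\Theta \neq \{0\}$, i.e.\ $\Theta$ is singular in the sense of the Definition. By \eqref{df-Lomega1}, $D_\omega = \sum_{(\varepsilon_n,\varepsilon_m)\in B_{+,\omega}} P_m D P_n$, and the ranges of the various $P_m D P_n$ with $(\varepsilon_n,\varepsilon_m)\in B_{+,\omega}$ lie in pairwise orthogonal subspaces $P_m(\initsp)$; hence $D_\omega u = 0$ if and only if $P_m D P_n u = 0$ for every $(\varepsilon_n,\varepsilon_m)\in B_{+,\omega}$. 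Taking the intersection over all Bohr frequencies $\omega$ collapses the double indexing into the single condition $P_l D P_k u = 0$ for all pairs $l<k$. The same argument applied to $D_\omega^* = \sum P_n D^* P_m$ gives $\ker D_\omega^* $ in terms of the conditions $P_k D^* P_l u = 0$ for $l<k$. Intersecting both families over $\omega$ yields $W_D = \bigcap_{l<k}\ker(P_l D P_k) \cap \bigcap_{l<k}\ker(P_k D^* P_l) = \ker\Theta$, which is (i).

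Next I would prove (ii). Decompose $u = \sum_l P_l u =: \sum_l u_l$ using the spectral decomposition \eqref{discr-sa-H}; this is a finite or convergent orthogonal sum. Since the projections $P_k$ are mutually orthogonal, the operator $P_k D P_l$ only ``sees'' the component $u_l$: for fixed $l$, $P_k D P_l u = P_k D P_l u_l$. Therefore $u \in W_D$, which by (i) means $P_k D P_l u = 0$ for all $k<l$ and $P_k D^* P_l u = 0$ for all $k>l$ (re-indexing the second family of conditions), is equivalent to demanding, for each $l$ separately, that $u_l$ lie in $\bigcap_{k<l}\ker(P_k D P_l)$ and in $\bigcap_{k>l}\ker(P_k D^* P_l)$; and of course $u_l \in P_l(\initsp)$ automatically. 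That is exactly the condition $u_l \in V_l$ for every $l$, which is (ii).

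Finally, (iii) is immediate: $V_l \subseteq P_l(\initsp)$ and the eigenspaces $P_l(\initsp)$, $P_k(\initsp)$ are orthogonal for $k\neq l$ because $H$ is self-adjoint, so $V_k \perp V_l$ for $k \neq l$. The identity $W_D = \bigoplus_l V_l$ is then just a restatement of (ii): every $u \in W_D$ is the orthogonal sum of its components $u_l \in V_l$, and conversely any such orthogonal sum lies in $W_D$ by (ii). I would also remark that the sum is orthogonal and, if $\mathrm{Sp}(H)$ is infinite, closed, since $W_D$ is closed as an intersection of kernels of bounded operators.

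I do not anticipate a genuine obstacle here; the only point requiring a little care is the bookkeeping of indices when passing between the ``per-frequency'' conditions $D_\omega u = D_\omega^* u = 0$ and the ``per-pair'' conditions indexed by $l<k$, together with keeping track of which inequality ($k<l$ versus $k>l$) goes with $D$ and which with $D^*$. Making sure the orthogonality of the ranges of $P_m D P_n$ for distinct pairs is used correctly to split $\ker D_\omega$ into the intersection of the kernels of its summands is the one step where a careless argument could go wrong, but it follows directly from $P_m P_{m'} = \delta_{m m'} P_m$.
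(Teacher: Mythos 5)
Your proof is correct and is precisely the ``straightforward'' unwinding of definitions that the paper has in mind (the paper omits the proof entirely, stating only that it is straightforward). The one step that genuinely needs justification --- splitting $\ker D_{\omega}$ and $\ker D_{\omega}^{*}$ into the intersections of the kernels of their individual summands via the pairwise orthogonality of the ranges $P_{\epsilon_m}(\initsp)$, which in turn relies on each $\epsilon_m$ occurring at most once as the lower index of a pair in $B_{+,\omega}$ --- is exactly the point you single out and handle correctly.
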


\begin{theorem}\label{theorem-general-hamiltonian-11}
With the notations in the above proposition, let $\mathcal{L}_{*}:=\mathcal{L}_{*0}+ i \delta_{\Delta}$ be a Markov generator of WCLT with associated reference Hamiltonian $H\in\mathcal{B}(\initsp)$.  
Let $D\in\mathcal{B}(\initsp)$ 
be an interaction operator satisfying $W_{D} \neq \{0\}$. Then for every $u_{j}\in W_D$, $\|u_{j}\|=1$ and $\sum_{j} \lambda_{j}=1$, the state  
\begin{eqnarray}\label{ecu-state}
\begin{aligned}
\rho:=\sum_{j}\lambda_{j} |u_{j}\rangle \langle u_{j}|
\end{aligned}
\end{eqnarray}  satisfies ${\mathcal L_{*0}(\rho)=0}$. 
Moreover, if $\delta_{\Delta_{\omega}}(\rho)= [\Delta_{\omega}, \rho]$, where  
\begin{eqnarray}\label{ecu-Delta}
\begin{aligned}
\Delta_{\omega}= \zeta_{-,\omega}D_{\omega}^{\ast}D_{\omega}+\zeta_{+,\omega}D_{\omega}D_{\omega}^{\ast}
\end{aligned}
\end{eqnarray}
and $\zeta_{-,\omega},\zeta_{+,\omega}\in\mathbb{R}$, then $\rho$ is an ${\mathcal L}_{*}$-invariant state. 
\end{theorem}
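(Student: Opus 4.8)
The plan is to verify directly that each rank-one block $|u_j\rangle\langle u_j|$ with $u_j \in W_D$ is annihilated by every $\mathcal{L}_{\omega *0}$, from which the first claim follows by linearity. Fix a Bohr frequency $\omega$ and recall that $u_j \in W_D$ means $D_\omega u_j = 0$ and $D_\omega^* u_j = 0$ for all $\omega$; equivalently $\langle u_j| D_\omega = 0 = \langle u_j| D_\omega^*$ after taking adjoints. First I would write out $\mathcal{L}_{\omega *0}(|u_j\rangle\langle u_j|)$ using \eqref{df-Lomega0a} without the $\Delta_\omega$ term and inspect each of the five operator products: the jump terms $D_\omega^* |u_j\rangle\langle u_j| D_\omega$ and $D_\omega |u_j\rangle\langle u_j| D_\omega^*$ vanish because $D_\omega u_j = 0$ and $D_\omega^* u_j = 0$ respectively, while each anticommutator term such as $\{D_\omega^* D_\omega, |u_j\rangle\langle u_j|\}$ vanishes because $D_\omega^* D_\omega |u_j\rangle = 0$ and, on the other side, $\langle u_j| D_\omega^* D_\omega = 0$. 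Hence every summand of $\mathcal{L}_{\omega *0}(|u_j\rangle\langle u_j|)$ is zero, so $\mathcal{L}_{*0}(\rho) = \sum_j \lambda_j \sum_{\omega\in B_+}\mathcal{L}_{\omega *0}(|u_j\rangle\langle u_j|) = 0$.

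For the second claim I would show that the Hamiltonian part also annihilates $\rho$ when $\Delta_\omega$ has the special form \eqref{ecu-Delta}. Since $\mathcal{L}_*(\rho) = \mathcal{L}_{*0}(\rho) - i\sum_\omega [\Delta_\omega, \rho] = -i\sum_\omega[\Delta_\omega,\rho]$ by the first part, it suffices to check $[\Delta_\omega, |u_j\rangle\langle u_j|] = 0$ for each $\omega$ and each $j$. With $\Delta_\omega = \zeta_{-,\omega} D_\omega^* D_\omega + \zeta_{+,\omega} D_\omega D_\omega^*$, one computes $\Delta_\omega |u_j\rangle = \zeta_{-,\omega} D_\omega^* D_\omega u_j + \zeta_{+,\omega} D_\omega D_\omega^* u_j = 0$ since $D_\omega u_j = 0$ and $D_\omega^* u_j = 0$, and similarly $\langle u_j| \Delta_\omega = 0$ because $\Delta_\omega$ is self-adjoint. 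Therefore $\Delta_\omega |u_j\rangle\langle u_j| = 0 = |u_j\rangle\langle u_j| \Delta_\omega$, the commutator vanishes, and consequently $\mathcal{L}_*(\rho) = 0$, i.e.\ $\rho$ is ${\mathcal L}_*$-invariant.

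Finally I would note that $\rho$ is indeed a state: it is manifestly positive as a convex combination (with weights $\lambda_j \ge 0$ summing to $1$) of rank-one orthogonal projections $|u_j\rangle\langle u_j|$ with $\|u_j\| = 1$, and $\mathrm{tr}(\rho) = \sum_j \lambda_j = 1$. The argument is essentially a sequence of routine cancellations once one observes that $u_j \in W_D$ kills both $D_\omega$ and $D_\omega^*$ acting on either side; there is no real obstacle, the only point requiring a moment of care is making sure that \emph{both} the left and right actions vanish in the anticommutator and Hamiltonian terms, which is exactly why membership in $W_D$ (rather than merely in $\bigcap_\omega \ker D_\omega$) is the right hypothesis.
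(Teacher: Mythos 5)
Your proof is correct and follows essentially the same route as the paper: the heart of the matter is that $u_j\in W_D$ kills $D_\omega$ and $D_\omega^*$ acting on either side of $|u_j\rangle\langle u_j|$, so every jump term, anticommutator term, and (for $\Delta_\omega$ of the form \eqref{ecu-Delta}) Hamiltonian term vanishes; the paper leaves this rank-one computation implicit while you spell it out. The one step you gloss over is the interchange $\mathcal{L}_{*0}\bigl(\sum_j\lambda_j|u_j\rangle\langle u_j|\bigr)=\sum_j\lambda_j\,\mathcal{L}_{*0}(|u_j\rangle\langle u_j|)$, which is not mere linearity when the sum over $j$ (or over Bohr frequencies) is infinite: the paper handles this by noting that $\mathcal{L}_{*0}$ is closed on the trace-class operators and applying it to the finite-rank partial sums $\rho_N\to\rho$, each of which satisfies $\mathcal{L}_{*0}(\rho_N)=0$. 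Adding that one sentence would make your argument complete in the general (infinite-rank) setting; in finite dimensions nothing is missing.
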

\begin{proof}
The generator ${\mathcal L}_{*0}$ is a closed operator on the Banach space of finite-trace operators on $\initsp$. Being $\rho$ the limit of a sequence of finite-rank approximations $\rho_{N}\in\textrm{dom}({\mathcal L}_{*0})$, the identity ${\mathcal L}_{*0}(\rho_{N})=0$ for all $n\geq 0$, implies that $\rho\in{\textrm{dom}}({\mathcal L}_{*0})$ and ${\mathcal L}_{*0}(\rho)=0$.  Moreover, condition $u_{j}\in W_{D}$ for every $j$, implies that $\Delta_{\omega}$ commutes with $\rho$. 
\end{proof}

\begin{remark}
\begin{itemize}
\item [(i)] Notice that condition $u\in W_{D}$, implies that the pure state $|u\rangle\langle u|$ 
is an extremal stationary state. Moreover this operator belongs to the fixed points ${\mathcal F}({\mathcal T})$ of the direct semigroup $({\mathcal T}_{t})_{t\geq 0}$.

\item[(ii)] In the finite-dimensional case, one can write $\mathcal{B}(\mathbb{C}^{d})=\{H\}^{'}\oplus(\{H\}^{'})^{\perp}$, with respect to the Hilbert-Schmidt inner product. Then for every $D\in \mathcal{B}(\mathbb{C}^{d})$, there exist unique $D_{1}\in\{H\}^{'}$ and $D_{2}\in(\{H\}^{'})^{\perp}$ such that $D=D_{1}+D_{2}$. Hence, for each Bohr frequency $\omega:=\epsilon_{n}-\epsilon_{m}>0$ we have \begin{eqnarray*}
\begin{aligned}
D_{\omega}&:=\displaystyle\sum_{(\epsilon_{n},\epsilon_{m})\in \mathbf{B}_{+,\omega}}P_{m}DP_{n}=\displaystyle\sum_{(\epsilon_{n},\epsilon_{m})\in \mathbf{B}_{+,\omega}}P_{m}(D_{1}+D_{2})P_{n} \\
&=\displaystyle\sum_{(\epsilon_{n},\epsilon_{m})\in \mathbf{B}_{+,\omega}}P_{m}D_{1}P_{n}+\displaystyle\sum_{(\epsilon_{n},\epsilon_{m})\in \mathbf{B}_{+,\omega}}P_{m}D_{2}P_{n} \\ 
&= \displaystyle\sum_{(\epsilon_{n},\epsilon_{m})\in \mathbf{B}_{+,\omega}}P_{m}D_{2}P_{n}
\end{aligned}
\end{eqnarray*}
This implies that the interesting interaction operators are those belonging to $(\{H\}^{'})^{\perp}$.
\end{itemize}
\end{remark}

Our next theorem gives a characterization of states supported on $W_{D}$ in terms of its product with Kraus operators. 

\begin{theorem}\label{inv-WD}
An invariant state $\rho$ has the form (\ref{ecu-state}) for some $u_{j} \in W_{D}$ and $\lambda_{j}>0$ if and only if 
\begin{eqnarray}\label{rho-D-omega}
\rho D_{\omega} = 0 = D_{\omega} \rho
\end{eqnarray} for every Bohr frequency $\omega$. 
\end{theorem}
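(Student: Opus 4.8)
The plan is to prove both implications directly. For the forward direction, suppose $\rho = \sum_j \lambda_j |u_j\rangle\langle u_j|$ with $u_j \in W_D$ and $\lambda_j > 0$. Since $W_D = \bigcap_{\omega}(\ker D_\omega \cap \ker D_\omega^*)$, each $u_j$ satisfies $D_\omega u_j = 0$ and $D_\omega^* u_j = 0$ for every Bohr frequency $\omega$. Then $D_\omega \rho = \sum_j \lambda_j (D_\omega |u_j\rangle)\langle u_j| = 0$, and taking adjoints (or using $D_\omega^* u_j = 0$ directly) gives $\rho D_\omega = \sum_j \lambda_j |u_j\rangle\langle D_\omega^* u_j| = 0$. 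This direction does not even use invariance of $\rho$.

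For the converse, assume $\rho$ is an invariant state satisfying $\rho D_\omega = 0 = D_\omega \rho$ for all $\omega$. First I would diagonalize: write $\rho = \sum_j \lambda_j |u_j\rangle\langle u_j|$ with $\lambda_j > 0$ and $\{u_j\}$ an orthonormal family spanning the support subspace $\mathrm{supp}(\rho) = \overline{\mathrm{ran}\,\rho}$. The key step is to deduce from $D_\omega \rho = 0$ and $\rho D_\omega = 0$ that every $u_j \in W_D$, i.e. $D_\omega u_j = 0$ and $D_\omega^* u_j = 0$ for all $\omega$. From $D_\omega \rho = 0$: for any $v$, $0 = \langle v, D_\omega \rho v\rangle$ is not immediately enough, so instead I would argue that $\mathrm{ran}(\rho) \subseteq \ker D_\omega$, since $D_\omega \rho = 0$ means $D_\omega$ annihilates the range of $\rho$; as each $u_j \in \mathrm{ran}(\rho)$ (indeed $\rho u_j = \lambda_j u_j$), this gives $D_\omega u_j = 0$. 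For the adjoint condition, from $\rho D_\omega = 0$ I take adjoints to get $D_\omega^* \rho = 0$ (using $\rho = \rho^*$), and the same argument yields $D_\omega^* u_j = 0$. Hence $u_j \in \bigcap_\omega (\ker D_\omega \cap \ker D_\omega^*) = W_D$, and $\rho$ has the required form; by Theorem \ref{theorem-general-hamiltonian-11} (or directly) it is indeed stationary, consistent with the hypothesis.

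The main obstacle, such as it is, lies in handling the possibly infinite-dimensional and infinite-rank situation carefully: one must ensure that $\rho = \sum_j \lambda_j |u_j\rangle\langle u_j|$ is a legitimate spectral decomposition with $\{u_j\}$ orthonormal eigenvectors, that $\mathrm{ran}(\rho)$ is densely spanned by these eigenvectors, and that $D_\omega \rho = 0$ as a bounded-operator identity genuinely forces $D_\omega$ to vanish on each eigenvector (not merely on a dense subspace, though continuity of $D_\omega$ closes that gap). Since $\rho$ is trace-class its nonzero spectrum is discrete with finite multiplicities, so this is routine. One small point worth stating explicitly: the theorem asserts the equivalence for an \emph{invariant} state, but the computations above show the condition $\rho D_\omega = 0 = D_\omega\rho$ by itself already pins down the structure, and invariance then follows from Theorem \ref{theorem-general-hamiltonian-11}; so the hypothesis of invariance in the ``if'' direction is in fact redundant and could be replaced by the conclusion that $\rho$ is automatically invariant.
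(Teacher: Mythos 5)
Your proposal is correct and follows essentially the same route as the paper: the forward direction is immediate from $u_j\in W_D$, and the converse diagonalizes $\rho$, applies $D_\omega\rho=0$ to each eigenvector $u_j$ to get $D_\omega u_j=0$, and handles $D_\omega^*$ by taking adjoints. Your remark that invariance is not actually needed (the algebraic condition alone forces the form (\ref{ecu-state}), and invariance then follows from Theorem \ref{theorem-general-hamiltonian-11}) is a valid observation consistent with the paper's argument, which likewise never uses the invariance hypothesis.
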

\begin{proof}
If $\rho$ has the form (\ref{ecu-state}) then it immediately follows that $\rho D_{\omega}=0= D_{\omega} \rho$ since the support of $\rho$ is a subset of $W_{D}$. 

Conversely, let $\rho$  be a state, then $\rho = \sum_{k} \rho_{k} |u_{k}\rangle \langle u_{k}|$ for some complete orthonormal system $(u_{k})$ of $\initsp$ and $\rho_{k}>0$. Moreover, if for every Bohr frequency $\rho D_{\omega} = 0 = D_{\omega} \rho$ then $\sum_{k} \rho_{k} |D_{\omega} u_{k}\rangle \langle u_{k}| = D_{\omega} \rho =0$. Therefore for every $u_{l}$ we have that $0=\sum_{k} \rho_{k} |D_{\omega} u_{k}\rangle \langle u_{k}| u_{l} = \rho_{l} D_{\omega} u_{l}$, which implies that  $u_{l}\in\ker D_{\omega}$ for every frequency $\omega$. Likewise, one proofs that $u_{l}\in \ker D_{\omega}^{*}$ for every $\omega$. We can conclude that $u_{l}\in W_{D}$ for every $l$ and, hence, $\rho$ has the form (\ref{ecu-state}). This finishes the proof.
\end{proof}

\subsection{Local detailed balance stationary states}

The stationary states of the form (\ref{ecu-state}) in Theorem \ref{theorem-general-hamiltonian-11}, are the simplest invariant states of a WCLT generator but not all invariant states belong to this class. In the case of a degenerate reference Hamiltonian, the stationary states exhibit a block structure, with blocks that may depend on the $\Gamma$'s, see Section \ref{AKV-model} and examples in Subsection \ref{exam}. 

Given a state $\rho$ and a subspace $V$, by ''$\rho$ is supported on $V$" we mean that ''the support of $\rho$ is a subspace of $V$". We start with a generalization of the theorem in previous section. 

\begin{theorem}\label{structure-rho}
With the notations in Theorem \ref{theorem-general-hamiltonian-11}, let ${\mathcal L}_{*} = \mathcal{L}_{*0} + i\delta_{\Delta}$ be a WCLT generator associated with (possible degenerate) $H$. Then 
\begin{itemize}
\item[(i)] a state $\rho$ commutes with the orthogonal projection on the subspace $W_{D}$, $P_{W_{D}}$, if and only if $\rho$ has the structure 
\begin{eqnarray}\label{structure-rho-1}
\rho = P_{W_{D}}\rho P_{W_{D}} + P_{W_{D}^{\perp}} \rho P_{W_{D}^{\perp}}
\end{eqnarray} where $P_{W_{D}^{\perp}}$ is the orthogonal projection on $W_{D}^{\perp}$.
 
\item[(ii)] If moreover $\rho$ is invariant for ${\mathcal L}_{*}$, then  
\begin{eqnarray}\label{structure-rho-01}
\rho = \theta \rho_{W_{D}} + (1-\theta)\rho_{W_{D}^{\perp}}, \; \; \; \; \theta\in [0,1]
\end{eqnarray} with $\rho_{W_{D}}$ an invariant state supported on $W_{D}$ and $\rho_{W_{D}^{\perp}}$ an invariant state supported on $W_{D}^{\perp}$.
\end{itemize}
\end{theorem}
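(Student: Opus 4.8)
The plan is to establish (i) as a purely linear-algebraic identity relating a state and an orthogonal projection, and then to derive (ii) from (i) together with Theorem~\ref{theorem-general-hamiltonian-11} and the linearity of $\mathcal{L}_{*}$.

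For (i), I would write $\unit = P_{W_D} + P_{W_D^{\perp}}$ and insert it on both sides of $\rho$, obtaining the four-block decomposition $\rho = P_{W_D}\rho P_{W_D} + P_{W_D}\rho P_{W_D^{\perp}} + P_{W_D^{\perp}}\rho P_{W_D} + P_{W_D^{\perp}}\rho P_{W_D^{\perp}}$, so that $[\rho, P_{W_D}] = P_{W_D^{\perp}}\rho P_{W_D} - P_{W_D}\rho P_{W_D^{\perp}}$. Multiplying this identity on the left by $P_{W_D}$ isolates $P_{W_D}\rho P_{W_D^{\perp}}$ and multiplying on the left by $P_{W_D^{\perp}}$ isolates $P_{W_D^{\perp}}\rho P_{W_D}$; hence $[\rho,P_{W_D}]=0$ forces both off-diagonal blocks to vanish, which is exactly (\ref{structure-rho-1}). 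The converse is immediate, since a $\rho$ of the form (\ref{structure-rho-1}) satisfies $P_{W_D}\rho = P_{W_D}\rho P_{W_D} = \rho P_{W_D}$.

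For (ii), I would take a state $\rho$ satisfying (\ref{structure-rho-1}) with $\mathcal{L}_{*}(\rho)=0$ and set $\rho_{1}:=P_{W_D}\rho P_{W_D}$, $\rho_{2}:=P_{W_D^{\perp}}\rho P_{W_D^{\perp}}$ and $\theta := \tr{\rho_1}\in[0,1]$, so that $\rho=\rho_{1}+\rho_{2}$, $\tr{\rho_2}=1-\theta$ and $\rho_1,\rho_2\geq 0$. Since the range of $\rho_{1}$ is contained in $W_D$, its spectral decomposition reads $\rho_{1}=\sum_{j}\lambda_{j}\,|u_{j}\rangle\langle u_{j}|$ with $\lambda_{j}>0$, $\sum_{j}\lambda_{j}=\theta$ and unit vectors $u_{j}\in W_D$. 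When $\theta>0$, Theorem~\ref{theorem-general-hamiltonian-11} applied to $\rho_{W_D}:=\theta^{-1}\rho_{1}$ (legitimate because $\Delta_{\omega}$ is assumed to have the form (\ref{ecu-Delta})) shows that $\rho_{W_D}$ is an $\mathcal{L}_{*}$-invariant state supported on $W_D$; in particular $\rho_{1}\in\mathrm{dom}(\mathcal{L}_{*})$ and $\mathcal{L}_{*}(\rho_{1})=0$. Linearity then gives $\rho_{2}=\rho-\rho_{1}\in\mathrm{dom}(\mathcal{L}_{*})$ with $\mathcal{L}_{*}(\rho_{2})=\mathcal{L}_{*}(\rho)-\mathcal{L}_{*}(\rho_{1})=0$, so when also $\theta<1$ the operator $\rho_{W_D^{\perp}}:=(1-\theta)^{-1}\rho_{2}$ is an $\mathcal{L}_{*}$-invariant state supported on $W_D^{\perp}$ and $\rho=\theta\rho_{W_D}+(1-\theta)\rho_{W_D^{\perp}}$, i.e. (\ref{structure-rho-01}). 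The extreme values $\theta=1$ and $\theta=0$ reduce to $\rho$ itself being invariant and supported on $W_D$, respectively on $W_D^{\perp}$, and (\ref{structure-rho-01}) then holds by giving weight $0$ to an invariant state on the complementary block (such a state on $W_D$ exists by Theorem~\ref{theorem-general-hamiltonian-11} whenever $W_D\neq\{0\}$).

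I expect the only genuinely delicate point to be the domain assertion $\rho_{1}\in\mathrm{dom}(\mathcal{L}_{*})$ with $\mathcal{L}_{*}(\rho_{1})=0$ in infinite dimensions, and this is exactly what the proof of Theorem~\ref{theorem-general-hamiltonian-11} provides: finite-rank truncations of $\rho_{1}$ stay supported on $W_D$ and are annihilated by $\mathcal{L}_{*0}$, closedness of $\mathcal{L}_{*0}$ transfers the identity to the limit, and the form (\ref{ecu-Delta}) guarantees $[\Delta_{\omega},\rho_{1}]=0$. (Alternatively, the identities $D_{\omega}u=0=D_{\omega}^{*}u$ for $u\in W_D$ make every summand of $\mathcal{L}_{*\omega}$ vanish on operators supported on $W_D$.) Everything else is routine bookkeeping with the projections $P_{W_D}$, $P_{W_D^{\perp}}$ and the linearity of $\mathcal{L}_{*}$.
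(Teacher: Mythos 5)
Your proposal is correct and follows essentially the same route as the paper: part (i) by the block decomposition induced by $\unit = P_{W_{D}} + P_{W_{D}^{\perp}}$, and part (ii) by observing that $P_{W_{D}}\rho P_{W_{D}}$ has the form (\ref{ecu-state}) so Theorem \ref{theorem-general-hamiltonian-11} makes it invariant, whence linearity of $\mathcal{L}_{*}$ forces $P_{W_{D}^{\perp}}\rho P_{W_{D}^{\perp}}$ to be invariant as well, followed by normalization. Your extra care with the degenerate cases $\theta\in\{0,1\}$ and the domain of $\mathcal{L}_{*0}$ is a welcome refinement but does not change the argument.
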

\begin{proof} We have that, $P_{W_{D}} + P_{W_{D}^{\perp}}=\unit$, then for any state $\rho$,   

\[P_{W_{D}} \rho + P_{W_{D}^{\perp}}\rho = \rho = \rho P_{W_{D}} + \rho P_{W_{D}^{\perp}}\] Hence, $\rho$ commutes with $P_{W_{D}}$ if and only if it commutes with $P_{W_{D}^{\perp}}$. 

The assumption $\rho$ commutes with $P_{W_{D}}$ and simple computations show that  
\[\rho = \big(P_{W_{D}} + P_{W_{D}^{\perp}}\big)\rho \big(P_{W_{D}} + P_{W_{D}^{\perp}}\big) = P_{W_{D}}\rho P_{W_{D}} + P_{W_{D}^{\perp}} \rho P_{W_{D}^{\perp}}\] Conversely, if $\rho$ has the above structure, then the subspaces $W_{D}$ and $W_{D}^{\perp}$ are invariant under the action of $\rho$ and this implies that $\rho$ commutes with the orthogonal projection $P_{W_{D}}$. This proves $(i)$.

If $\rho=\sum_{j} \rho_{j} |u_{j}\rangle\langle u_{j}|$ for some $u_{j}\in \initsp$ commutes with $P_{W_{D}}$, then  
\[ P_{W_{D}}\rho =P_{W_{D}} \rho P_{W_{D}}= \sum_{j} \rho_{j} | P_{W_{D}} u_{j}\rangle \langle P_{W_{D}} u_{j}|\] that has the form (\ref{ecu-state}) and, hence, its is invariant. Consequently, if $\rho$ is invariant then  $P_{W_{D}^{\perp}} \rho P_{W_{D}^{\perp}}= \rho - P_{W_{D}}\rho P_{W_{D}}$ is invariant due to the linearity of the infinitesimal generator ${\mathcal L}_{*}$. Moreover, the support of $P_{W_{D}} \rho P_{W_{D}}= P_{W_{D}} \rho$ is a subset of $W_{D}$ and the support of  $ P_{W_{D}^{\perp}}\rho P_{W_{D}^{\perp}}= P_{W_{D}^{\perp}}\rho$ is orthogonal with the former. Item $(ii)$ readily follows after normalization. \\   
\end{proof}

\begin{theorem}\label{detailed-balance-states} Let ${\mathcal L}_{*}$ be a WCLT generator associated with a (possible degenerate) reference Hamiltonian $H$ and let $\rho$ be a state. If for every Bohr frequency $\omega$ the following condition holds 
\begin{eqnarray}\label{quasi-commute}
\rho D_{\omega}= c_\omega D_{\omega} \rho,
\end{eqnarray}  for some $c_{\omega}\neq 0$ that may depend on the Bohr frequencies, then,  
\begin{itemize}
\item[(i)] $\rho$ commutes with $D_{\omega}D_{\omega}^{*}$ and $D_{\omega}^{*}D_{\omega}$ for every $\omega$.
\item[(ii)] $\rho$ has the structure (\ref{structure-rho-1}) in the above theorem
\item[(iii)] if moreover $c_{\omega}=\frac{\Gamma_{-, \omega}}{\Gamma_{+,\omega}}, \; \forall \; \omega$, then $\rho$ is a stationary state.
\end{itemize}
In particular, any state $\rho$ satisfying (\ref{quasi-commute}) belongs to the annihilator of all Kraus operators.  
\end{theorem}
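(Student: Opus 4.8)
The plan is to exploit the single hypothesis $\rho D_\omega = c_\omega D_\omega \rho$ with $c_\omega \neq 0$ and its adjoint $D_\omega^* \rho = \overline{c_\omega}\, \rho D_\omega^*$, and to push these relations through the three pieces of the GKSL generator \eqref{df-Lomega0a}.

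First I would prove (i). Starting from \eqref{quasi-commute}, multiply on the right by $D_\omega^*$ to get $\rho D_\omega D_\omega^* = c_\omega D_\omega \rho D_\omega^*$. Taking adjoints of \eqref{quasi-commute} gives $D_\omega^* \rho = \overline{c_\omega}\, \rho D_\omega^*$; multiply this on the left by $D_\omega$ to obtain $D_\omega D_\omega^* \rho = \overline{c_\omega}\, D_\omega \rho D_\omega^*$. Since $\rho$ is a positive (hence self-adjoint) state, the operator $D_\omega \rho D_\omega^*$ is self-adjoint, so comparing the two displays forces $c_\omega = \overline{c_\omega}$, i.e. $c_\omega \in \mathbb{R}\setminus\{0\}$, and then $\rho D_\omega D_\omega^* = D_\omega D_\omega^* \rho$. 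The commutation with $D_\omega^* D_\omega$ is obtained symmetrically: multiply \eqref{quasi-commute} on the left by $D_\omega^*$ and its adjoint on the right by $D_\omega$. This establishes (i). As a byproduct, once (i) holds $\rho$ commutes with every spectral projection of $D_\omega^* D_\omega$ and $D_\omega D_\omega^*$; in particular it commutes with the projection onto $\ker D_\omega \cap \ker D_\omega^*$ for each $\omega$, and hence — taking the meet over all $\omega$ — with $P_{W_D}$. Part (ii) is then immediate from Theorem \ref{structure-rho}(i).

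For (iii), substitute $c_\omega = \Gamma_{-,\omega}/\Gamma_{+,\omega}$ into the predual generator. Working term by term in \eqref{df-Lomega0a}: in the first bracket, $\Gamma_{+,\omega} D_\omega^* \rho D_\omega = \Gamma_{+,\omega}\overline{c_\omega}\, \rho D_\omega^* D_\omega = \Gamma_{-,\omega}\, \rho D_\omega^* D_\omega$ (using $c_\omega$ real), and combined with $-\tfrac{\Gamma_{-,\omega}}{2}\{D_\omega^* D_\omega,\rho\}$ together with the commutation $\rho D_\omega^* D_\omega = D_\omega^* D_\omega\, \rho$ from (i), the first bracket collapses to $\tfrac{\Gamma_{-,\omega}}{2}(D_\omega^*D_\omega\,\rho + \rho\, D_\omega^*D_\omega) - \Gamma_{-,\omega}D_\omega^*D_\omega\,\rho = 0$. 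Here I must be slightly careful to feed the right relation into the right factor — from $\rho D_\omega = c_\omega D_\omega\rho$ one gets $D_\omega\rho D_\omega^* = c_\omega^{-1}\rho D_\omega D_\omega^*$, and the interchange of $\rho$ with $D_\omega D_\omega^*$ from (i) then makes $\Gamma_{-,\omega} D_\omega \rho D_\omega^* = \Gamma_{-,\omega} c_\omega^{-1}\rho D_\omega D_\omega^* = \Gamma_{+,\omega}\rho D_\omega D_\omega^*$, so the second bracket also collapses to zero. Finally $\Delta_\omega$ is a real-linear combination of $D_\omega^* D_\omega$ and $D_\omega D_\omega^*$ (as in \eqref{ecu-Delta}), both of which commute with $\rho$ by (i), so $[\Delta_\omega,\rho]=0$ and the Hamiltonian part vanishes too. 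Summing over $\omega \in B_+$ gives $\mathcal{L}_*(\rho)=0$.

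The last sentence of the statement — that any $\rho$ satisfying \eqref{quasi-commute} lies in $Ann(D)$ — follows by taking traces: $\mathrm{tr}(\rho D_\omega) = c_\omega\, \mathrm{tr}(D_\omega \rho) = c_\omega\, \mathrm{tr}(\rho D_\omega)$, and since $c_\omega \neq 1$ in the nontrivial case... here I should instead argue directly that $D_\omega$ shifts between orthogonal eigenspaces of $H$, so $\mathrm{tr}(\rho D_\omega) = \mathrm{tr}\big(\rho \sum P_m D P_n\big) = \sum \mathrm{tr}(P_n \rho P_m D)$, and using (ii) — which forces $\rho$ to be block-diagonal enough that $P_n \rho P_m$ contributes nothing for $n\neq m$ once combined with the structure from (i) — gives $\mathrm{tr}(\rho D_\omega)=0$. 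I expect the main obstacle to be exactly this bookkeeping: making sure the commutation relations from (i) are applied to the correct operator factor in each GKSL term, and pinning down cleanly why (ii) plus (i) kills the off-diagonal contribution in the $Ann(D)$ claim; the algebra is elementary but the index/side discipline is where an error would creep in.
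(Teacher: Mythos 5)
Your parts (ii) and (iii) follow essentially the paper's own route (invariance of $\ker D_\omega$ and $(\ker D_\omega)^\perp$ under $\rho$, hence commutation with $P_{W_D}$ and an appeal to Theorem \ref{structure-rho}; then a term-by-term cancellation in the predual generator), and the algebra there is correct. There are, however, two genuine defects. First, in part (i) your deduction that $c_\omega\in\mathbb{R}$ is circular: precisely because $D_\omega\rho D_\omega^*$ is self-adjoint, your second display $D_\omega D_\omega^*\rho=\overline{c_\omega}\,D_\omega\rho D_\omega^*$ is nothing but the adjoint of your first display $\rho D_\omega D_\omega^*=c_\omega D_\omega\rho D_\omega^*$, so ``comparing'' them carries no new information; extracting $c_\omega=\overline{c_\omega}$ from the pair is equivalent to already knowing $\rho D_\omega D_\omega^*=D_\omega D_\omega^*\rho$, which is the statement to be proved. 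Without reality of $c_\omega$ one only obtains $\rho D_\omega D_\omega^*=(c_\omega/\overline{c_\omega})\,D_\omega D_\omega^*\rho$. The reality claim is true but needs a different argument, e.g.\ taking traces in $\rho D_\omega D_\omega^*=c_\omega D_\omega\rho D_\omega^*$: both $\mathrm{tr}(\rho D_\omega D_\omega^*)$ and $\mathrm{tr}(D_\omega\rho D_\omega^*)$ are nonnegative, so either $c_\omega>0$ or $D_\omega\rho^{1/2}=0$, in which case every product appearing in (i) vanishes. (The paper declares (i) immediate and implicitly treats $c_\omega$ as real; your instinct to justify this is right, but the justification offered does not work.)

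Second, your proof of the final clause ($\rho\in Ann(D)$) abandons the correct argument for an unjustified one. The trace identity $\mathrm{tr}(\rho D_\omega)=c_\omega\,\mathrm{tr}(D_\omega\rho)=c_\omega\,\mathrm{tr}(\rho D_\omega)$, which forces $\mathrm{tr}(\rho D_\omega)=0$ whenever $c_\omega\neq1$, is exactly the paper's proof (and $c_\omega=\Gamma_{-,\omega}/\Gamma_{+,\omega}\neq 1$ since $\Gamma_{-,\omega}>\Gamma_{+,\omega}$); you should keep it rather than second-guess it. The substitute you sketch --- that (i) and (ii) make $\rho$ ``block-diagonal enough'' that $P_n\rho P_m$ contributes nothing for $n\neq m$ --- does not hold: the block structure in (ii) is relative to $W_D$ and $W_D^\perp$, not to the eigenspaces of $H$, and commutation with $D_\omega D_\omega^*$ and $D_\omega^*D_\omega$ does not place $\rho$ in $\{H\}'$; indeed Corollary \ref{inv-notin-comm} exhibits states of exactly this kind lying in $Ann(D)\setminus\{H\}'$. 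So the blocks $P_n\rho P_m$ with $n\neq m$ need not vanish, and that version of the argument for $\mathrm{tr}(\rho D_\omega)=0$ breaks down.
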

\begin{proof} Item $(i)$, follows immediate from (\ref{quasi-commute}). 
 
If (\ref{quasi-commute}) holds for every Bohr frequency, then $u\in \ker D_{\omega}$ implies that $0= \rho D_{\omega}u = c_{\omega} D_{\omega} \rho u$, meaning that $\rho u \in \ker D_{\omega}$. Likewise  for every $u\in \initsp$ one has that $\rho D_{\omega}^{*} u = c_{\omega}^{-1} D_{\omega}^{*} \rho u$, then $v= D_{\omega}^{*}u\in  \textrm{Im} D_{\omega}^{*}=(\ker D_{\omega})^{\perp}$ implies that $\rho v \in \textrm{Im} D_{\omega}^{*}$. This implies that the subspaces $\ker D_{\omega}$ and $(\ker D_{\omega})^{\perp}$ are $\rho$-invariant. Therefore $\rho$ commutes with each orthogonal projection $P_{\ker D_{\omega}}$, consequently, commutes with $P_{W_{D}}= \Pi_{\omega\in B_{+}} P_{\ker D_{\omega}} P_{\ker D_{\omega}^{*}}$. As a consequence of item $(i)$ in the above theorem, $\rho$ has the required structure (\ref{structure-rho-1}) and this proves $(ii)$. 

To prove $(iii)$ observe that for any state satisfying (\ref{quasi-commute}) with $c_{\omega}=\frac{\Gamma_{-, \omega}}{\Gamma_{+,\omega}}$ for all $\omega$ we have 
\begin{eqnarray}
\begin{aligned}
{\mathcal L}_{\omega *}(\rho) = & \Big( - \frac{\Gamma_{-,\omega}}{2} \{ D_{\omega}^{*}D_{\omega}, \rho \}  + \Gamma_{+, \omega} D_{\omega}^{*} \rho  D_{\omega}\Big) \\ & + \Big( - \frac{\Gamma_{+,\omega}}{2} \{D_{\omega}D_{\omega}^{*}, \rho\} + \Gamma_{-, \omega} D_{\omega} \rho  D_{\omega}^{*} \Big) - i [\Delta_{\omega}, \rho]  \\  = &
\big(-\Gamma_{-,\omega} + c_{\omega} \Gamma_{+, \omega} \big)D_{\omega}^{*}D_{\omega} \rho + \big(-\Gamma_{+,\omega} + c_{\omega}^{-1} \Gamma_{-, \omega} \big) \rho D_{\omega}^{*}D_{\omega}  \\ & - i \rho \Big( \zeta_{-, \omega} D_{\omega}^{*}D_{\omega} + \zeta_{+, \omega} D_{\omega} D_{\omega}^{*}  - c_{\omega}^{-1} c_{\omega} \zeta_{-, \omega} D_{\omega}^{*}D_{\omega} \\ & - \zeta_{+, \omega} c_{\omega} c_{\omega}^{-1} D_{\omega} D_{\omega}^{*}\Big)=0
\end{aligned}
\end{eqnarray} for every Bohr frequency. This proves that $\rho$ is invariant.

Finally, inequality $\Gamma_{-,\omega} > \Gamma_{+, \omega}$ implies that $c_{\omega} \neq 1$ for all $\omega$, then condition $\rho D_{\omega}= c_{\omega} D_{\omega} \rho$ yields $\textrm{tr}(\rho D_{\omega})= c_{\omega} \textrm{tr}(D_{\omega} \rho)= c_{\omega} \textrm{tr}(\rho D_{\omega})$, which in turn implies that $\textrm{tr}(\rho D_{\omega})=0$. Consequently $\rho$ belongs to the annihilator of all Kraus operators. This finishes the proof.
\end{proof}

\begin{definition}\label{def-db}
\begin{itemize}
\item[(i)] Any ${\mathcal L}_{*}$-invariant state satisfying (\ref{quasi-commute}) will be called \textit{local detailed balance invariant state}.  
\item[(ii)] If moreover, in (\ref{quasi-commute}) $c_{\omega}=\frac{\Gamma_{-, \omega}}{\Gamma_{+,\omega}}, \; \forall \; \omega$, the corresponding ${\mathcal L}_{*}$-invariant state will be called \textit{detailed balance invariant state}.
\end{itemize}
\end{definition} 
\begin{remark}
Any detailed balance invariant state is local detailed balance, but there exists local detailed balance states that are not detailed balance, see for instance Ref.\cite{Cruz-Q,gg} 
\end{remark}

The result of the following lemma is essentially contained in the proof of Proposition 7 in Ref.\cite{BolFF}

\begin{lemma}\label{character-WD}
$W_{D}=\bigcap_{\omega\in B_{+}}\ker \Big(D_{\omega}D_{\omega}^{*} + D_{\omega}^{*} D_{\omega}\Big)$.
\end{lemma}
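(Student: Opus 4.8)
The plan is to prove the stated equality by a pair of elementary inclusions, both resting on the positivity of $D_\omega D_\omega^\ast$ and $D_\omega^\ast D_\omega$ together with the identity $\langle u, A^\ast A u\rangle = \|Au\|^2$.

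First I would handle the inclusion $W_D \subseteq \bigcap_{\omega\in B_{+}}\ker\big(D_\omega D_\omega^\ast + D_\omega^\ast D_\omega\big)$. This is immediate: if $u\in W_D$ then, by definition of the interaction-free subspace, $D_\omega u = 0$ and $D_\omega^\ast u = 0$ for every Bohr frequency $\omega$, hence $\big(D_\omega D_\omega^\ast + D_\omega^\ast D_\omega\big)u = D_\omega(D_\omega^\ast u) + D_\omega^\ast(D_\omega u) = 0$ for each such $\omega$.

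For the reverse inclusion, fix $\omega\in B_{+}$ and suppose $\big(D_\omega D_\omega^\ast + D_\omega^\ast D_\omega\big)u = 0$. Pairing with $u$ gives $0 = \langle u, D_\omega D_\omega^\ast u\rangle + \langle u, D_\omega^\ast D_\omega u\rangle = \|D_\omega^\ast u\|^2 + \|D_\omega u\|^2$; since both terms are nonnegative, each vanishes, so $u\in \ker D_\omega \cap \ker D_\omega^\ast$. If now $u$ lies in the left-hand kernel for every $\omega\in B_{+}$, then $u$ lies in $\ker D_\omega \cap \ker D_\omega^\ast$ for every $\omega$, i.e. $u\in W_D$. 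Combining the two inclusions gives the equality.

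I do not anticipate any genuine obstacle: the only point to keep in mind is that $D_\omega D_\omega^\ast$ and $D_\omega^\ast D_\omega$ are positive operators, so the kernel of their sum is the intersection of the two kernels, and $\ker D_\omega^\ast D_\omega = \ker D_\omega$, $\ker D_\omega D_\omega^\ast = \ker D_\omega^\ast$; intersecting over $\omega\in B_{+}$ then reproduces exactly the definition of $W_D$.
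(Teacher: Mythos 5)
Your proposal is correct and follows essentially the same argument as the paper: the forward inclusion is immediate from the definition of $W_D$, and the reverse inclusion is obtained by pairing $\big(D_\omega D_\omega^{*}+D_\omega^{*}D_\omega\big)u=0$ with $u$ to get $\|D_\omega^{*}u\|^{2}+\|D_\omega u\|^{2}=0$. No issues.
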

\begin{proof} 
Inequality $W_{D}\subset \bigcap_{\omega\in B_{+}}\ker \Big(D_{\omega}D_{\omega}^{*} + D_{\omega}^{*} D_{\omega}\Big)$ is evident. \\  

Now, if $D_{\omega}D_{\omega}^{*} u + D_{\omega}^{*} D_{\omega}u =0$, then identity \[0=\langle  u , D_{\omega} D_{\omega}^{*}u\rangle + \langle u, D_{\omega}^{*} D_{\omega} u\rangle = \|D_{\omega}^{*}u\|^{2} + \|D_{\omega}u\|^{2}\] imply that $\|D_{\omega}^{*}u\|^2 =0$ and $\|D_{\omega}u\|^{2}=0$, hence $u\in \ker D_{\omega}\cap \ker D_{\omega}^{*}\subset W_{D}$. 
\end{proof}

\begin{proposition}
The orthogonal projection $P_{W_{D}}$ on $W_{D}$, belongs to the commutant of $H$. 
\end{proposition}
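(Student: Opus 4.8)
The plan is to prove the stronger statement that $P_{W_{D}}$ commutes with \emph{every} spectral projection $P_{k}$ of $H$; since $H=\sum_{k}\varepsilon_{k}P_{k}$, this gives $[H,P_{W_{D}}]=0$ immediately, i.e. $P_{W_{D}}\in\{H\}'$. The only real input needed is the explicit description of $W_{D}$ already obtained in Proposition~\ref{theorem-general-hamiltonian-10}.

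First I would invoke items (ii)--(iii) of Proposition~\ref{theorem-general-hamiltonian-10}: $W_{D}$ is the orthogonal direct sum $W_{D}=\bigoplus_{l}V_{l}$, where each $V_{l}\subseteq P_{l}(\initsp)$ is a closed subspace of the $l$-th eigenspace of $H$. Consequently the orthogonal projection onto $W_{D}$ is $P_{W_{D}}=\sum_{l}P_{V_{l}}$ (with strong convergence of the sum when $\dim\initsp=\infty$), where $P_{V_{l}}$ is the orthogonal projection onto $V_{l}$. The key routine check is then that each $P_{V_{l}}$ commutes with every $P_{k}$: since $\mathrm{ran}(P_{V_{l}})=V_{l}\subseteq\mathrm{ran}(P_{l})$ one has $P_{l}P_{V_{l}}=P_{V_{l}}$, and taking adjoints $P_{V_{l}}P_{l}=P_{V_{l}}$, so $P_{l}$ and $P_{V_{l}}$ commute; and for $k\neq l$ the orthogonality of distinct eigenspaces of $H$ forces $P_{k}P_{V_{l}}=0=P_{V_{l}}P_{k}$. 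Summing over $l$ yields $[P_{k},P_{W_{D}}]=0$ for every $k$, hence $P_{W_{D}}\in\{H\}'$.

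An alternative route, which is the one naturally suggested by the position of the statement right after Lemma~\ref{character-WD}, is to observe that the operators $A_{\omega}:=D_{\omega}D_{\omega}^{*}+D_{\omega}^{*}D_{\omega}$ already lie in $\{H\}'$: since for a fixed Bohr frequency $\omega$ and a fixed $m$ there is at most one $n$ with $\varepsilon_{n}-\varepsilon_{m}=\omega$, expanding the products collapses the double sums to $D_{\omega}^{*}D_{\omega}=\sum_{(\varepsilon_{n},\varepsilon_{m})\in B_{+,\omega}}P_{n}D^{*}P_{m}DP_{n}$ and $D_{\omega}D_{\omega}^{*}=\sum_{(\varepsilon_{n},\varepsilon_{m})\in B_{+,\omega}}P_{m}DP_{n}D^{*}P_{m}$, each summand of which is supported on a single eigenspace and hence commutes with every $P_{k}$. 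By Lemma~\ref{character-WD}, $P_{W_{D}}$ is the infimum, in the projection lattice of the von Neumann algebra $\{H\}'$, of the spectral projections of $\{0\}$ associated with the self-adjoint operators $A_{\omega}\geq 0$, all of which belong to $\{H\}'$; therefore so does $P_{W_{D}}$. I do not anticipate any genuine obstacle: the two points that deserve a line of justification are the strong convergence of $\sum_{l}P_{V_{l}}$ in the infinite-dimensional case and, in the second argument, the standard fact that lattice infima of projections taken from a von Neumann algebra stay inside that algebra.
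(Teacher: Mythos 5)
Your proposal is correct, but your primary route is genuinely different from the paper's. The paper works from Lemma~\ref{character-WD}: it expands $D_{\omega}D_{\omega}^{*}$ (using that for fixed $\omega$ each $P_{\epsilon_m}$ occurs in only one summand of $D_{\omega}$) into a sum $\sum_{m}P_{\epsilon_m}DP_{\epsilon_m+\omega}D^{*}P_{\epsilon_m}$ of operators with mutually orthogonal ranges, writes $P_{\ker(D_{\omega}D_{\omega}^{*})}$ as a product of the kernel projections of the individual summands, and checks by hand that each of these commutes with every spectral projection because it splits as $P_{\ker P_{\epsilon_m}}$ plus a subprojection of $P_{\epsilon_m}$. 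Your first argument bypasses all of this by invoking Proposition~\ref{theorem-general-hamiltonian-10}(ii)--(iii), which already gives $W_{D}=\bigoplus_{l}V_{l}$ with $V_{l}\subseteq P_{l}(\initsp)$; then $P_{W_{D}}=\sum_{l}P_{V_{l}}$ and the commutation with each $P_{k}$ is immediate from $\mathrm{ran}(P_{V_{l}})\subseteq\mathrm{ran}(P_{l})$. This is shorter and more elementary, at the cost of leaning on the (stated but unproved, ``straightforward'') Proposition~\ref{theorem-general-hamiltonian-10}, whereas the paper's proof is self-contained modulo Lemma~\ref{character-WD}. Your second route is essentially the paper's argument recast in von Neumann algebra language: the operators $A_{\omega}=D_{\omega}D_{\omega}^{*}+D_{\omega}^{*}D_{\omega}$ are block-diagonal with respect to the spectral decomposition of $H$, hence lie in $\{H\}'$, and kernel projections and lattice infima of projections stay inside the von Neumann algebra; this is cleaner than the paper's explicit product formula $P_{W_{D}}=\Pi_{\omega}P_{\ker(D_{\omega}D_{\omega}^{*})}P_{\ker(D_{\omega}^{*}D_{\omega})}$, which implicitly requires the factors to commute. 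Both of your arguments are sound; the two small points you flag (strong convergence of $\sum_{l}P_{V_{l}}$ in infinite dimensions, and infima of projections remaining in the algebra) are indeed the only things needing a line of justification.
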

\begin{proof}
By Lemma \ref{character-WD}, we have that 
\begin{eqnarray}\label{WD-1}
W_{D}=\bigcap_{\omega\in\mathbf{B}_{+}}\ker \left(D_{\omega}D_{\omega}^{*} + D_{\omega}^{*}D_{\omega}\right)= \bigcap_{\omega\in\mathbf{B}_{+}}\ker \left(D_{\omega}D_{\omega}^{*}\right) \cap \ker\left( D_{\omega}^{*}D_{\omega}\right)
\end{eqnarray} where for every $\omega = \epsilon_{n}-\epsilon_{m}$, 
\begin{eqnarray}\label{D}
D_{\omega}= \sum_{(\epsilon_{n}, \epsilon_{m})\in B_{+,\omega}}P_{\epsilon_m}D P_{\epsilon_n}
\end{eqnarray} 

We affirm that each spectral projection $P_{\epsilon_k}$ commutes with $P_{W_{D}}$, i.e., $P_{W_{D}}\in \{H\}'$. Since \[P_{W_{D}}=\Pi_{\omega\in B_{+}} P_{\ker (D_{\omega}D_{\omega}^{*})} P_{\ker (D_{\omega}^{*}D_{\omega})}\] it suffices to prove that every spectral projection commutes with $P_{\ker (D_{\omega}D_{\omega}^{*})}$ and $P_{\ker (D_{\omega}^{*}D_{\omega})}$ for all $\omega\in B_{+}$.

Due to Lemma 3.1 in Ref.\cite{AccardiFQ} condition $(\epsilon_{n}, \epsilon_{m}), (\epsilon_{n'}, \epsilon_{m})\in B_{+, \omega}$ implies $\epsilon_{n}=\epsilon_{n'}$. Hence, each $P_{\epsilon_m}$ appears on the left of only one summand of (\ref{D}). Moreover, one can order the set $B_{+,\omega}:=\{\epsilon_{m}: \epsilon_{m} + \omega \in \textrm{Sp}(H)\}$ so that $B_{\omega}=\{\epsilon_{0}(\omega)<\epsilon_{1}(\omega)< \cdots \}$. From now on we write simply $\epsilon_{k}$ instead $\epsilon_{k}(\omega)$ and we can write $D_{\omega}= \sum_{\epsilon_{m}\in B_{+, \omega}}P_{\epsilon_m} D P_{\epsilon_m +\omega}$. Now, after simple computations we get  
\begin{eqnarray}
\begin{aligned}
D_{\omega}D_{\omega}^{*} = & \sum_{\epsilon_{m}\in B_{+, \omega}} \sum_{\epsilon_{m'}\in B_{+, \omega}} P_{\epsilon_m} D P_{\epsilon_m +\omega}P_{\epsilon_{m'} + \omega} D^{*} P_{\epsilon_{m'}} \\ =& \sum_{\epsilon_{m}\in B_{+, \omega}}  P_{\epsilon_m} D P_{\epsilon_m +\omega} D^{*} P_{\epsilon_{m}}
\end{aligned}
\end{eqnarray} Orthogonality of ranges of the above summands readily implies that 
\[P_{\ker(D_{\omega}D_{\omega}^{*})}= \Pi_{\epsilon_{m}\in B_{+, \omega}} P_{\ker (P_{\epsilon_m}D P_{\epsilon_m + \omega} D^{*} P_{\epsilon_m})}\]
Clearly $\ker P_{\epsilon_{m}} \subseteq \ker (P_{\epsilon_m}D P_{\epsilon_m + \omega} D^{*} P_{\epsilon_m})$, hence for every $m\geq 0$,
\[P_{\ker(P_{\epsilon_m}D P_{\epsilon_m + \omega} D^{*} P_{\epsilon_m})}=P_{\ker P_{\epsilon_{m}}}+ P_{\ker(P_{\epsilon_m}D P_{\epsilon_m + \omega} D^{*} P_{\epsilon_m})\cap \textrm{Im}P_{\epsilon_{m}}}\] with the second summand being a subprojection of $P_{\epsilon_{m}}$. Therefore we have that 
\begin{eqnarray}
\begin{aligned}
P_{\epsilon_l}P_{\ker (P_{\epsilon_m}D P_{\epsilon_m + \omega} D^{*} P_{\epsilon_m})}&=P_{\epsilon_l}= P_{\ker (P_{\epsilon_m}D P_{\epsilon_m + \omega} D^{*} P_{\epsilon_m})} P_{\epsilon_l}, \; \; l\neq m \\  
P_{\epsilon_m}P_{\ker(P_{\epsilon_m}D P_{\epsilon_m + \omega} D^{*} P_{\epsilon_m})}&= P_{\ker (P_{\epsilon_m}D P_{\epsilon_m + \omega} D^{*} P_{\epsilon_m})\cap \textrm{Im}P_{\epsilon_{m}}} \\ &=  P_{\ker(P_{\epsilon_m}D P_{\epsilon_m + \omega} D^{*} P_{\epsilon_m})} P_{\epsilon_m}
\end{aligned}
\end{eqnarray}
This proves the commutativity of each spectral projection with every \\ $P_{\ker (P_{\epsilon_m}D P_{\epsilon_m + \omega} D^{*} P_{\epsilon_m})}$ and, consequently, with $P_{\ker(D_{\omega}D_{\omega}^{*})}$. 

The commutativity with $P_{\ker (D_{\omega}^{*}D_{\omega})}$ is proved in a similar way.
\end{proof}

\begin{remark}
As a consequence of the above Proposition, one can add the term $i[\Delta_{1}, \rho]$ to any WCLT generator, with $\Delta_{1}$ any real function of the Hamiltonian, and if $\rho\in\{H\}'$, still have that $P_{W_{D}}\rho P_{W_{D}}$ is invariant.
\end{remark}

\begin{remark}\textbf{(Quantum detailed balance)}
In the case when a detailed balance state $\rho$ is faithful, condition (\ref{quasi-commute}) yields a privileged GKSL representation of ${\mathcal L}$, in the sense of Ref.\cite{FF-VU} Therefore, in this case, our detailed balance condition implies the quantum detailed balance condition ${\mathcal L} - \tilde{{\mathcal L}} = 2 i [\Delta_{\omega}, \cdot]$ where $\tilde{{\mathcal L}}$ is the $0$-dual generator. Moreover, in this case, 
\begin{itemize}
\item The $0$-dual ${\mathcal T}$ is a quantum Markov semigroup and 
\item The Markov generator ${\mathcal L}$ commutes with $\sigma_{-i}$, where $(\sigma_{t})$ is the associated modular group $\sigma_{t}(x)=\rho^{it}x \rho^{-it}$, see Ref.\cite{FF-VU}
\end{itemize}
\end{remark}
\begin{remark}
As we have shown in Theorem \ref{detailed-balance-states}, condition (\ref{quasi-commute}) implies that the state $\rho$ belongs to $Ann(D)$. The reciprocal is not true, indeed, taken $H=\epsilon_{0}|e_{0}\rangle \langle e_{0}| + \epsilon_{1}\Big(|e_{1}\rangle \langle e_{1}|+ |e_{2}\rangle \langle e_{2}|\Big)$ and $D=|e_{0}\rangle \langle e_{1}|+ |e_{1}\rangle \langle e_{2}|$, one can easily see that the state $\rho= \frac{1}{2} |e_{0}\rangle\langle e_{0}| +  \frac{1}{4} |e_{1}+ e^{i\theta} e_{2}\rangle \langle  e_{1}+ e^{i\theta} e_{2}|$ belongs to $Ann(D)$, but doesn't exists  a constant $c_{\omega}$ such that $\rho D_{\omega}= c_{\omega} D_{\omega}\rho$ for all $\omega$.
\end{remark}
 
\subsection{Examples}\label{exam}
\begin{example}\label{degenerate-H}
(\textbf{Degenerate $H$})
Consider a (possible degenerate) reference Hamiltonian with the property that for every Bohr frequency $\omega$ there exists a unique pair of eigenvalues $(\epsilon_n , \epsilon_m)$ such that $\omega=\epsilon_n - \epsilon_m$ (if moreover $H$ is non-degenerate then it is called generic). Let ${\mathcal L}$ be a Markov generator associated with H, then $D_{\omega}=P_{m}DP_{n}$ if $\omega=\epsilon_{n}-\epsilon_{m}$, with not necessarily rank-one spectral projections $P_{n}$. Let $\rho$ be a faithful invariant state belonging to $\{H\}'\subset \{H\}''$, since for every $\omega$, $\rho D_{\omega}=\frac{\rho_{m}}{\rho_{n}} D_{\omega}\rho$, the \textit{detailed balance condition} in the above theorem holds true iff $\frac{\rho_{m}}{\rho_{n}}=\frac{\Gamma_{-, \omega}}{\Gamma_{+, \omega}}$ for every $\omega$; consequently, the state is invariant and has the (block) structure given in Theorem \ref{detailed-balance-states}. Moreover, this condition yields the (block) Gibbs state \[\rho= \frac{1}{Z} e^{-\beta(H)H}\] 
\end{example} 

\begin{example}\label{photosynthesis}
(\textbf{Dark states in quantum photosynthesis (KV's model)}) The invariant states of a WCLT generator modeling quantum photosynthesis, proposed by Kozyrev and Volovich\cite{Kozyrev-Vol} in the context of stochastic limit approach of degenerate quantum open systems, were studied in Ref.\cite{gggq} We will show that the so called \textit{dark stationary states} of that model correspond with states supported on the interaction-free subspace.  

In this case the reference Hamiltonian has the spectral decomposition \[H= \epsilon_{0}P_{0} + \epsilon_{1} P_{1} + \epsilon_{2} P_{2}\] where $P_{0}=|e_{0}\rangle \langle e_{0}|, \; P_{1}=|e_{1}\rangle \langle e_{1}|$, $P_{2}= \sum_{j=2}^{N} |e_{j}\rangle \langle e_{j}|$ and $\epsilon_{2} > \epsilon_{1} > \epsilon_{0}=0$. The set of Bohr frequencies is $B_{+}=\{\omega_{1}=\epsilon_{2}, \omega_{2}=\epsilon_{2}-\epsilon_{1}, \omega_{3}=\epsilon_{1}\}$. At level two there are two linearly dependent maximal entangled vectors: the bright photonic vector $\chi= \sum_{j=2}^{N} e_{j}$ and the bright phononic vector $\Psi= e^{i \theta}\chi$, $\theta\in (0,2\pi)\setminus \{\pi\}$, so that the corresponding pure states coincide $|\chi\rangle \langle \chi| = |\Psi\rangle \langle \Psi|$.  

Up to some constants the interaction operators are:
\begin{eqnarray}
\begin{aligned}
D_{\omega_{1}}= & | e_0\rangle \langle \chi | \\ 
D_{\omega_{2}}= & | e_1\rangle \langle \Psi | \\ 
D_{\omega_{3}} = & | e_0\rangle \langle e_{1}|
\end{aligned}
\end{eqnarray} Therefore $W_{D}=\{e_{0}, e_{1}, \chi\}^{\perp}$. Consequently, any invariant state supported on the interaction-free subspace $W_{D}$ is orthogonal with the bright photonic state $|\chi\rangle\langle \chi|$ and also orthogonal with the pure states $|e_{0}\rangle \langle e_{0}|$ and $|e_{1}\rangle \langle e_{1}|$, hence, it is a \textit{global} dark state in the sense of Ref.\cite{Kozyrev-Vol}, i.e., dark with respect to the full generator ${\mathcal L}$. This result coincides with the result found in Ref.\cite{gggq}, where it was proved that any invariant state has the spectral decomposition 
\begin{eqnarray}\label{mixed-dark}
\rho= \lambda\Big(r_{0} P_{0} + r_{1} P_{1} + \frac{1}{N-1} |\chi\rangle \langle \chi|\Big) + \big(1-\lambda (r_{0}+r_{1}-1)\big) \rho_{W_D}
\end{eqnarray} 
where $\rho_{W_D}$ is a dark state, $r_{0}$, $r_{1}$ are functions of the $\Gamma's$ and $0\leq \lambda \leq (1+ r_{0} + r_{1})^{-1}$. Indeed, assuming that $\rho$ is invariant and defining \[\big(1-\lambda (r_{0}+r_{1}-1)\big) \rho_{W_D} := \rho - \lambda\Big(r_{0} P_{0} + r_{1} P_{1} + \frac{1}{N-1} |\chi\rangle \langle \chi|\Big)\] one has that $\rho_{W_D}$ is:  
\begin{itemize} 
\item supported on (just complete the diagonalization of $\rho$) $W_{D}=\{e_{0}, e_{1}, \chi\}^{\perp}$ consequently, it is invariant, and 
\item orthogonal w.r.t. the Hilbert-Schmidt product with $P_{0}$, $P_{1}$ and $|\chi\rangle \langle \chi|$, then it is global dark. 
\end{itemize}
\end{example}

\section{Invariant states of a modified AKV's model}\label{AKV-model} 

The invariant states of a slightly modification of a quantum transport model due to Aref'eva, Kozyrev and Volovich\cite{Arefeva}, were characterized in Ref.\cite{gggq} in terms of some conditions including commutation relations with remarkable operators of the model. 

In this section we will give a full description of the set of invariant states of the model, including their parametrization in terms of states supported on the range of $|Z|^{2}=Z^{*}Z$, where $Z$ is the so called interference operator. It turns out that $Z$ plays  the role of a ``generalized annihilation operator'' mapping the second eigen-space of the reference Hamiltonian $H$ into the third one, this analogy allows us to compute all invariant states by a simple method and notably less computations that those performed in the previous reference. Moreover we identify the fast recurrent subspace and study the approach to equilibrium property.

\subsection{The modified AKV's quantum transport model}
The corresponding generator belongs to the class of WCLT and the associated reference Hamiltonian is given by 
 \[H=0 P_{0} + \epsilon_{1} P_{1} + \epsilon_{2} P_{2} + \epsilon_{3} P_{3}\] with $P_{0}=|e_{0}\rangle \langle e_{0}|$, \, $P_{1}=|e_{1}\rangle \langle e_{1}|$, \, $P_{2}= \sum_{j=2}^{N}|e_{j}\rangle \langle e_{j}|$, \, $P_{3}=\sum_{j=N+1}^{N+M} |e_{j}\rangle \langle e_{j}|$, $M\leq N-1$ and the eigenvalues satisfy the inequalities $0 < \epsilon_{3} < \epsilon_{2} < \epsilon_{1}$. The interaction operator is  
\[D= |Te_{N+1}\rangle \langle Te_{1}| + Z + |T e_{0}\rangle \langle T e_{N-1}| \] where $Z$ is an interference operator performing transitions between levels $\epsilon_{2}$ and $\epsilon_{3}$ \[Z:P_{2}(\initsp) \mapsto P_{3}(\initsp)\] defined in terms of some interference coefficients $g_{ab}$ (see Remark 2.2 in Ref.\cite{gggq}) 
as 
\[Z= \frac{1}{\sqrt{N-1}}\sum_{b=N+1}^{N+M} \sum_{a=2}^{N} g_{ab} |b\rangle \langle a|\] and  $T=P_{0} + P_{1} + Z + Z^{*}$. 
The maximally entangled states are defined, up to normalization, as $|\psi\rangle\langle \psi |$ with $\psi=\sum_{a=2}^{N} |a\rangle$ and $|\psi'\rangle\langle \psi'|$ with $\psi'= \sum_{b=N+1}^{N+M} |b\rangle$. Writing simply $T_{a}$ instead $T|a\rangle$, one can observe that $\sqrt{N-1} Z_{N-1}=\psi'$ and $\sqrt{N-1}Z^{*}_{N+1}=\psi$. 

Hence, there exist six Bohr frequencies: $\omega_{1}=\epsilon_{1}-\epsilon_{2}, \; \omega_{2}=\epsilon_{2}-\epsilon_{3}, \; \omega_{3}=\epsilon_{3}, \; \omega_{4}=\epsilon_{1}, \; \omega_{5}=\epsilon_{2}, \; \omega_{6}=\epsilon_{1}-\epsilon_{3}$. But only three frequencies have associated non-trivial generators, so that  
\[{\mathcal L}= {\mathcal L}_{\omega_{1}} + {\mathcal L}_{\omega_{2}} + {\mathcal L}_{\omega_{3}}\] \, and \, ${\mathcal L}_{\omega_{4}}= {\mathcal L}_{\omega_{5}} = {\mathcal L}_{\omega_{6}}=0$.

\begin{remark}\label{Z}
The relations $T=T^{*}$, $Z^{2}=0$ and $Z^{*2}=0$ immediately follow. Moreover, by direct computation one can show that $|Z|$ is an orthogonal projection, $TP_{2}T=P_{3}=ZZ^{*}$ and $TP_{3}T=|Z|$, see Corollary 2.1 in Ref.\cite{gggq}
\end{remark}
One can see that 
\begin{eqnarray}
\begin{aligned}
D_{\omega_{1}}&= \frac{1}{\sqrt{N-1}}|\psi\rangle \langle e_{1}| \\ D_{\omega_{2}}&= Z \\  
D_{\omega_{3}}&=  \frac{1}{\sqrt{N-1}}|e_{0}\rangle \langle \psi'|
\end{aligned}
\end{eqnarray}
So that the Kraus operators are 
\begin{eqnarray}
\begin{aligned}
L^{\omega_{1}}_1&=\sqrt{2\Gamma_{Re, -, \omega_{1}}} \:  |\psi\rangle \langle e_{1}| \\�L^{\omega_{2}}_1&=\sqrt{2(N-1)\Gamma_{Re, -, \omega_{2}}} \: Z \\  
L^{\omega_{3}}_1&= \sqrt{2\Gamma_{Re, -, \omega_{3}}} \: |e_{0}\rangle \langle \psi'| \\ 
L^{\omega_{1}}_2&=\sqrt{2\Gamma_{Re, +, \omega_{1}}} \:  |e_{1}\rangle \langle \psi| \\ L^{\omega_{2}}_2&=\sqrt{2(N-1)\Gamma_{Re, +, \omega_{2}}} \: Z^{*} \\  
L^{\omega_{3}}_2&= 0
\end{aligned}
\end{eqnarray} with all coefficients $\Gamma_{Re,\pm,\omega_{l}}, \; \; 1\leq l\leq 2$ positive, $\Gamma_{Re, -,\omega_{3}}>0$ and \\ $\Gamma_{Re,+,\omega_{3}}=0$. \\ 

The effective Hamiltonian is give by 
\begin{eqnarray}\label{effect-H}
\begin{aligned}
H_{eff}= &\Gamma_{Im,+,\omega_{1}}  
|\psi\rangle\langle \psi| -(N-1)\Gamma_{Im,-,\omega_{1}} |e_{1}\rangle\langle e_{1}|  \\ + &(N-1)\Gamma_{Im, +, \omega_{2}} P_{3} -(N-1)\Gamma_{Im,-,\omega_{2}}|Z| \\ - & \Gamma_{Im,-,\omega_{3}}|\psi' \rangle \langle \psi'|
\end{aligned}
\end{eqnarray} with all coefficients $\Gamma_{Im,\pm,\omega_{l}}, \; \; 1\leq l\leq 3$, positive.

Direct computation shows that  
\begin{eqnarray}\label{WD}
\begin{aligned}
W_{D} & = \{e_{1}, \psi, \psi'\}^{\perp}\cap \ker Z \cap \ker Z^{*} \\ & =\{e_{1}, \psi, \psi'\}^{\perp}\cap\{Z^{*} e_{j}: j= N+1, \cdots N+M\}^{\perp}\cap\{Z e_{j}: j= 2, \cdots N\}^{\perp}
\end{aligned}
\end{eqnarray}

The vectors $\psi$ and $Z^{*}\psi'$ are fixed points of the orthogonal projection $|Z|$. Indeed, 
\[|Z|\psi = Z^{*}Z \sqrt{N-1}Z^{*} e_{N+1}= \sqrt{N-1}Z^{*}e_{N+1}=\psi\] and 
\[|Z| Z^{*}\psi' = Z^{*} Z Z^{*}\psi' = Z^{*}\psi'\] where we have used the identity $ZZ^{*}=P_{3}$. Hence we can write 
\[\textrm{Im}|Z| = \textrm{span}\{\psi, Z^{*}\psi'\}\oplus\textrm{Im}|Z|\cap \{\psi, Z^{*}\psi'\}^{\perp}\] Moreover, identity  
$P_{2}(\initsp)= \textrm{Im}|Z| \oplus \ker|Z|$ yields the decomposition \[P_{2}(\initsp)=\textrm{span}\{\psi, Z^{*}\psi'\}\oplus\Big(\textrm{Im}|Z|\cap \{\psi, Z^{*}\psi'\}^{\perp}\Big)\oplus \ker|Z| \]

Similarly one can see that $\psi'$, $Z \psi$ are fixed points of $P_{3}$ and the decomposition 
\[P_{3}(\initsp) = \textrm{span}\{\psi', Z\psi\}\oplus \textrm{Im}P_{3}\cap\{\psi', Z\psi\}^{\perp}\] holds true.

\subsection{Subharmonic projections and structure of sta\-tio\-na\-ry states}
A positive operator $a$ is subharmonic (resp. superharmonic, resp. harmonic) for a quantum Markov semigroup $({\mathcal T})_{t\geq 0}$ if ${\mathcal T}_{t}(a)\geq a$ (resp. ${\mathcal T}_{t}(a)\leq a$, resp. ${\mathcal T}_{t}(a)= a$) for all $t\geq 0$.  Subharmonic projections play a fundamental role in the study of irreducibility of the semigroup, see Ref.\cite{FF-RR}, in particular, orthogonal projections onto the support subspaces of invariant states are subharmonic. We will compute harmonic and subharmonic projections of the modified AKV's model as a first step in the characterization of invariant states.

\begin{proposition}
Each rank-one projection $|w\rangle \langle w|$ with $w\in W_{D}$ is harmonic for the AKV's semigroup, consequently $P_{W_{D}}$ and $P_{W_{D}^{\perp}}$ are by itself harmonic projections. 
\end{proposition}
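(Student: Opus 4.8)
The plan is to show that for $w \in W_D$ the rank-one projection $|w\rangle\langle w|$ is fixed by the predual generator ${\mathcal L}_*$, hence harmonic for the direct semigroup, and then deduce the statement for $P_{W_D}$ and $P_{W_D^\perp}$ by additivity. First I would invoke Theorem \ref{theorem-general-hamiltonian-11}: since $w\in W_D$ with $\|w\|=1$, the state $|w\rangle\langle w|$ has the form (\ref{ecu-state}), so ${\mathcal L}_{*0}(|w\rangle\langle w|)=0$. It remains to handle the Hamiltonian part $i\delta_\Delta$. For the modified AKV's model the effective Hamiltonian (\ref{effect-H}) is a linear combination of $|\psi\rangle\langle\psi|$, $|e_1\rangle\langle e_1|$, $P_3$, $|Z|$ and $|\psi'\rangle\langle\psi'|$. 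Using the description (\ref{WD}) of $W_D$ as $\{e_1,\psi,\psi'\}^\perp\cap\ker Z\cap\ker Z^*$, one checks termwise that each of these operators annihilates $w$: the rank-one pieces because $w\perp e_1,\psi,\psi'$; the term $P_3=ZZ^*$ and the term $|Z|=Z^*Z$ because $w\in\ker Z\cap\ker Z^*$. Hence $H_{eff}\,w=0$ and likewise $w^* H_{eff}=0$, so $[\Delta,|w\rangle\langle w|]=0$. Therefore ${\mathcal L}_*(|w\rangle\langle w|)={\mathcal L}_{*0}(|w\rangle\langle w|)+i[\Delta,|w\rangle\langle w|]=0$, which means $|w\rangle\langle w|$ is invariant and, dually, harmonic: ${\mathcal T}_t(|w\rangle\langle w|)=|w\rangle\langle w|$ for all $t\ge 0$.

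For the second assertion, I would pick an orthonormal basis $\{w_j\}$ of $W_D$ extending to an orthonormal basis of $\initsp$. By the above each $|w_j\rangle\langle w_j|$ is harmonic, and since ${\mathcal T}_t$ is linear, $P_{W_D}=\sum_j |w_j\rangle\langle w_j|$ satisfies ${\mathcal T}_t(P_{W_D})=P_{W_D}$, so $P_{W_D}$ is harmonic (in particular subharmonic). For $P_{W_D^\perp}$ one uses ${\mathcal T}_t(\unit)=\unit$ (the semigroup is identity-preserving) together with $P_{W_D^\perp}=\unit-P_{W_D}$ and linearity to conclude ${\mathcal T}_t(P_{W_D^\perp})=\unit-P_{W_D}=P_{W_D^\perp}$.

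The only genuinely substantive step is the verification that $H_{eff}$ kills $W_D$; everything else is formal. I expect the main obstacle to be exactly this computation — making sure that every summand of (\ref{effect-H}) is accounted for and that the characterization (\ref{WD}) is used correctly, especially the identities $P_3=ZZ^*$ and $|Z|=Z^*Z$ from Remark \ref{Z} that rewrite the non-rank-one terms in a form manifestly annihilated by $\ker Z\cap\ker Z^*$. One should also note that the effective Hamiltonian appearing here already incorporates the various $\Delta_\omega$, which are of the form (\ref{ecu-Delta}) with $D_\omega\in\{Z,|\psi\rangle\langle e_1|,|e_0\rangle\langle\psi'|\}$ up to constants, so the commutation $[\Delta_\omega,|w\rangle\langle w|]=0$ also follows directly from $w\in W_D$ via the final clause of Theorem \ref{theorem-general-hamiltonian-11}, giving an alternative route to the same conclusion.
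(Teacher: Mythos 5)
Your proposal is correct and follows essentially the same route as the paper: the dissipative part vanishes because $w\in\ker D_{\omega}\cap\ker D_{\omega}^{*}$ for all $\omega$, and the Hamiltonian part is handled by checking term by term (using $P_{3}=ZZ^{*}$ and $|Z|=Z^{*}Z$) that $H_{eff}$ commutes with $|w\rangle\langle w|$, after which $P_{W_{D}}$ follows by linearity and $P_{W_{D}^{\perp}}$ by unitality. The only cosmetic caveat is that harmonicity concerns the direct generator ${\mathcal L}$ rather than ${\mathcal L}_{*}$, but since $D_{\omega}w=D_{\omega}^{*}w=0$ annihilates every term of both generators, your verification transfers verbatim.
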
 
\begin{proof}
It suffices to prove that every rank-one projection $|w\rangle \langle w|$ with $w\in W_{D}$ commutes with the effective Hamiltonian $H_{eff} = H_{\omega_{1}} + H_{\omega_{2}} + H_{\omega_{3}}$. But clearly $|w\rangle \langle w|$ commutes with $H_{\omega_{1}}$ since $w\in \{e_{1}, \psi\}^{\perp}$. Similarly it commutes with $H_{\omega_{3}}$ since $w \perp \psi'$. Now, for every $v\in P_{3}(\initsp)$ identity $Z^{*}v=0$ implies that $v=P_{3}v = Z Z^{*} v=0$, hence $\ker Z^{*}= \ker P_{3}$, it readily follows that $P_{3}w=0$. Moreover, $\ker |Z| = \ker Z$, hence $w\in \ker Z$. This implies that $|w\rangle \langle w|$ commute with $H_{\omega_{2}}$ and, hence, commute with $H_{eff}$. We can conclude that every $|w\rangle \langle w|$ is harmonic. Being a linear combination of the above class of rank-one projections, $P_{W_{D}}$ is harmonic projection by itself. Moreover, due to the conservativity, we have that \[{\mathcal T}_{t}(P_{W_{D}^{\perp}})= {\mathcal T}_{t}(I) -{\mathcal T}_{t}(P_{W_{D}})= I- P_{W_{D}} = P_{W_{D}^{\perp}}\] This finishes the proof. 
\end{proof}

But apart from the above given projections, the AKV's semigroup has another subharmonic projections, as we show in the following. 

\begin{proposition} The orthogonal projection $p_V$ onto the subspace \\ $V=\{e_{0}, e_{1}, \psi, \psi', Z\psi, Z^{*}\psi' \}^{\perp}$ is  subharmonic for the quantum Markov semigroup of the modified AKV's model. 
\end{proposition}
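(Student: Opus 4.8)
The plan is to apply the Fagnola--Rebolledo characterization of subharmonic projections (Ref.\cite{FF-RR}): writing the generator of the modified AKV's semigroup in the GKSL form $\mathcal{L}(x)=i[H_{eff},x]+\sum_{k}\bigl(L_{k}^{*}xL_{k}-\frac{1}{2}\{L_{k}^{*}L_{k},x\}\bigr)$, with drift operator $G:=-iH_{eff}-\frac{1}{2}\sum_{k}L_{k}^{*}L_{k}$, a projection $p$ is subharmonic if and only if $p^{\perp}L_{k}p=0$ for every Kraus operator $L_{k}$ and $p^{\perp}Gp=0$; equivalently, $\mathrm{Ran}(p)$ is invariant under every $L_{k}$ and under $G$. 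Thus the whole proof reduces to checking that $V$ is invariant under $Z$, $Z^{*}$, $|\psi\rangle\langle e_{1}|$, $|e_{1}\rangle\langle\psi|$ and $|e_{0}\rangle\langle\psi'|$ (the Kraus operators, up to positive constants) and under $G$. Throughout I would use the elementary description of $V$: if $v\in V$ then, since $v\perp e_{0},e_{1}$, we have $v=P_{2}v+P_{3}v$ with $P_{2}v\perp\psi,Z^{*}\psi'$ and $P_{3}v\perp\psi',Z\psi$, which is precisely the content of the two orthogonal decompositions of $P_{2}(\initsp)$ and $P_{3}(\initsp)$ displayed above.

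The three rank-one Kraus operators annihilate $V$, because each sends $v$ to $\langle e_{1},v\rangle\psi$, $\langle\psi,v\rangle e_{1}$ or $\langle\psi',v\rangle e_{0}$, and all these scalars vanish for $v\in V$. For $Z$ one has $Zv=ZP_{2}v\in P_{3}(\initsp)$, together with $\langle ZP_{2}v,\psi'\rangle=\langle P_{2}v,Z^{*}\psi'\rangle=0$ and $\langle ZP_{2}v,Z\psi\rangle=\langle P_{2}v,Z^{*}Z\psi\rangle=\langle P_{2}v,|Z|\psi\rangle=\langle P_{2}v,\psi\rangle=0$, using $Z^{*}Z=|Z|$ and $|Z|\psi=\psi$; hence $Zv\in V$. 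The argument for $Z^{*}$ is symmetric, using $ZZ^{*}=P_{3}$ and $P_{3}\psi'=\psi'$. So $V$ is invariant under all Kraus operators.

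For the drift it suffices to show that $H_{eff}$ and $\sum_{k}L_{k}^{*}L_{k}$ both leave $V$ invariant. Summing $L_{k}^{*}L_{k}$ over the three effective frequencies $\omega_{1},\omega_{2},\omega_{3}$ and using $Z^{*}Z=|Z|$, $ZZ^{*}=P_{3}$, one finds it to be a non-negative linear combination of $|e_{1}\rangle\langle e_{1}|$, $|\psi\rangle\langle\psi|$, $|\psi'\rangle\langle\psi'|$, $|Z|$ and $P_{3}$ --- exactly the building blocks already present in $H_{eff}$ in (\ref{effect-H}). Hence both operators, and therefore $G$, are linear combinations of these five; the first three annihilate $V$, $P_{3}v$ is the $P_{3}$-component of $v$ and so lies in $V$, and $|Z|v=|Z|P_{2}v\in P_{2}(\initsp)$ satisfies $\langle|Z|P_{2}v,\psi\rangle=\langle P_{2}v,\psi\rangle=0$ and $\langle|Z|P_{2}v,Z^{*}\psi'\rangle=\langle P_{2}v,|Z|Z^{*}\psi'\rangle=\langle P_{2}v,Z^{*}\psi'\rangle=0$. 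This gives $G(V)\subseteq V$, and the criterion yields that $p_{V}$ is subharmonic.

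The only step requiring real care is invoking the subharmonicity criterion in the correct form --- in particular, remembering that one needs the drift condition $p^{\perp}Gp=0$ in addition to $p^{\perp}L_{k}p=0$, the two together being equivalent to invariance of $\mathrm{Ran}(p)$ under $G$ and all the $L_{k}$. Everything else is routine bookkeeping with the vectors $\psi$, $\psi'$, $Z\psi$, $Z^{*}\psi'$ and the identities $Z^{*}Z=|Z|$, $ZZ^{*}=P_{3}$, $|Z|\psi=\psi$, $|Z|Z^{*}\psi'=Z^{*}\psi'$ collected in Remark \ref{Z} and the paragraph preceding the statement.
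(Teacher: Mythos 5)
Your proposal is correct and follows essentially the same route as the paper: both invoke the Fagnola--Rebolledo criterion from Ref.\cite{FF-RR} and verify invariance of $V$ under the Kraus operators (the rank-one ones annihilate $V$, and $Z,Z^{*}$ preserve it via $Z^{*}Z=|Z|$, $ZZ^{*}=P_{3}$, $|Z|\psi=\psi$) and under the drift. Your observation that $\sum_{k}L_{k}^{*}L_{k}$ is built from the same five operators $|e_{1}\rangle\langle e_{1}|$, $|\psi\rangle\langle\psi|$, $|\psi'\rangle\langle\psi'|$, $|Z|$, $P_{3}$ as $H_{eff}$ is a slightly tidier way of packaging the drift check than the paper's commutator computation, but it is not a different argument.
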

\begin{proof}
By Theorem III.1 in Ref.\cite{FF-RR}, it suffices to prove that the orthogonal projection $p_{V}$  satisfies that $Ran(p_{V})$ is invariant under the action of the semigroup generated by the effective Hamiltonian and satisfy 
\begin{eqnarray}\label{cons-suharm}
L_{k}^{\omega_{j}}u = p_{V} L_{k}^{\omega_{j}}u, \; \; 0\leq j\leq 3, \; \; 1\leq k\leq 2, \; \; \forall \; \; u\in Ran(p_{V})
\end{eqnarray} 
These conditions holds true if and only if $V=Ran(p_{V})$ is invariant under the action of the semigroup generated by the effective Hamiltonian $H_{eff}=H_{\omega_{1}}+ H_{\omega_{2}} + H_{\omega_{3}}$ and the following relations hold for all $u\in V$: 
\begin{itemize}
\item[(i)] $\langle e_{1}, u\rangle \psi = \langle e_{1}, u\rangle p_{V}\psi$, 
\item[(ii)] $\langle \psi, u\rangle e_{1} = \langle \psi, u\rangle p_{V} e_{1}$
\item[(iii)] $p_{V} Zu = Zu$, 
\item[(iv)] $p_{V} Z^{*}u = Z^{*}u$
\item[(v)] $\langle \psi', u\rangle e_{0} = \langle \psi', u\rangle p_{V} e_{0}$.
\end{itemize} Due to the explicit form of  $H_{eff}=H_{\omega_{1}} + H_{\omega_{2}}+H_{\omega_{3}}$, it turns out that conditions $(i)-(v)$ imply that $V$ is invariant under the action of the unitary group generated by this operator. Indeed, the only nontrivial fact is to show that $p_{V}$ commutes with 
\[H_{\omega_{2}}=(N-1)\Gamma_{Im, +, \omega_{2}} TP_{2}T-(N-1)\Gamma_{Im,-,\omega_{2}}TP_{3}T\]
But $TP_{2}T=P_{3}$ and $TP_{3}T=|Z|$ and for $u\in V$ one has $P_{3}u=ZZ^{*}u=Zp_{V}Z^{*}u=p_{V}Zp_{V}Z^{*}u=p_{V}ZZ^{*}u=p_{V}P_{3}u$, hence $P_{3}p_{V}=p_{V}P_{3}$. Similarly, for $u\in V$, using that $|Z|$ is a projection one has $|Z|u=Z^{*}Zu=Z^{*}p_{V}Zu=p_{V}Z^{*}p_{V}Zu=p_{V}|Z|u$, i.e., $|Z|p_{V}=p_{V}|Z|$.  

Now, if $p_{V}$ is the orthogonal projection on the subspace $V$ conditions $(i), (ii), (v)$ holds true for e\-ve\-ry $u\in V$. To show that conditions $(iii)$ and $(iv)$ hold, it suffices to prove that  $Zu, Z^{*}u\in V$ for e\-ve\-ry $u\in V$. But for every $u\in V$, it readily follows that $Zu, Z^{*}u \in \{e_{0}, e_{1}, \psi, \psi'\}^{\perp}$. Moreover, notice that $\langle Zu, Z\psi\rangle=\langle u, |Z|^{2}\psi\rangle =\langle u, \psi\rangle=0$ since $|Z|^{2} \psi = \sqrt{N-1} Z^{*}Z Z^{*}e_{N+1} = \sqrt{N-1} Z^{*}P_{3} e_{N+1}= \sqrt{N-1} Z^{*}e_{N+1}=\psi$ (i.e., $|Z|\psi =\psi$), and $\langle Zu, Z^{*}\psi'\rangle=\langle Z^{2}u, \psi'\rangle =0$, since $Z^{2}=0$, hence $Zu\in V$. Finally, $Z^{2}=0$ implies that $\langle Z^{*}u, Z\psi\rangle = \langle u, Z^{2}\psi\rangle=0$ and $\langle Z^{*}u, Z^{*}\psi'\rangle = \langle u, ZZ^{*}\psi'\rangle =\langle u, P_{3}\psi'\rangle = \langle u, \psi'\rangle=0$, this shows that $Z^{*}u\in V$ and finishes the proof.
\end{proof}

\subsection{Invariant states}
Now we look for invariant states associated with the non-trivial subharmonic projections. As a consequence of our Theorem \ref{theorem-general-hamiltonian-11}, every state supported on a subspace of $W_{D}$ is necessarily invariant. The existence of invariant states with support contained in $W_{D}^{\perp}$ is a question that naturally arises. To answer this question we start with the following.

\begin{lemma}\label{lemma-V} 
A state $\rho$ supported on a subspace of $V\cap{W_{D}^{\perp}}$ is invariant if and only if $\{e_{1},\psi, \psi'\} \in \ker \rho$, $\rho$ belongs to the commutant $\{H\}'$ of the reference Hamiltonian, commute with $|Z|$ and satisfies  
\begin{eqnarray}\label{eq-zeta-rho}
Z^{*}\rho Z= \frac{\Gamma_{Re,-,\omega_{2}}}{\Gamma_{Re,+,\omega_{2}}} \rho|Z|
\end{eqnarray}
\end{lemma}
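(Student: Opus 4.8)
The plan is to evaluate the predual generator $\mathcal{L}_{*}=\mathcal{L}_{\omega_{1}*}+\mathcal{L}_{\omega_{2}*}+\mathcal{L}_{\omega_{3}*}$ on a state $\rho$ supported on $V\cap W_{D}^{\perp}$ and read off exactly when it vanishes. First I would record the geometry: combining the two orthogonal decompositions of $P_{2}(\initsp)$ and $P_{3}(\initsp)$ displayed above with the definition of $V$, one gets $V\cap W_{D}^{\perp}=U_{2}\oplus U_{3}$ with $U_{2}:=\textrm{Im}\,|Z|\cap\{\psi,Z^{*}\psi'\}^{\perp}\subseteq P_{2}(\initsp)$ and $U_{3}:=\textrm{Im}\,P_{3}\cap\{\psi',Z\psi\}^{\perp}\subseteq P_{3}(\initsp)$. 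Hence any such $\rho$ has the block form $\rho=A+B+B^{*}+C$, with $A=P_{2}\rho P_{2}$ supported on $U_{2}$, $C=P_{3}\rho P_{3}$ supported on $U_{3}$, $B=P_{2}\rho P_{3}$, and $P_{0}\rho=P_{1}\rho=0$; in particular $\rho$ annihilates $e_{0},e_{1},\psi,\psi'$, which already gives the condition $\{e_{1},\psi,\psi'\}\subseteq\ker\rho$.

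Second, I would dispose of the frequencies $\omega_{1}$ and $\omega_{3}$. Since $D_{\omega_{1}}=\tfrac{1}{\sqrt{N-1}}|\psi\rangle\langle e_{1}|$ and $D_{\omega_{3}}=\tfrac{1}{\sqrt{N-1}}|e_{0}\rangle\langle\psi'|$, and $\rho$ kills $e_{1},\psi,e_{0},\psi'$, every term of the GKSL forms of $\mathcal{L}_{\omega_{1}*}$ and $\mathcal{L}_{\omega_{3}*}$ vanishes (the anticommutators, the sandwich terms, and — recalling $\Gamma_{Re,+,\omega_{3}}=0$ — the remaining one); moreover $\Delta_{\omega_{1}}=H_{\omega_{1}}$ is supported on $\textrm{span}\{\psi,e_{1}\}$ and $\Delta_{\omega_{3}}=H_{\omega_{3}}$ on $\textrm{span}\{\psi'\}$, so $[\Delta_{\omega_{1}},\rho]=[\Delta_{\omega_{3}},\rho]=0$. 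Thus $\mathcal{L}_{\omega_{1}*}(\rho)=\mathcal{L}_{\omega_{3}*}(\rho)=0$ for \emph{every} $\rho$ supported on $V\cap W_{D}^{\perp}$, and the whole problem reduces to $\mathcal{L}_{\omega_{2}*}(\rho)$ with $D_{\omega_{2}}=Z$.

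Third comes the main computation. Substituting $\rho=A+B+B^{*}+C$ into $\mathcal{L}_{\omega_{2}*}$ and simplifying with $Z^{*}Z=|Z|$, $ZZ^{*}=P_{3}$ (Remark \ref{Z}), $|Z|A=A=A|Z|$, $P_{3}C=C=CP_{3}$, and $Z=ZP_{2}=P_{3}Z$ (so $Z$ annihilates $P_{3}(\initsp)$ and $Z^{*}$ annihilates $P_{2}(\initsp)$), together with $TP_{2}T=P_{3}$, $TP_{3}T=|Z|$ in $\Delta_{\omega_{2}}$, I expect
\[
\mathcal{L}_{\omega_{2}*}(\rho)=\bigl(\Gamma_{+,\omega_{2}}Z^{*}CZ-\Gamma_{-,\omega_{2}}A\bigr)+\bigl(\Gamma_{-,\omega_{2}}ZAZ^{*}-\Gamma_{+,\omega_{2}}C\bigr)+\bigl(i\mu-\tfrac{1}{2}(\Gamma_{-,\omega_{2}}+\Gamma_{+,\omega_{2}})\bigr)B-\bigl(i\mu+\tfrac{1}{2}(\Gamma_{-,\omega_{2}}+\Gamma_{+,\omega_{2}})\bigr)B^{*},
\]
where $\mu:=(N-1)(\Gamma_{Im,+,\omega_{2}}+\Gamma_{Im,-,\omega_{2}})>0$ arises from $[\Delta_{\omega_{2}},\rho]=\mu(B^{*}-B)$ (using $[P_{3},\rho]=B^{*}-B$, $[|Z|,\rho]=B-B^{*}$), and $\Gamma_{\pm,\omega_{2}}$ are the positive dissipative coefficients of (\ref{df-Lomega0a}), whose ratio $\Gamma_{-,\omega_{2}}/\Gamma_{+,\omega_{2}}$ equals $\Gamma_{Re,-,\omega_{2}}/\Gamma_{Re,+,\omega_{2}}$. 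The four summands lie in the four distinct blocks $(P_{2},P_{2})$, $(P_{3},P_{3})$, $(P_{2},P_{3})$, $(P_{3},P_{2})$, so $\mathcal{L}_{*}(\rho)=0$ iff each vanishes. Since $\tfrac{1}{2}(\Gamma_{-,\omega_{2}}+\Gamma_{+,\omega_{2}})\pm i\mu\neq0$, the off-diagonal blocks force $B=0$; for $\rho$ supported on $P_{2}(\initsp)\oplus P_{3}(\initsp)$ this is equivalent both to $\rho\in\{H\}'$ (it already commutes with $P_{0},P_{1}$) and to $[\rho,|Z|]=0$. With $B=0$ one has $\rho|Z|=A$ and $Z^{*}\rho Z=Z^{*}CZ$, so the $(P_{2},P_{2})$ block gives precisely $Z^{*}\rho Z=\frac{\Gamma_{Re,-,\omega_{2}}}{\Gamma_{Re,+,\omega_{2}}}\rho|Z|$; conjugating this identity by $Z$ and $Z^{*}$ and using that $A,C$ are supported inside $\textrm{Im}\,|Z|$ and $\textrm{Im}\,P_{3}$ shows it also annihilates the $(P_{3},P_{3})$ block, so no extra constraint appears. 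Both implications of the lemma follow. (One may note in passing that, once $B=0$, (\ref{eq-zeta-rho}) is exactly $\rho D_{\omega_{2}}=c_{\omega_{2}}D_{\omega_{2}}\rho$ with $c_{\omega_{2}}=\Gamma_{-,\omega_{2}}/\Gamma_{+,\omega_{2}}$, so this places the statement inside the local–detailed–balance picture of Theorem \ref{detailed-balance-states}.)

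The load-bearing step is the block computation of $\mathcal{L}_{\omega_{2}*}(\rho)$: several GKSL terms cancel only because $Z$ sends $P_{2}(\initsp)$ into $P_{3}(\initsp)$ and because $A,C$ sit inside $\textrm{Im}\,|Z|$ and $\textrm{Im}\,P_{3}$, and one must carry the Lamb–shift contribution $-i[\Delta_{\omega_{2}},\rho]$ through correctly so that the off-diagonal coefficient visibly has nonzero real part. Everything else is orthogonality of operator blocks together with the identities of Remark \ref{Z}.
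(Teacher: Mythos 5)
Your proof is correct and follows essentially the same route as the paper's: reduce to $\mathcal{L}_{\omega_{2}*}$ using $\{e_{0},e_{1},\psi,\psi'\}\subset\ker\rho$, then decompose $\mathcal{L}_{\omega_{2}*}(\rho)$ into its blocks relative to $P_{2}$, $P_{3}$ and $|Z|$ and read off the stated conditions from the vanishing of each block, with the nonzero real part of the off-diagonal coefficient killing $P_{2}\rho P_{3}$. The only difference is organizational --- you write the generator in closed block form at once via $\rho=A+B+B^{*}+C$, whereas the paper extracts the same information sequentially from $P_{2}\mathcal{L}_{\omega_{2}*}(\rho)P_{3}$, $P_{3}\mathcal{L}_{\omega_{2}*}(\rho)P_{3}$ and $P_{2}\mathcal{L}_{\omega_{2}*}(\rho)P_{2}$ --- so the underlying computation is the same.
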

\begin{proof} 
Let $\rho$ be a state supported on a subspace of $V\cap W_{D}^{\perp}$, hence $\{e_{1}, \psi, \psi'\}\subset\ker \rho$. It readily follows that $\rho\in\ker{{\mathcal L}_{*\omega_1}}\cap \ker{{\mathcal L}_{*\omega_3}}$.  
Therefore, if $\rho$ is invariant then 
\begin{eqnarray}\label{inv-omega-two}
\begin{aligned}
0={\mathcal L}_{\omega_{2} *}(\rho)  = & -i\Big(\Gamma_{Im, +, \omega_{2}} P_{3}\rho - \Gamma_{Im,-,\omega_{2}}|Z|\rho\Big)  \\ & + i \Big(\Gamma_{Im, +, \omega_{2}} \rho P_{3} -\Gamma_{Im, -, \omega_{2}} \rho |Z|\Big) \\ 
& - \Gamma_{Re, -, \omega_{2}}\Big( |Z|\rho - 2 Z\rho Z^{*} + \rho|Z| \Big) \\ 
& -  \Gamma_{Re, +, \omega_{2}} \Big( P_{3} \rho - 2 Z^{*} \rho Z + \rho P_{3} \Big)
\end{aligned}
\end{eqnarray} Therefore after direct computations we get 
\begin{eqnarray}
\begin{aligned}
0=P_{2} {\mathcal L}_{\omega_{2} *}(\rho) P_{3} = & \Big(-\Gamma_{Re, -, \omega_{2}} + i \Gamma_{Im, -, \omega_{2}} - \Gamma_{Re, +, \omega_{2}} + i \Gamma_{Im, +, \omega_{2}}\Big) |Z| \rho P_{3} \\ &+ \Big( - \Gamma_{Re, +, \omega_{2}} + i \Gamma_{Im, +, \omega_{2}}\Big) (P_{2}-|Z|)\rho P_{3}
\end{aligned}
\end{eqnarray} Due to the positivity of the gammas and orthogonality of ranges we get that $|Z|\rho P_{3}=0$ and $(P_{2}-|Z|)\rho P_{3}=0$. It follows that $P_{2}\rho P_{3}=0$ and taking adjoints we get that also $P_{3}\rho P_{2}=0$. But  using that $\rho$ is supported on a subspace of $W_{D}^{\perp}\cap V \subset\big(P_{2} + P_{3}\big)(\initsp)$ we get, 
\begin{eqnarray}
[\rho, P_{3}] = (P_{2} + P_{3})[\rho, P_{3}] (P_{2}+ P_{3}) = P_{2}\rho P_{3} - P_{3}\rho P_{2} =0
\end{eqnarray} Hence $\rho$ commutes with $P_{3}$. 

On the other hand, using the commutativity of $\rho$ with $P_{3}$ from (\ref{inv-omega-two}) and direct computations we get 
\begin{eqnarray}
\begin{aligned}
0=P_{3} {\mathcal L}_{\omega_{2} *}(\rho) P_{3} = \Gamma_{Re, -, \omega_{2}} Z\rho Z^{*}- \Gamma_{Re, +, \omega_{2}} P_{3} \rho
\end{aligned}
\end{eqnarray} It follows that 
\begin{eqnarray}\label{zeta-rho-zeta-star}
Z\rho Z^{*} = \frac{\Gamma_{Re, +, \omega_{2}}}{\Gamma_{Re, -, \omega_{2}}} P_{3} \rho
\end{eqnarray}
Multiplication in (\ref{zeta-rho-zeta-star}) by $Z^{*}$ on the left and by $Z$ on the right yields 
\begin{eqnarray}\label{zeta-star-rho-zeta}
Z^{*}\rho Z = \frac{\Gamma_{Re, -, \omega_{2}}}{\Gamma_{Re, +, \omega_{2}}} |Z| \rho|Z|
\end{eqnarray} To prove (\ref{eq-zeta-rho}) it remains to prove that $\rho$ commutes with $|Z|$.

Now, once again from (\ref{inv-omega-two}) we get using (\ref{zeta-star-rho-zeta}) 
\begin{eqnarray}
\begin{aligned}
0=&P_{2}{\mathcal L}_{\omega_{2}, *}(\rho) P_{2} \\ = & i\Gamma_{Im, -, \omega_{2}} \big(|Z|\rho P_{2} - P_{2}\rho |Z|\big) - \Gamma_{Re, -, \omega_{2}}\big(|Z|\rho P_{2} +P_{2}\rho |Z|\big) + 2 \Gamma_{Re, +, \omega_{2}} Z^{*} \rho Z \nonumber \\ =& i\Gamma_{Im, -, \omega_{2}} \big(|Z| \rho P_{2} - P_{2}\rho |Z|\big) - \Gamma_{Re, -, \omega_{2}}\big(|Z|\rho P_{2} + P_{2}\rho |Z|\big) + 2 \Gamma_{Re, -, \omega_{2}} |Z|\rho |Z| \nonumber \\ =&
- \Gamma_{Re,-, \omega_{2}} |Z| \rho (P_{2}- |Z|) - \Gamma_{Re, -, \omega_{2}}(P_{2}- |Z|)\rho |Z| \\ & + 
i \Gamma_{Im, -, \omega_{2}} |Z| \rho (P_{2}- |Z|) - i \Gamma_{Im, -, \omega_{2}} (P_{2}- |Z|) \rho |Z|
\end{aligned}
\end{eqnarray} Due to orthogonality of ranges it follows that $|Z| \rho (P_{2}- |Z|)=0$ and \\ $(P_{2}- |Z|) \rho |Z|=0$  consequently, $|Z| \rho P_{2}= |Z|\rho|Z| = P_{2}\rho |Z|$. But, after direct computations we get 
\begin{eqnarray}
[\rho, P_{2}-|Z|]=(P_{2}+P_{3})[\rho, P_{2}-|Z|](P_{2} + P_{3})= |Z|\rho P_{2} - P_{2} \rho |Z|=0.
\end{eqnarray} This proves that $\rho$ commutes with $P_{2}-|Z|$. Similarly, using that $\rho$ commutes with $P_{3}$, we get 
\begin{eqnarray}
[\rho, |Z|]=(P_{2}+P_{3})[\rho,|Z|](P_{2} + P_{3})= P_{2} \rho |Z| -  |Z| \rho P_{2} =0.
\end{eqnarray} Therefore $\rho$ commutes with $P_{2}$ and with $|Z|$. The commutation of $\rho$ with $P_{0}$ and $P_{1}$ is immediate. Hence we can conclude that $\rho\in\{H\}'$. \\ 
Conversely, if a state $\rho$ is supported on a subspace of $V\cap W_{D}^{\perp}$, belongs to $\{H\}'$, commutes with $|Z|$ and satisfies (\ref{zeta-star-rho-zeta}), then $\rho\in\ker{\mathcal L}_{\omega_{1} *}\cap \ker{\mathcal L}_{\omega_{2} *} \cap \ker{\mathcal L}_{\omega_{3} *}$. This finishes the proof.
\end{proof}

Now we are ready to characterize the set of invariant states of the model. 
 
\begin{theorem}\label{explicit-inv-states} 
A state $\tau$ is invariant for the modified AKV's model if and only if it has the structure 
\begin{eqnarray}\label{rho-supported-V}
\begin{aligned}
\tau = \lambda \rho_{W_{D}} + (1-\lambda ) \Big( \frac{\Gamma_{\textrm{Re}, +, \omega_{2}}}{\Gamma_{\textrm{Re}, +, \omega_{2}} + \Gamma_{\textrm{Re}, -, \omega_{2}}} \sigma +   \frac{\Gamma_{\textrm{Re}, -, \omega_{2}}}{\Gamma_{\textrm{Re}, +, \omega_{2}} + \Gamma_{\textrm{Re}, -, \omega_{2}}} \; Z\sigma Z^{*} \Big)
\end{aligned}
\end{eqnarray} where $\rho_{W_{D}}$ is an invariant state supported on $W_{D}$, $\sigma: \textrm{Im}|Z| \mapsto \textrm{Im}|Z|$ is a state supported on a subspace of $V\cap \textrm{Im}|Z|$ and $0\leq \lambda \leq 1$. 
\end{theorem}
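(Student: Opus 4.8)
\noindent\emph{Proof strategy.} The plan is to prove both implications, the real work lying in necessity. Let $\tau$ be an invariant state. The first goal is to locate $\mathrm{supp}\,\tau$: I aim to show that $\tau$ commutes with $P_{W_{D}}$ and that $\mathrm{supp}\,\tau\subseteq W_{D}\oplus(V\cap W_{D}^{\perp})$. Granting this, Theorem~\ref{structure-rho} splits $\tau=\lambda\rho_{W_{D}}+(1-\lambda)\rho_{W_{D}^{\perp}}$ into invariant states supported on $W_{D}$ and on $W_{D}^{\perp}$ respectively, the second one then necessarily on $V\cap W_{D}^{\perp}$; Lemma~\ref{lemma-V} applies to $\rho_{W_{D}^{\perp}}$, and all that remains is bookkeeping with traces.

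To locate the support I would use two ingredients. First, $\Gamma_{\mathrm{Re},+,\omega_{3}}=0$ makes $|e_{0}\rangle\langle e_{0}|$ absorbing: evaluating $\mathcal{L}_{*}(\tau)=0$ in the vector $e_{0}$ leaves only $\Gamma_{\mathrm{Re},-,\omega_{3}}\langle\psi',\tau\psi'\rangle=0$, so $\tau\psi'=0$, and feeding this into the same identity evaluated in $\psi'$ leaves only $\Gamma_{\mathrm{Re},-,\omega_{2}}\langle Z^{*}\psi',\tau Z^{*}\psi'\rangle=0$, so $\tau Z^{*}\psi'=0$. With $\tau\psi'=0$ available, commutation of $\tau$ with $P_{W_{D}}$ follows: the only piece of the generator not respecting the $W_{D}$--$W_{D}^{\perp}$ splitting is $\mathcal{L}_{\omega_{3}}$ (all Kraus operators of $\mathcal{L}_{\omega_{1}},\mathcal{L}_{\omega_{2}}$ preserve $W_{D}^{\perp}$ and kill $W_{D}$, and $H_{eff}$ commutes with $P_{W_{D}}$), and on the off-diagonal block $\mathcal{L}_{\omega_{3}}$ is killed by $\tau\psi'=0$; the resulting closed linear equation for $P_{W_{D}}\tau P_{W_{D}^{\perp}}$ has strictly contractive flow, its damping being positive definite on $W_{D}^{\perp}$ thanks to the strictly positive $\Gamma_{\mathrm{Re},-,\omega_{1}},\Gamma_{\mathrm{Re},\pm,\omega_{2}}$, so $P_{W_{D}}\tau P_{W_{D}^{\perp}}=0$. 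Second, I would invoke the subharmonic projection $p_{V}$: since support projections of invariant states are subharmonic, $\mathcal{T}_{t}(p_{V})$ increases to a harmonic limit $\hat p\geq p_{V}$, which I claim equals $\unit-|e_{0}\rangle\langle e_{0}|$ (the least harmonic projection dominating $p_{V}$, noting $\unit-|e_{0}\rangle\langle e_{0}|$ is harmonic and $\geq p_{V}$ because $e_{0}\perp V$); then $\mathrm{tr}\big(\tau(\hat p-p_{V})\big)=\lim_{t}\mathrm{tr}\big(\tau(\mathcal{T}_{t}(p_{V})-p_{V})\big)=0$ forces $\mathrm{supp}\,\tau\subseteq p_{V}(\initsp)\oplus\mathbb{C}e_{0}=W_{D}\oplus(V\cap W_{D}^{\perp})$. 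The claim $\mathcal{T}_{t}(p_{V})\uparrow\unit-|e_{0}\rangle\langle e_{0}|$ is the step I expect to need the real effort: its content is that $e_{1},\psi,Z\psi$ form a transient chain ($e_{1}\leftrightarrow\psi$ through $\omega_{1}$, $\psi\leftrightarrow Z\psi$ through $\omega_{2}$, while the $\psi'$-component of any state seeded at $Z\psi$ is conserved by $H_{eff}$ restricted to $P_{3}(\initsp)$ and then absorbed into $e_{0}$ through $\omega_{3}$), and one would confirm it by tracking $\mathcal{T}_{t}(p_{V})$, or equivalently the invariance equations, on a few more matrix elements.

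Granting all this, put $\rho:=\rho_{W_{D}^{\perp}}$. By Lemma~\ref{lemma-V}, $\rho\in\{H\}'$, $[\rho,|Z|]=0$, and $Z^{*}\rho Z=\tfrac{\Gamma_{\mathrm{Re},-,\omega_{2}}}{\Gamma_{\mathrm{Re},+,\omega_{2}}}\rho|Z|$, equivalently $Z\rho Z^{*}=\tfrac{\Gamma_{\mathrm{Re},+,\omega_{2}}}{\Gamma_{\mathrm{Re},-,\omega_{2}}}P_{3}\rho$. Since $\rho$ commutes with $P_{2}(\initsp)$ and $P_{3}(\initsp)$, write $\rho=\rho_{2}+\rho_{3}$ with $\rho_{2}=P_{2}\rho P_{2}$ supported on $V\cap\mathrm{Im}|Z|$ (hence $|Z|\rho_{2}|Z|=\rho_{2}$) and $\rho_{3}=P_{3}\rho P_{3}$. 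Then $\rho_{3}=\tfrac{\Gamma_{\mathrm{Re},-,\omega_{2}}}{\Gamma_{\mathrm{Re},+,\omega_{2}}}Z\rho_{2}Z^{*}$, and taking traces, using $ZZ^{*}=P_{3}$ and $Z^{*}Z=|Z|$, gives $\mathrm{tr}\,\rho_{3}=\tfrac{\Gamma_{\mathrm{Re},-,\omega_{2}}}{\Gamma_{\mathrm{Re},+,\omega_{2}}}\mathrm{tr}\,\rho_{2}$; together with $\mathrm{tr}\,\rho_{2}+\mathrm{tr}\,\rho_{3}=1$ this fixes both normalizations, and with $\sigma:=\rho_{2}/\mathrm{tr}\,\rho_{2}$ (a state supported on $V\cap\mathrm{Im}|Z|$) one obtains exactly the form (\ref{rho-supported-V}).

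For the converse, $\rho_{W_{D}}$ is invariant by Theorem~\ref{theorem-general-hamiltonian-11}, so it suffices to show that $\rho':=\tfrac{\Gamma_{\mathrm{Re},+,\omega_{2}}}{\Gamma_{\mathrm{Re},+,\omega_{2}}+\Gamma_{\mathrm{Re},-,\omega_{2}}}\sigma+\tfrac{\Gamma_{\mathrm{Re},-,\omega_{2}}}{\Gamma_{\mathrm{Re},+,\omega_{2}}+\Gamma_{\mathrm{Re},-,\omega_{2}}}Z\sigma Z^{*}$ satisfies the hypotheses of Lemma~\ref{lemma-V}. Using $Z^{2}=0$, $ZZ^{*}=P_{3}$, $Z^{*}Z=|Z|$, $|Z|\psi=\psi$ and that $\sigma$ is supported on $V\cap\mathrm{Im}|Z|$, one checks: $\rho'$ is supported on $V\cap W_{D}^{\perp}$ (its $P_{2}$-part is $\sigma$, its $P_{3}$-part $Z\sigma Z^{*}$ is supported on $P_{3}(\initsp)\ominus\mathrm{span}\{\psi',Z\psi\}$); $\{e_{1},\psi,\psi'\}\subset\ker\rho'$; $\rho'\in\{H\}'$ with $[\rho',|Z|]=0$; and $Z^{*}\rho'Z=\tfrac{\Gamma_{\mathrm{Re},-,\omega_{2}}}{\Gamma_{\mathrm{Re},+,\omega_{2}}}\rho'|Z|$, both sides equalling $\tfrac{\Gamma_{\mathrm{Re},-,\omega_{2}}}{\Gamma_{\mathrm{Re},+,\omega_{2}}+\Gamma_{\mathrm{Re},-,\omega_{2}}}\sigma$. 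Hence $\rho'$ is invariant, and so is any convex combination $\lambda\rho_{W_{D}}+(1-\lambda)\rho'$. The argument thus reduces to the transience claim of the second paragraph, which is its crux.
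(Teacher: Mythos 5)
Your third and fourth paragraphs reproduce the paper's proof almost exactly: the paper likewise reduces to states supported on $V\cap W_{D}^{\perp}$, invokes Lemma~\ref{lemma-V}, writes $\rho=(|Z|+P_{3})\rho(|Z|+P_{3})=|Z|\rho|Z|+P_{3}\rho P_{3}$, uses $Z\rho Z^{*}=\frac{\Gamma_{\textrm{Re},+,\omega_{2}}}{\Gamma_{\textrm{Re},-,\omega_{2}}}P_{3}\rho$ to fix the weights, and verifies the converse by checking the hypotheses of Lemma~\ref{lemma-V}. Your trace bookkeeping and the converse verification are correct.

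The difference lies in the reduction step, and this is where your proposal has a genuine gap. You correctly recognize that one must prove two things about an arbitrary invariant $\tau$: that $[\tau,P_{W_{D}}]=0$ (so that Theorem~\ref{structure-rho} splits it), and that $\mathrm{supp}\,\tau\subseteq W_{D}\oplus(V\cap W_{D}^{\perp})$. Your sketch of the first is completable as stated: once $\tau\psi'=0$ kills $\mathcal{L}_{\omega_{3}*}$ entirely, the Kraus operators of $\mathcal{L}_{\omega_{1}},\mathcal{L}_{\omega_{2}}$ satisfy $P_{W_{D}}L=0$, so the off-diagonal block obeys $\tau_{od}\bigl(\tfrac{1}{2}\sum_{\ell}L_{\ell}^{*}L_{\ell}+iH_{eff}\bigr)=0$, and $\sum_{\ell}L_{\ell}^{*}L_{\ell}\geq c\,P_{W_{D}^{\perp}}$ with $c>0$ because $W_{D}^{\perp}=\mathbb{C}e_{1}\oplus\mathrm{Im}|Z|\oplus P_{3}(\initsp)$ is exactly covered by $|e_{1}\rangle\langle e_{1}|$, $|Z|$ and $P_{3}$. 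The second ingredient, however, rests entirely on the claim $\lim_{t}\mathcal{T}_{t}(p_{V})=\unit-|e_{0}\rangle\langle e_{0}|$, which you assert and explicitly defer ("one would confirm it by tracking \dots a few more matrix elements"). This is not a routine verification: it is precisely the statement that no invariant state charges $e_{1}$, $\psi$ or $Z\psi$, i.e.\ the heart of the "only if" direction, and nothing you have written establishes it. (Your direct computations give $\tau\psi'=0$ and $\tau Z^{*}\psi'=0$; the analogous chain of evaluations at $e_{1}$, $\psi$ and $Z\psi$ — whose closure depends on the $\psi'$-component of $Z\psi$ feeding into the already-killed $\omega_{3}$ channel — is the missing work.) Also, $\lim_{t}\mathcal{T}_{t}(p_{V})$ is a priori only a positive harmonic operator, not "the least harmonic projection dominating $p_{V}$", so even the identification you propose needs justification. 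To be fair, the paper's own proof opens with "It suffices to show\dots" and never supplies this reduction either; you have correctly located the real difficulty, but you have not resolved it.
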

\begin{proof} 
It suffices to show that a state $\rho$ supported on a subspace of $V\cap W_{D}^{\perp}$ is invariant if and only if has the structure (\ref{rho-supported-V}) with $\lambda=0$. But, as a consequence of Lemma \ref{lemma-V}, any invariant state $\rho$ supported on $V\cap W_{D}^{\perp}$ belongs to $\{H, |Z|\}'$ and can be written in the form 
\begin{eqnarray}\label{explicit-rho}
\begin{aligned}
\rho= & (|Z| + P_{3})\rho (|Z|+ P_{3}) = |Z|\rho |Z| + P_{3}\rho P_{3} \\ = & \frac{\Gamma_{\textrm{Re}, +, \omega_{2}}}{\Gamma_{\textrm{Re}, +, \omega_{2}} + \Gamma_{\textrm{Re}, -, \omega_{2}}} \sigma +   \frac{\Gamma_{\textrm{Re}, -, \omega_{2}}}{\Gamma_{\textrm{Re}, +, \omega_{2}} + \Gamma_{\textrm{Re}, -, \omega_{2}}} \; Z\sigma Z^{*} 
\end{aligned}
\end{eqnarray} where $\sigma= \frac{\Gamma_{\textrm{Re}, +, \omega_{2}} + \Gamma_{\textrm{Re}, -, \omega_{2}}}{\Gamma_{\textrm{Re}, +, \omega_{2}}}  |Z|\rho |Z|$ and we have used (\ref{zeta-rho-zeta-star}). Clearly, up to normalization, $\sigma: \textrm{Im}|Z| \mapsto \textrm{Im}|Z|$ is a state supported on a subspace of $V\cap \textrm{Im}|Z|$. 

Conversely, if $\rho$ is of the form (\ref{rho-supported-V}) with $\lambda=0$, then $e_{1}\in \ker \rho$ and $\rho \psi = \frac{\Gamma_{\textrm{Re}, +, \omega_{2}}}{\Gamma_{\textrm{Re}, +, \omega_{2}} + \Gamma_{\textrm{Re}, -, \omega_{2}}} \sigma \psi$, then for any $u\in\initsp$ we have that $\langle u, \sigma \psi \rangle = \langle \sigma u, \psi\rangle = 0$ since $\sigma u\in V\cap\textrm{Im}|Z|$, hence $\sigma \psi=0$, proving that $\psi\in \ker\rho$. Similarly one can see that $\psi'\in \ker \rho$. It readily follows that 
$\rho\in \ker{\mathcal L}_{\omega_{1}*} \cap \ker {\mathcal L}_{\omega_{3}*}$. Now by direct computations one can see that $\rho$ commutes with $H$ and $|Z|$, indeed, since $\sigma$ commute with $P_{2}$, then 
\begin{eqnarray}
\begin{aligned}
P_{2}\rho= & \frac{\Gamma_{\textrm{Re}, +, \omega_{2}}}{\Gamma_{\textrm{Re}, +, \omega_{2}} + \Gamma_{\textrm{Re}, -, \omega_{2}}} P_{2} \sigma = \rho P_{2}, \: \: \;  \textrm{and} \\ P_{3}\rho =  & \frac{\Gamma_{\textrm{Re}, -, \omega_{2}}}{\Gamma_{\textrm{Re}, +, \omega_{2}} + \Gamma_{\textrm{Re}, -, \omega_{2}}} \; Z\sigma Z^{*}= \rho P_{3}
\end{aligned}
\end{eqnarray} Similarly,  \[|Z| \rho = \frac{\Gamma_{\textrm{Re}, +, \omega_{2}}}{ \Gamma_{\textrm{Re}, +, \omega_{2}} + \Gamma_{\textrm{Re}, -, \omega_{2}}} \sigma = \rho |Z|\] Therefore, Lemma \ref{lemma-V} implies that $\rho$ is invariant. 
This finishes the proof.  
\end{proof}

\begin{remark} Due to the above theorem, if $\tau$ is an invariant state for the modified AKV's model, then \[\tau = \lambda \tau_{W_{D}} + (1-\lambda) \tau_{W_{D}^{\perp}}, \; \, 0\leq \lambda \leq 1\] with 
\[ \tau_{W_{D}^{\perp}}=\Big(\frac{\Gamma_{\textrm{Re}, +, \omega_{2}}}{\Gamma_{\textrm{Re}, +, \omega_{2}} + \Gamma_{\textrm{Re}, -, \omega_{2}}} \sigma +   \frac{\Gamma_{\textrm{Re}, -, \omega_{2}}}{\Gamma_{\textrm{Re}, +, \omega_{2}} + \Gamma_{\textrm{Re}, -, \omega_{2}}} \; Z\sigma Z^{*} \Big) \]
 \end{remark}

The following result describes the extremal points of the subset of invariant states.

\begin{theorem}\label{extremal-states}
An invariant state $\rho$ is extremal for the modified AKV's model if and only if it has one of the following structures: 
\begin{itemize} 
\item[(i)] $\rho= |u\rangle \langle u|$ with $u\in W_{D}=\{e_{1}, \psi, \psi'\}^{\perp}\cap \ker Z \cap \ker Z^{*}$
\item[(ii)] 
\begin{eqnarray}\label{extremals-V}
\begin{aligned}
\rho = \frac{\Gamma_{\textrm{Re}, +, \omega_{2}}}{\Gamma_{\textrm{Re}, +, \omega_{2}}+\Gamma_{\textrm{Re}, -, \omega_{2}}}|u\rangle \langle u| + \frac{\Gamma_{\textrm{Re}, -, \omega_{2}}}{\Gamma_{\textrm{Re}, +, \omega_{2}}+\Gamma_{\textrm{Re}, -, \omega_{2}}} Z|u\rangle \langle u| Z^{*}
\end{aligned}
\end{eqnarray} 
\end{itemize} with $u\in V\cap\textrm{Im}|Z|$.  
\end{theorem}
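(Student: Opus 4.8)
The plan is to read the extreme points off the parametrization already obtained in Theorem~\ref{explicit-inv-states}. Write $\mathcal{I}$ for the (convex, compact) set of invariant states and put
\[
\Phi(\sigma):=\frac{\Gamma_{\textrm{Re},+,\omega_{2}}}{\Gamma_{\textrm{Re},+,\omega_{2}}+\Gamma_{\textrm{Re},-,\omega_{2}}}\,\sigma+\frac{\Gamma_{\textrm{Re},-,\omega_{2}}}{\Gamma_{\textrm{Re},+,\omega_{2}}+\Gamma_{\textrm{Re},-,\omega_{2}}}\,Z\sigma Z^{*}.
\]
Theorem~\ref{explicit-inv-states} says $\mathcal{I}=\textrm{conv}(K_{1}\cup K_{2})$, where $K_{1}$ is the set of states supported on $W_{D}$ (each invariant by Theorem~\ref{theorem-general-hamiltonian-11}) and $K_{2}=\{\Phi(\sigma):\sigma$ a state supported on a subspace of $V\cap\textrm{Im}|Z|\}$ (each invariant by Lemma~\ref{lemma-V}); both are convex and compact.

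First I would record a few elementary facts, all consequences of Remark~\ref{Z}. (a)~$\Phi$ is affine. (b)~$\Phi$ is trace preserving: if $\sigma$ is supported on $\textrm{Im}|Z|$ then $|Z|\sigma=\sigma$ since $|Z|$ is the orthogonal projection onto $\textrm{Im}|Z|$, so $\textrm{tr}(Z\sigma Z^{*})=\textrm{tr}(Z^{*}Z\sigma)=\textrm{tr}(|Z|\sigma)=\textrm{tr}(\sigma)$ and the two coefficients of $\Phi$ sum to $1$. (c)~$\Phi$ is injective on states supported on a subspace of $V\cap\textrm{Im}|Z|$: since $\textrm{Im}|Z|\subseteq P_{2}(\initsp)$ while $Z\sigma Z^{*}$ is supported on $ZZ^{*}(\initsp)=P_{3}(\initsp)\perp P_{2}(\initsp)$, one recovers $\sigma$ from $\Phi(\sigma)$ by compressing with $P_{2}$; hence $\Phi$ maps the rank-one projections $|u\rangle\langle u|$, $u\in V\cap\textrm{Im}|Z|$, $\|u\|=1$ (the extreme points of the state space of $V\cap\textrm{Im}|Z|$) bijectively onto the extreme points of $K_{2}$, which are exactly the states~(\ref{extremals-V}). (d)~Using the definition $W_{D}\subseteq\ker Z\cap\ker Z^{*}$ together with $P_{3}=ZZ^{*}$ and $|Z|=Z^{*}Z$, one checks that $\textrm{Im}|Z|$ and $P_{3}(\initsp)$ are both contained in $W_{D}^{\perp}$, so every state in $K_{2}$ is supported on $W_{D}^{\perp}$.

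Then I would run the case analysis. Let $\tau\in\mathcal{I}$, so $\tau=\lambda\rho_{W_{D}}+(1-\lambda)\Phi(\sigma)$ with $\rho_{W_{D}}\in K_{1}$, $\Phi(\sigma)\in K_{2}$, $\lambda\in[0,1]$. If $\tau$ is extremal then $\lambda\in\{0,1\}$, since for $0<\lambda<1$ it would be a proper convex combination of the distinct invariant states $\rho_{W_{D}}$ and $\Phi(\sigma)$ (distinct by~(d): orthogonal, nonzero supports). If $\lambda=1$ then $\tau=\rho_{W_{D}}\in K_{1}$ is extremal in $K_{1}$ (being extremal in the larger set $\mathcal{I}$), hence $\tau=|u\rangle\langle u|$ with $u\in W_{D}$, $\|u\|=1$: this is~(i). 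If $\lambda=0$ then $\tau=\Phi(\sigma)$; were $\sigma$ not pure, say $\sigma=\mu\sigma_{1}+(1-\mu)\sigma_{2}$ with $\sigma_{1}\neq\sigma_{2}$, then by~(a) and~(c) $\tau=\mu\Phi(\sigma_{1})+(1-\mu)\Phi(\sigma_{2})$ with $\Phi(\sigma_{1})\neq\Phi(\sigma_{2})$, contradicting extremality; hence $\sigma=|u\rangle\langle u|$ with $u\in V\cap\textrm{Im}|Z|$, $\|u\|=1$, and $\tau$ has the form~(\ref{extremals-V}): this is~(ii).

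For the converse, the states in~(i) are extremal already among all states of $\initsp$. For~(ii): if $\tau=\Phi(|u\rangle\langle u|)=\mu\tau_{1}+(1-\mu)\tau_{2}$ with $\tau_{i}\in\mathcal{I}$, $0<\mu<1$, write $\tau_{i}=\lambda_{i}\rho^{(i)}_{W_{D}}+(1-\lambda_{i})\Phi(\sigma_{i})$; compressing by $P_{W_{D}}$ and using~(d) that $\tau$ is supported on $W_{D}^{\perp}$ forces $\mu\lambda_{1}\rho^{(1)}_{W_{D}}+(1-\mu)\lambda_{2}\rho^{(2)}_{W_{D}}=0$, hence $\lambda_{1}=\lambda_{2}=0$ by positivity; then $\Phi(|u\rangle\langle u|)=\Phi(\mu\sigma_{1}+(1-\mu)\sigma_{2})$, so $|u\rangle\langle u|=\mu\sigma_{1}+(1-\mu)\sigma_{2}$ by~(c), so $\sigma_{1}=\sigma_{2}=|u\rangle\langle u|$ by purity, and $\tau_{1}=\tau_{2}=\tau$. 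I do not anticipate a genuine obstacle: Theorem~\ref{explicit-inv-states} carries the weight, and the only delicate points are the injectivity and trace-preservation of $\Phi$ and the inclusions in~(d) — the former is what makes the states in~(ii) actually extremal, the latter is what separates $K_{1}$ from $K_{2}$.
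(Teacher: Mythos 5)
Your proposal is correct and follows essentially the same route as the paper: both read the extreme points off the parametrization of Theorem \ref{explicit-inv-states} and recover $\sigma$ from $\Phi(\sigma)$ by compression (the paper multiplies by $|Z|$, you compress with $P_{2}$ --- the same operation since $|Z|$ is the projection onto $\textrm{Im}|Z|\subseteq P_{2}(\initsp)$), then invoke purity of $|u\rangle\langle u|$. Your converse for case (ii) is in fact slightly more complete than the paper's, which only tests decompositions into states already supported on $V\cap W_{D}^{\perp}$, whereas you explicitly rule out a $W_{D}$-supported component by compressing with $P_{W_{D}}$.
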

\begin{proof}
Due to Theorem \ref{theorem-general-hamiltonian-11}, any state of the form $|u\rangle\langle u|$ with $u\in W_{D}$ is invariant and extremal. Conversely, if $\rho$ is an extremal invariant state supported on $W_{D}$, then it is necessarily a pure state, i.e., $\rho=|u\rangle\langle u|$ for some $u\in W_{D}$.

Clearly any state $\rho$ given by (\ref{extremals-V}), is supported on $V\cap W_{D}^{\perp}$. In view of the above Theorem \ref{explicit-inv-states}, $\rho$ is invariant since it has the structure (\ref{rho-supported-V}) with $\lambda=0$ and $\sigma=|u\rangle \langle u|$.

Now, if 
\begin{eqnarray}\label{rho-extremal}
\rho= \lambda \rho_{1} + (1-\lambda) \rho_{2}
\end{eqnarray} with $\rho_{i}, \; i=1, 2$ invariant states supported on $V\cap W_{D}^{\perp}$, let us say 
\begin{eqnarray}
\begin{aligned}
\rho_{i} = \frac{\Gamma_{\textrm{Re}, +, \omega_{2}}}{\Gamma_{\textrm{Re}, +, \omega_{2}}+\Gamma_{\textrm{Re}, -, \omega_{2}}} \sigma_{i} + \frac{\Gamma_{\textrm{Re}, -, \omega_{2}}}{\Gamma_{\textrm{Re}, +, \omega_{2}}+\Gamma_{\textrm{Re}, -, \omega_{2}}} Z \sigma_{i} Z^{*}, \, \, i=1,2 \nonumber
\end{aligned}
\end{eqnarray} with $\sigma_{i}$ supported on $V\cap \textrm{Im}|Z|$, $i=1, 2$.

After multiplication by $|Z|$ in (\ref{rho-extremal}) we get, 
\[|u\rangle\langle u| = \lambda \sigma_{1} + (1-\lambda) \sigma_{2} \] Therefore, either $\lambda=1$ and $\sigma_{1}=|u\rangle\langle u|$  or $\lambda=0$ and $\sigma_{2}=|u\rangle\langle u|$, since $|u\rangle\langle u|$ is a pure state. This proves that $\rho$ is extremal. Conversely, if $\rho$ is an extremal invariant state supported on $V\cap W_{D}^{\perp}$, then it has the structure (\ref{rho-supported-V}) with $\lambda=0$. Hence, necessarily   
\begin{eqnarray}
\begin{aligned}
\rho = \frac{\Gamma_{\textrm{Re}, +, \omega_{2}}}{\Gamma_{\textrm{Re}, +, \omega_{2}}+\Gamma_{\textrm{Re}, -, \omega_{2}}}\sigma + \frac{\Gamma_{\textrm{Re}, -, \omega_{2}}}{\Gamma_{\textrm{Re}, +, \omega_{2}}+\Gamma_{\textrm{Re}, -, \omega_{2}}} Z \sigma Z^{*}
\end{aligned}
\end{eqnarray} with $\sigma$ an state supported on $\textrm{Im}|Z| \cap V$, let us say $\sigma=\sum_{j} \sigma_{j}|u_{j}\rangle \langle u_{j}|$ with $(u_{j})_{j}\subset \textrm{Im}|Z|\cap V$ the orthonormal basis of $\sigma$ and $\sum_{j}\sigma_{j}=1$. Consequently, 
\begin{eqnarray}\label{sum-rho}
\rho = \sum_{j}\sigma_{j} \Big(\frac{\Gamma_{\textrm{Re}, +, \omega_{2}}}{\Gamma_{\textrm{Re}, +, \omega_{2}}+\Gamma_{\textrm{Re}, -, \omega_{2}}} |u_{j}\rangle \langle u_{j}| + \frac{\Gamma_{\textrm{Re}, -, \omega_{2}}}{\Gamma_{\textrm{Re}, +, \omega_{2}}+\Gamma_{\textrm{Re}, -, \omega_{2}}} |Zu_{j}\rangle\langle Zu_{j}| \Big)
\end{eqnarray}  But $\rho$ is extremal, then there is only one non-trivial summand in (\ref{sum-rho}). This finishes the proof.
\end{proof}

\begin{corollary}
All invariant states of modified AKV's model are detailed balance.
\end{corollary}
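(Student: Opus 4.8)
The plan is to use the explicit description of the invariant states furnished by Theorem \ref{explicit-inv-states} together with Definition \ref{def-db}(ii): since these states are already known to be invariant, it only remains to verify the quasi-commutation relation (\ref{quasi-commute}), $\tau D_{\omega}=c_{\omega}D_{\omega}\tau$, with the detailed balance value $c_{\omega}=\Gamma_{Re,-,\omega}/\Gamma_{Re,+,\omega}$, for each of the three non-trivial Bohr frequencies $\omega_{1},\omega_{2},\omega_{3}$. By Theorem \ref{explicit-inv-states}, an arbitrary invariant state has the form $\tau=\lambda\rho_{W_{D}}+(1-\lambda)\tau_{W_{D}^{\perp}}$ with $\rho_{W_{D}}$ supported on $W_{D}$ and $\tau_{W_{D}^{\perp}}=p\,\sigma+q\,Z\sigma Z^{*}$, where $p=\Gamma_{Re,+,\omega_{2}}/(\Gamma_{Re,+,\omega_{2}}+\Gamma_{Re,-,\omega_{2}})$, $q=1-p$, $\sigma$ is a state supported on a subspace of $V\cap\textrm{Im}\,|Z|\subseteq P_{2}(\initsp)$, and $0\le\lambda\le1$.

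For $\omega_{1}$ and $\omega_{3}$ there is nothing to compute. From $\ker D_{\omega_{1}}=\{e_{1}\}^{\perp}$, $\ker D_{\omega_{1}}^{*}=\{\psi\}^{\perp}$, $\ker D_{\omega_{3}}=\{\psi'\}^{\perp}$, $\ker D_{\omega_{3}}^{*}=\{e_{0}\}^{\perp}$ one reads off that both $W_{D}=\bigcap_{\omega}(\ker D_{\omega}\cap\ker D_{\omega}^{*})$ and $V=\{e_{0},e_{1},\psi,\psi',Z\psi,Z^{*}\psi'\}^{\perp}$ are contained in $\{e_{0},e_{1},\psi,\psi'\}^{\perp}$, so $\tau$ annihilates $e_{0},e_{1},\psi,\psi'$; since $D_{\omega_{1}}=\tfrac{1}{\sqrt{N-1}}|\psi\rangle\langle e_{1}|$ and $D_{\omega_{3}}=\tfrac{1}{\sqrt{N-1}}|e_{0}\rangle\langle\psi'|$, this forces $\tau D_{\omega_{1}}=D_{\omega_{1}}\tau=\tau D_{\omega_{3}}=D_{\omega_{3}}\tau=0$. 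Hence (\ref{quasi-commute}) holds trivially for $\omega_{1}$ with $c_{\omega_{1}}=\Gamma_{Re,-,\omega_{1}}/\Gamma_{Re,+,\omega_{1}}$; and for $\omega_{3}$, where $\Gamma_{Re,+,\omega_{3}}=0$ and ${\mathcal L}_{\omega_{3} *}$ carries no $D^{*}\rho D$ term, the pertinent (zero-temperature) form of detailed balance is just $D_{\omega_{3}}\tau=0$, which we have verified.

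The one substantive computation is at $\omega_{2}$, where $D_{\omega_{2}}=Z$. The summand $\rho_{W_{D}}$ is supported on $W_{D}\subseteq\ker Z\cap\ker Z^{*}$, so $\rho_{W_{D}}Z=0=Z\rho_{W_{D}}$. For $\tau_{W_{D}^{\perp}}=p\,\sigma+q\,Z\sigma Z^{*}$ I would use $Z^{2}=0$ and $Z^{*}Z=|Z|$ (the latter since $|Z|$ is an orthogonal projection, Remark \ref{Z}), together with $\sigma|Z|=\sigma$ (because $\textrm{supp}\,\sigma\subseteq\textrm{Im}\,|Z|$) and $\sigma Z=0$ (because $\textrm{Im}\,Z\subseteq P_{3}(\initsp)$ is orthogonal to $\textrm{supp}\,\sigma\subseteq P_{2}(\initsp)$); this yields $\tau_{W_{D}^{\perp}}Z=q\,Z\sigma Z^{*}Z=q\,Z\sigma|Z|=q\,Z\sigma$ and $Z\tau_{W_{D}^{\perp}}=p\,Z\sigma+q\,Z^{2}\sigma Z^{*}=p\,Z\sigma$, whence $\tau_{W_{D}^{\perp}}Z=\tfrac{q}{p}\,Z\tau_{W_{D}^{\perp}}=\tfrac{\Gamma_{Re,-,\omega_{2}}}{\Gamma_{Re,+,\omega_{2}}}\,Z\tau_{W_{D}^{\perp}}$. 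Adding the two summands and using $\rho_{W_{D}}Z=Z\rho_{W_{D}}=0$, one obtains $\tau D_{\omega_{2}}=\tfrac{\Gamma_{Re,-,\omega_{2}}}{\Gamma_{Re,+,\omega_{2}}}D_{\omega_{2}}\tau$, i.e.\ (\ref{quasi-commute}) with the detailed balance constant. Therefore $\tau$ meets every requirement of Definition \ref{def-db}(ii) and is a detailed balance invariant state.

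I expect the only delicate points to be bookkeeping — keeping straight which of the vectors $e_{0},e_{1},\psi,\psi'$ and which of the subspaces $V,\textrm{Im}\,|Z|,\ker Z$ kill each summand of $\tau$ — and the correct handling of $\omega_{3}$, where $\Gamma_{Re,+,\omega_{3}}=0$ forces (\ref{quasi-commute}) into its degenerate, zero-temperature shape $D_{\omega_{3}}\tau=0$ rather than a quasi-commutation with a finite non-zero constant. An essentially equivalent but perhaps tidier route is to verify (\ref{quasi-commute}) only on the extremal invariant states (i) and (ii) of Theorem \ref{extremal-states} and then pass to convex combinations; for an extremal of type (ii) the $\omega_{2}$ computation above is exactly the case $\sigma=|u\rangle\langle u|$.
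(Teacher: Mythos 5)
Your proof is correct and follows essentially the same route as the paper's: the part supported on $W_{D}$ satisfies the quasi-commutation relation trivially, the frequencies $\omega_{1},\omega_{3}$ give zero on both sides for the part supported on $V\cap W_{D}^{\perp}$, and your $\omega_{2}$ computation $\tau_{W_{D}^{\perp}}Z=\frac{\Gamma_{Re,-,\omega_{2}}}{\Gamma_{Re,+,\omega_{2}}}\,Z\tau_{W_{D}^{\perp}}$ is exactly the paper's pair of identities $\rho D_{\omega_{2}}=\frac{\Gamma_{Re,-,\omega_{2}}}{\Gamma_{Re,+,\omega_{2}}}Z\rho$ and $D_{\omega_{2}}\rho=Z\rho$, merely written out through the explicit form $\tau_{W_{D}^{\perp}}=p\,\sigma+q\,Z\sigma Z^{*}$. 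Your aside about the degenerate case $\Gamma_{Re,+,\omega_{3}}=0$ is a point the paper passes over in silence, but since both sides vanish there it does not alter the argument.
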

\begin{proof} Let $\rho$ be an invariant state. Clearly the part of $\rho$ supported on $W_{D}$ satisfies a detailed balance condition, so that we can assume that $\rho$ is supported on $V\cap W_{D}^{\perp}$. Recalling that $\{e_{0}, e_{1}, \psi, \psi'\}\subset \ker \rho$ if $\rho$ is an invariant state supported on $V\cap W_{D}^{\perp}$ and $|Z|\psi = \psi$. By direct computations one can see that $\rho D_{\omega_{1}}=0=D_{\omega_{1}} \rho$, and $\rho D_{\omega_{3}}=0=D_{\omega_{3}} \rho$. Moreover, 
\begin{eqnarray*}
\rho D_{\omega_{2}} = \frac{\Gamma_{\textrm{Re}, -, \omega_{2}}}{\Gamma_{\textrm{Re}, +, \omega_{2}}} Z\rho, \; \; \; \textrm{and} \; \; \; D_{\omega_{2}} \rho = Z\rho
\end{eqnarray*} hence, $\rho D_{\omega_{2}} = \frac{\Gamma_{\textrm{Re}, -, \omega_{2}}}{\Gamma_{\textrm{Re}, +, \omega_{2}}} D_{\omega_{2}} \rho$. In view of our Definition (\ref{def-db}) we can conclude that $\rho$ is a detailed balance state.
\end{proof}

\subsection{Fast recurrent subspace}
The fast recurrent subspace of a quantum Markov semigroup is defined as 
\[{\mathcal R} := \sup \{s(\rho): s(\rho) \; \, \textrm{is the support of an invariant state}\}\] 
In this section we will characterize the fast recurrent subspace of the modified AKV's model. 

\begin{lemma}\label{projections} 
With the same notations as above, we have that 
\begin{itemize}
\item[(i)] $P_{W_{D}}= P_{0} + \big(P_{2} - |Z|\big)$
\item[(ii)] $P_{W_{D}^{\perp}}= P_{1} + |Z| + P_{3}$
\item[(iii)] $W_{D}\setminus\{e_{0}\}\subset V$
\item[(iv)] $V\cap W_{D}^{\perp} = \textrm{Im}|Z|\cap \{\psi, Z^{*}\psi'\}^{\perp} \oplus \textrm{Im}P_{3}\cap \{\psi', Z\psi\}^{\perp}$
\end{itemize}
\end{lemma}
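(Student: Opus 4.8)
The plan is to work throughout with the orthogonal decomposition $\initsp = P_{0}(\initsp)\oplus P_{1}(\initsp)\oplus P_{2}(\initsp)\oplus P_{3}(\initsp)$, refined by the splitting $P_{2}(\initsp)=\textrm{Im}|Z|\oplus\ker|Z|=\textrm{Im}|Z|\oplus(P_{2}-|Z|)(\initsp)$, which is legitimate since $|Z|$ is an orthogonal projection onto a subspace of $P_{2}(\initsp)$ (Remark \ref{Z}). The input facts are: (a) $Z^{*}$ annihilates $(P_{0}+P_{1}+P_{2})(\initsp)$ and is injective on $P_{3}(\initsp)$ — indeed $v\in P_{3}(\initsp)$ with $Z^{*}v=0$ gives $v=P_{3}v=ZZ^{*}v=0$ — so $\ker Z^{*}=\ker P_{3}=(P_{0}+P_{1}+P_{2})(\initsp)$; (b) $Z$ annihilates $(P_{0}+P_{1}+P_{3})(\initsp)$ and $\ker Z=\ker|Z|=(P_{0}+P_{1}+P_{3})(\initsp)\oplus(P_{2}-|Z|)(\initsp)$; and (c) $\psi$ and $Z^{*}\psi'$ lie in $\textrm{Im}|Z|$, while $\psi'$ and $Z\psi$ lie in $\textrm{Im}P_{3}=P_{3}(\initsp)$, as recorded before the statement. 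I also use $\unit=P_{0}+P_{1}+P_{2}+P_{3}$.

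For (i), intersecting the two kernels from (a) and (b) gives $\ker Z\cap\ker Z^{*}=(P_{0}+P_{1})(\initsp)\oplus(P_{2}-|Z|)(\initsp)$. Imposing orthogonality to $\{e_{1},\psi,\psi'\}$: the vector $e_{1}$ spans $P_{1}(\initsp)$, so it removes exactly the $P_{1}$ summand; by (c), $\psi\in\textrm{Im}|Z|$ and $\psi'\in P_{3}(\initsp)$ are already orthogonal to $(P_{0}+P_{1})(\initsp)\oplus(P_{2}-|Z|)(\initsp)$, so those two conditions are vacuous. Hence $W_{D}=P_{0}(\initsp)\oplus(P_{2}-|Z|)(\initsp)$, i.e. $P_{W_{D}}=P_{0}+(P_{2}-|Z|)$. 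Item (ii) then follows at once: $P_{W_{D}^{\perp}}=\unit-P_{W_{D}}=(P_{0}+P_{1}+P_{2}+P_{3})-P_{0}-(P_{2}-|Z|)=P_{1}+|Z|+P_{3}$.

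For (iii), by (i) we have $W_{D}=\textrm{span}\{e_{0}\}\oplus(P_{2}-|Z|)(\initsp)$, so it suffices to show that its part orthogonal to $e_{0}$, namely $(P_{2}-|Z|)(\initsp)$, lies in $V$. Any $w$ there satisfies $w\in P_{2}(\initsp)\cap\ker|Z|$, hence $w\perp e_{0}$, $w\perp e_{1}$, $w\perp\psi'$ and $w\perp Z\psi$ (these two being in $P_{3}(\initsp)$), and $w\perp\psi$, $w\perp Z^{*}\psi'$ (these two being in $\textrm{Im}|Z|$ by (c)); thus $w\in V$. For (iv), write a general $v\in W_{D}^{\perp}$ according to (ii) as $v=v_{1}+v_{|Z|}+v_{3}$ with $v_{1}\in\textrm{span}\{e_{1}\}$, $v_{|Z|}\in\textrm{Im}|Z|$, $v_{3}\in\textrm{Im}P_{3}$, these three subspaces being mutually orthogonal. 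Then $v\perp e_{0}$ is automatic; $v\perp e_{1}$ forces $v_{1}=0$; since $\psi,Z^{*}\psi'\in\textrm{Im}|Z|$ and $\psi',Z\psi\in\textrm{Im}P_{3}$, the conditions $v\perp\psi$, $v\perp Z^{*}\psi'$ become $v_{|Z|}\perp\{\psi,Z^{*}\psi'\}$, and $v\perp\psi'$, $v\perp Z\psi$ become $v_{3}\perp\{\psi',Z\psi\}$. Collecting these yields $V\cap W_{D}^{\perp}=\big(\textrm{Im}|Z|\cap\{\psi,Z^{*}\psi'\}^{\perp}\big)\oplus\big(\textrm{Im}P_{3}\cap\{\psi',Z\psi\}^{\perp}\big)$.

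The only point needing care is the redundancy argument in (i): one must make sure the constraints from $\psi$ and $\psi'$ really add nothing, which rests precisely on (c) together with $\textrm{Im}|Z|\subseteq P_{2}(\initsp)$ and $|Z|$ being an orthogonal projection; everything else is bookkeeping with the orthogonal decomposition of $\initsp$.
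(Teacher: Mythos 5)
Your proof is correct and follows essentially the same route as the paper's: both rest on the decomposition $P_{2}(\initsp)=\textrm{Im}|Z|\oplus\ker|Z|$ together with the facts $\psi,Z^{*}\psi'\in\textrm{Im}|Z|$ and $\psi',Z\psi\in P_{3}(\initsp)$, and items (ii)--(iv) are handled by the same bookkeeping. The only (minor, and welcome) difference is in (i), where you first compute $\ker Z\cap\ker Z^{*}=(P_{0}+P_{1})(\initsp)\oplus(P_{2}-|Z|)(\initsp)$ and then observe that the constraints from $\psi$ and $\psi'$ are automatically satisfied, whereas the paper argues by double inclusion from the parametrization $u=u_{0}e_{0}+(P_{2}-|Z|)v$; the content is the same.
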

\begin{proof} To prove $(i)$ it suffices to prove the identity 
$W_{D} = \{u_{0} e_{0} + \big(P_{2} - |Z|\big) v \, : \, v\in\initsp, \, \, u_{0}\in{\mathbb C}\}$. If $u=u_{0} e_{0} + \big(P_{2} - |Z|\big) v$ with  $v\in\initsp$, $u_{0}\in{\mathbb C}$, clearly $u\perp e_{1}$ and $u\perp \psi'$ since $\psi'\in\textrm{Im}P_{3}$. After simple computations we get   
\[\langle \psi, u \rangle = \langle \psi, (P_{2}- |Z|)v \rangle = \langle (P_{2}- |Z|)\psi, v \rangle =0\] since $P_{2}\psi= \psi = |Z|\psi$, hence $u\perp \psi$. Moreover, we have that $|Z| u = |Z| (P_{2} - |Z|)v=0$, then $u\in \ker |Z|$ and clearly$Z^{*}u = 0$. This proves that $\{u_{0} e_{0} + \big(P_{2} - |Z|\big) v \, : \, v\in\initsp, \, \, u_{0}\in{\mathbb C}\}\subset W_{D}$. To prove the oposite inequality observe that any $u\in\initsp$ may be written in the form $u= P_{0}u + P_{1}u + P_{2}u + P_{3}u$, but if in addition $u\in W_{D}$, then $P_{1}u=0$. Condition $u\in \ker|Z|, \; u\neq P_{0}u$ and identity $P_{2}= \textrm{Im}|Z| \oplus \ker|Z|$, imply that $u=(P_{2}- |Z|) v$ for some $v\in \initsp$ and, hence, $P_{3}u=0$. This proves $(i)$.

Item $(ii)$ is an immediate consequence of $(i)$. If $u\in W_{D}\setminus\{e_{0}\}$, let us say $u= (P_{2} - |Z|)v, \; v\in \initsp,$ then clearly $u\perp \{e_{0}, e_{1}\}$ and using that $P_{2}\psi= \psi = |Z|\psi$, we get 
\[\langle \psi, u\rangle = \langle \psi, (P_{2}-|Z|)v\rangle = \langle (P_{2} - |Z|)\psi, u\rangle =0\] i.e., $u\perp \psi$. Similarly, identities $P_{2}Z^{*}\psi'= Z^{*}\psi' = |Z|Z^{*}\psi'$, imply that $u\perp Z^{*}\psi'$. Moreover, $u\in\{\psi', Z\psi\}^{\perp}$, since $\psi', Z\psi \in \textrm{Im}P_{3}$. This proves that $W_{D}\setminus\{e_{0}\}\subset V$.

Now, using $(ii)$ we have that 
\begin{eqnarray}
\begin{aligned}
V \cap W_{D}^{\perp} = &\{e_{0}, e_{1}, \psi, \psi', Z \psi, Z^{*} \psi' \}^{\perp} \cap \big( P_{1}+|Z|+ P_{3} \big)(\initsp) \\  = &\{\psi, \psi', Z\psi, Z^{*}\psi' \}^{\perp} \cap \big(|Z|+ P_{3}\big)(\initsp) \\  = &\textrm{Im}|Z| \cap\{\psi, Z^{*}\psi'\}^{\perp} \oplus \textrm{Im}P_{3}\cap\{\psi', Z\psi\}^{\perp}
\end{aligned}
\end{eqnarray} This proves $(iv)$ and finishes the proof of the lemma.
\end{proof}
\begin{corollary}\label{inv-notin-comm}
If $M<N-1$, there exists invariant states of the modified AKV's model in $Ann(D)\setminus\{H\}'$.
\end{corollary}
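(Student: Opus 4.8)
The plan is to exhibit an explicit pure state supported on the interaction--free subspace $W_{D}$ which is invariant by Theorem \ref{theorem-general-hamiltonian-11}, lies in $Ann(D)$ by Theorem \ref{inv-WD}, yet is not block--diagonal in the spectral decomposition of $H$ and hence does not commute with $H$. The hypothesis $M<N-1$ will enter only through a dimension count for the interference operator $Z$.

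First I would record the relevant facts about $Z$: it annihilates $P_{2}(\initsp)^{\perp}$ (its ``bra'' indices range over $\{2,\dots,N\}$), its range is contained in $P_{3}(\initsp)$, and $|Z|=Z^{*}Z$ is an orthogonal projection with $ZZ^{*}=P_{3}$ (Remark \ref{Z}). Since $\textrm{rank}\,Z=\textrm{rank}(ZZ^{*})=\textrm{rank}\,P_{3}=M$, the projection $|Z|$ has $M$--dimensional range; as $|Z|$ is supported on $P_{2}(\initsp)$, which has dimension $N-1$, this gives $\dim\big(\ker|Z|\cap P_{2}(\initsp)\big)=(N-1)-M$, a positive number precisely when $M<N-1$. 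By Lemma \ref{projections}(i), $P_{W_{D}}=P_{0}+(P_{2}-|Z|)$, so $W_{D}=\mathbb{C}e_{0}\oplus\big(\ker|Z|\cap P_{2}(\initsp)\big)$ and therefore $\dim W_{D}\ge 2$.

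Next I would fix a unit vector $v\in\ker|Z|\cap P_{2}(\initsp)$, put $w=\tfrac{1}{\sqrt 2}(e_{0}+v)$ (a unit vector since $e_{0}\perp v$) and $\rho=|w\rangle\langle w|$, a state with support in $W_{D}$. By Theorem \ref{theorem-general-hamiltonian-11} — whose hypothesis on $\Delta_{\omega}$ is met because the effective Hamiltonian (\ref{effect-H}) is, on each Bohr frequency, of the form (\ref{ecu-Delta}) with real coefficients — the state $\rho$ is invariant for the modified AKV's generator; and by Theorem \ref{inv-WD}, $\rho D_{\omega}=0=D_{\omega}\rho$ for all $\omega$, so $\textrm{tr}(\rho D_{\omega})=0$ and $\rho\in Ann(D)$. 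Finally, to see $\rho\notin\{H\}'$ it suffices to note that $e_{0}$ and $v$ belong to eigenspaces of $H$ for the distinct eigenvalues $0$ and $\epsilon_{2}$, so the coherence $\tfrac12|e_{0}\rangle\langle v|$ present in $\rho$ yields $[H,\rho]=\tfrac{\epsilon_{2}}{2}\big(|v\rangle\langle e_{0}|-|e_{0}\rangle\langle v|\big)\ne 0$, using $\epsilon_{2}>0$.

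The one mildly delicate step is the identity $\textrm{rank}\,Z=M$, which is exactly what makes $\ker|Z|\cap P_{2}(\initsp)$ nontrivial when $M<N-1$; the rest is a direct assembly of Lemma \ref{projections}, Theorem \ref{theorem-general-hamiltonian-11} and Theorem \ref{inv-WD}. A single pure state settles the statement, but one could equally take any state supported on $\mathbb{C}e_{0}\oplus(\ker|Z|\cap P_{2}(\initsp))$ carrying a genuine $e_{0}$--$P_{2}$ coherence.
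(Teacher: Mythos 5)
Your proof is correct and follows essentially the same route as the paper: both exhibit a pure state $|u\rangle\langle u|$ with $u\in W_{D}=\mathbb{C}e_{0}\oplus\big(\ker|Z|\cap P_{2}(\initsp)\big)$ carrying a genuine $e_{0}$--$P_{2}$ coherence, which is invariant and in $Ann(D)$ by Theorems \ref{theorem-general-hamiltonian-11} and \ref{inv-WD} but fails to commute with $H$ (the paper checks $[P_{0},\rho]\neq 0$ instead of $[H,\rho]\neq 0$, and cites an external proposition for the equivalence $|Z|<P_{2}\iff M<N-1$ where you give the rank count directly). The differences are cosmetic; your version is, if anything, slightly more self-contained.
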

\begin{proof}
Let $u=u_{0} e_{0} + (P_{2}-|Z|)v\in W_{D}$, with $u_{0}\in{\mathbb C}$ and $v\in\initsp$. Then we have,
\begin{eqnarray}
\begin{aligned}
P_{0} |u\rangle\langle u| -|u\rangle\langle u| P_{0} = u_{0} |(P_{2}-|Z|)v\rangle\langle e_{0}| - \bar{u}_{0} |e_{0}\rangle\langle (P_{2}-|Z|)v| 
\end{aligned}
\end{eqnarray} Therefore, the invariant state $|u\rangle\langle u|$ does not belong to $\{H\}'$ if $u_{0}\neq 0$, $v\neq 0$ and $|Z| < P_{2}$ (equivalently $M < N-1$, see Proposition 2.2 in Ref.\cite{gggq}). Nevertheless due to result of Theorem \ref{inv-WD}, the state belongs to the annihilator $Ann(D)$.
\end{proof}

\begin{remark}
Notice that in contradistinction with Ref.\cite{gggq}, in the case when $|Z| < P_{2}$ (equivalently, $M < N-1$) we have \textit{found} invariant states outside $\{H\}'$, but in $Ann(D)$. 
\end{remark}

\begin{proposition}\label{inv-faithful} 
\begin{itemize}
\item[(i)] The restrictions of the maps $Z: \textrm{Im}|Z| \cap\{\psi, Z^{*}\psi'\}^{\perp}\mapsto \textrm{Im}P_{3} \cap\{\psi', Z\psi\}^{\perp}$ and $Z^{*}: \textrm{Im}P_{3} \cap\{\psi', Z\psi\}^{\perp}\mapsto\textrm{Im}|Z| \cap\{\psi, Z^{*}\psi'\}^{\perp}$ are unitaries. Consequently $\textrm{dim}\Big(\textrm{Im}|Z| \cap\{\psi, Z^{*}\psi'\}^{\perp}\Big)= \textrm{dim}\Big( \textrm{Im}P_{3} \cap\{\psi', Z\psi\}^{\perp}\Big)$.
 
\item[(ii)] The state \[\rho = \frac{1}{\textrm{tr}(q)}\Big(\frac{\Gamma_{\textrm{Re}, +, \omega_{2}}}{\Gamma_{\textrm{Re}, +, \omega_{2}}+\Gamma_{\textrm{Re}, -, \omega_{2}}}  q + \frac{\Gamma_{\textrm{Re}, -, \omega_{2}}}{\Gamma_{\textrm{Re}, +, \omega_{2}}+\Gamma_{\textrm{Re}, -, \omega_{2}}} Z q Z^{*}\Big)\] where $q= |Z| P_{\{\psi, Z^{*}\psi'\}^{\perp}}$ is the orthogonal projection on $\textrm{Im}|Z| \cap V$, is invariant for the modified AKV's model, with support projection $P_{W_{D}^{\perp}\cap V} = q + Z q Z^{*}$. Hence, $\rho$ is faithful as an state in the hereditary subalgebra $P_{W_{D}^{\perp}\cap V}{\mathcal B}(\initsp)P_{W_{D}^{\perp}\cap V}$.
\end{itemize}
\end{proposition}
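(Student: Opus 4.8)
The plan is to use the algebraic identities collected in Remark~\ref{Z}, namely that $|Z|=Z^{*}Z$ is an orthogonal projection and $ZZ^{*}=P_{3}$. These say exactly that $Z$ is a partial isometry with initial projection $|Z|$ and final projection $P_{3}$, so that $Z$ restricts to a bijective isometry (unitary) $Z\colon\textrm{Im}|Z|\to\textrm{Im}P_{3}$ with inverse $Z^{*}\colon\textrm{Im}P_{3}\to\textrm{Im}|Z|$. Part~(i) will then follow because a unitary between Hilbert spaces carries the orthogonal complement of a subspace onto the orthogonal complement of its image: I would record that $\psi$ and $Z^{*}\psi'$ lie in $\textrm{Im}|Z|$ (the computations $|Z|\psi=\psi$ and $|Z|Z^{*}\psi'=Z^{*}\psi'$ appear just after~(\ref{WD})) and that the above unitary sends them to $Z\psi$ and $ZZ^{*}\psi'=P_{3}\psi'=\psi'$; hence $Z$ maps $\textrm{Im}|Z|\ominus\textrm{span}\{\psi,Z^{*}\psi'\}=\textrm{Im}|Z|\cap\{\psi,Z^{*}\psi'\}^{\perp}$ onto $\textrm{Im}P_{3}\ominus\textrm{span}\{\psi',Z\psi\}=\textrm{Im}P_{3}\cap\{\psi',Z\psi\}^{\perp}$, and symmetrically for $Z^{*}$. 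The dimension equality is then immediate.

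For part~(ii) I would first argue that $q=|Z|P_{\{\psi,Z^{*}\psi'\}^{\perp}}$ is the orthogonal projection onto $\textrm{Im}|Z|\cap V$. Since $\psi,Z^{*}\psi'\in\textrm{Im}|Z|$, the rank-two projection onto $\textrm{span}\{\psi,Z^{*}\psi'\}$ is a subprojection of $|Z|$, so $q=|Z|-P_{\textrm{span}\{\psi,Z^{*}\psi'\}}$ is a projection with range $\textrm{Im}|Z|\ominus\textrm{span}\{\psi,Z^{*}\psi'\}$; because $\textrm{Im}|Z|\subseteq P_{2}(\initsp)$ is automatically orthogonal to $e_{0},e_{1}$ and to $\psi',Z\psi\in\textrm{Im}P_{3}$, this range is exactly $\textrm{Im}|Z|\cap V$ (cf.\ Lemma~\ref{projections}(iv)). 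Then $\sigma:=q/\textrm{tr}(q)$ is a state supported on a subspace of $V\cap\textrm{Im}|Z|$, and Theorem~\ref{explicit-inv-states} (with $\lambda=0$ and this $\sigma$) gives that $\rho$ is invariant.

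It remains to identify $s(\rho)$. As both $\Gamma$-coefficients are strictly positive, $s(\rho)$ is the sum of the support projections of $q$ and of $ZqZ^{*}$. Using $q\le|Z|=Z^{*}Z$ one checks that $ZqZ^{*}$ is self-adjoint and idempotent, hence the orthogonal projection onto $Z(\textrm{Im}q)$, which by part~(i) equals $\textrm{Im}P_{3}\cap\{\psi',Z\psi\}^{\perp}$; moreover $q$ and $ZqZ^{*}$ are mutually orthogonal projections since their ranges lie in the orthogonal subspaces $P_{2}(\initsp)$ and $P_{3}(\initsp)$. Therefore $s(\rho)=q+ZqZ^{*}$, which by Lemma~\ref{projections}(iv) is precisely $P_{W_{D}^{\perp}\cap V}$. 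Since $P_{W_{D}^{\perp}\cap V}$ is the unit of the hereditary subalgebra $P_{W_{D}^{\perp}\cap V}\mathcal{B}(\initsp)P_{W_{D}^{\perp}\cap V}$ and coincides with the support of $\rho$, the state $\rho$ is faithful there.

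The only point requiring care --- the \textbf{main obstacle} --- is the bookkeeping in part~(i): showing that the unitary $Z$ really carries the \emph{distinguished} complement onto the \emph{distinguished} complement, and likewise that $ZqZ^{*}$ in part~(ii) is a genuine orthogonal projection onto the right subspace. Both rest on juggling the identities $ZZ^{*}=P_{3}$, $Z^{*}Z=|Z|$, $|Z|\psi=\psi$, $|Z|Z^{*}\psi'=Z^{*}\psi'$ and $Z^{2}=0$ from Remark~\ref{Z}; everything else is a direct application of Theorem~\ref{explicit-inv-states} and Lemma~\ref{projections}.
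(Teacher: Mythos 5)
Your argument is correct and follows essentially the same route as the paper: the same identities $ZZ^{*}=P_{3}$, $Z^{*}Z=|Z|$ and the fixed points $\psi$, $Z^{*}\psi'$ of $|Z|$ drive part (i), and part (ii) is the same combination of Theorem \ref{explicit-inv-states} (with $\lambda=0$, $\sigma=q/\textrm{tr}(q)$) and Lemma \ref{projections}(iv), after checking that $ZqZ^{*}$ is the orthogonal projection onto $\textrm{Im}P_{3}\cap\{\psi',Z\psi\}^{\perp}$. Your packaging of (i) as ``a partial isometry restricts to a unitary between its initial and final spaces, hence carries the distinguished span onto the distinguished span and therefore complement onto complement'' is a slightly cleaner way to get surjectivity, but it is the same computation.
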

\begin{proof}
Notice that $Z$ maps $\textrm{Im}|Z| \cap\{\psi, Z^{*}\psi'\}^{\perp}$ into $\textrm{Im}P_{3} \cap\{\psi', Z\psi\}^{\perp}$. Indeed, if $u\in \textrm{Im}|Z| \cap\{\psi, Z^{*}\psi'\}^{\perp}$ then $P_{3}Zu= Zu$, $\langle \psi', Zu\rangle = \langle Z^{*} \psi', u\rangle=0$ and $\langle Z\psi, Zu\rangle= \langle |Z| \psi, u\rangle = \langle \psi, u\rangle=0$, hence $Z u\in\textrm{Im}P_{3} \cap\{\psi', Z\psi\}^{\perp}$. Moreover, $\langle Zu, Zv\rangle= \langle |Z| u, v\rangle = \langle u, v\rangle$, meaning that $Z$ is unitary. Analogously one shows that $Z^{*}$ maps $\textrm{Im}P_{3} \cap\{\psi', Z\psi\}^{\perp}$ into $\textrm{Im}|Z| \cap\{\psi, Z^{*}\psi'\}^{\perp}$ and is unitary. This proves $(i)$.  

Let $q$ be the orthogonal projection on the subspace $\textrm{Im}|Z| \cap V$. Notice that $q=|Z|$ on $\textrm{Im}|Z| \cap\{\psi, Z^{*}\psi'\}^{\perp}$. Indeed, since $\psi$ and $Z^{*}\psi'$ are fixed points of $|Z|$, this operator maps $\textrm{Im}|Z| \cap\{\psi, Z^{*}\psi'\}^{\perp}$ into itself and condition $|Z|u=0, \; u\in \textrm{Im}|Z| \cap\{\psi, Z^{*}\psi'\}^{\perp}$, implies that $u= |Z|u=0$, then it is injective. Consequently, the map $Zq Z^{*}$ maps $\textrm{Im}P_{3} \cap\{\psi', Z\psi\}^{\perp}$ into itself and it is injective since $Z q Z^{*} u=0$ and $u\in \textrm{Im}P_{3} \cap\{\psi', Z\psi\}^{\perp}$, imply $0=Z|Z|Z^{*}u= P_{3}u=u$. Moreover $ZqZ^{*}$ is a projection since $Z q Z^{*} Z q Z^{*} = Z q |Z| q Z^{*} = ZqZ^{*}$, and for any $u\in P_{3}(\initsp)= ZZ^{*}(\initsp)$ we have that $Z qZ^{*} u = Z q Z^{*} ZZ^{*}u= Zq|Z|Z^{*}u= ZZ^{*}ZZ^{*}u=u $, hence, coincides with the orthogonal projection on $\textrm{Im}P_{3} \cap\{\psi', Z\psi\}^{\perp}$. Consequently, by item $(iv)$ of the above Lemma \ref{projections} we conclude that $P_{W_{D}^{\perp}\cap V} = q + Z q Z^{*}$. The result of the proposition readily follows from Theorem \ref{explicit-inv-states}.
\end{proof}

\begin{corollary}
The fast recurrent subspace of the modified AKV's model is \[{\mathcal R}= W_{D}\oplus W_{D}^{\perp}\cap V\]
\end{corollary}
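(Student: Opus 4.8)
Since $\mathcal{R}$ is by definition the supremum (join) of the support projections of all invariant states, and since $W_{D}\perp W_{D}^{\perp}$, the corollary amounts to two inclusions: a common upper bound $W_{D}\oplus(W_{D}^{\perp}\cap V)$ for every such support, and a matching lower bound obtained by exhibiting invariant states whose supports fill up $W_{D}$ and $W_{D}^{\perp}\cap V$. Both directions are assembled from the structural results already proved.

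\emph{Upper bound.} First I would invoke Theorem \ref{explicit-inv-states}: an arbitrary invariant state $\tau$ splits as $\tau=\lambda\rho_{W_{D}}+(1-\lambda)\big(c_{+}\sigma+c_{-}Z\sigma Z^{*}\big)$, where $\rho_{W_{D}}$ is supported on $W_{D}$, $\sigma$ is supported on $V\cap\textrm{Im}|Z|=\textrm{Im}|Z|\cap\{\psi,Z^{*}\psi'\}^{\perp}$, and $c_{\pm}\ge 0$, $c_{+}+c_{-}=1$. Hence the support of $\tau$ is contained in the orthogonal sum of the support of $\rho_{W_{D}}$ (inside $W_{D}$), the support of $\sigma$ (inside $\textrm{Im}|Z|\cap\{\psi,Z^{*}\psi'\}^{\perp}$), and the support of $Z\sigma Z^{*}$. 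By Proposition \ref{inv-faithful}(i) the latter lies in $Z\big(\textrm{Im}|Z|\cap\{\psi,Z^{*}\psi'\}^{\perp}\big)=\textrm{Im}P_{3}\cap\{\psi',Z\psi\}^{\perp}$. Adding these up and using Lemma \ref{projections}(iv), the support of $\tau$ is contained in $W_{D}\oplus\big(\textrm{Im}|Z|\cap\{\psi,Z^{*}\psi'\}^{\perp}\oplus\textrm{Im}P_{3}\cap\{\psi',Z\psi\}^{\perp}\big)=W_{D}\oplus(W_{D}^{\perp}\cap V)$. Taking the supremum over all invariant states yields $\mathcal{R}\subseteq W_{D}\oplus(W_{D}^{\perp}\cap V)$.

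\emph{Lower bound.} If $W_{D}\neq\{0\}$, then for every unit vector $w\in W_{D}$ the pure state $|w\rangle\langle w|$ is invariant by Theorem \ref{theorem-general-hamiltonian-11}, so $\bigvee_{\|w\|=1,\,w\in W_{D}}|w\rangle\langle w|=P_{W_{D}}$ is dominated by $\mathcal{R}$. Next, Proposition \ref{inv-faithful}(ii) furnishes an invariant state whose support projection is exactly $P_{W_{D}^{\perp}\cap V}=q+ZqZ^{*}$, so $P_{W_{D}^{\perp}\cap V}\le\mathcal{R}$ as well. Because $W_{D}$ and $W_{D}^{\perp}\cap V$ are orthogonal, $P_{W_{D}}\vee P_{W_{D}^{\perp}\cap V}=P_{W_{D}}+P_{W_{D}^{\perp}\cap V}$, whence $\mathcal{R}\supseteq W_{D}\oplus(W_{D}^{\perp}\cap V)$. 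Combining the two inclusions proves the corollary. The computations are routine; the only point requiring care is the upper bound, namely that the $W_{D}^{\perp}$–component of an invariant state never escapes $V$ — and this has already been secured by Lemma \ref{lemma-V} and Theorem \ref{explicit-inv-states}, which force any invariant state supported on $W_{D}^{\perp}$ to annihilate $e_{0},e_{1},\psi,\psi'$ — so there is no genuine obstacle beyond quoting those results correctly.
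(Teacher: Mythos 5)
Your proposal is correct and follows essentially the same route as the paper: the upper bound comes from the structure theorem for invariant states (Theorem \ref{explicit-inv-states}) together with Lemma \ref{projections}(iv) and Proposition \ref{inv-faithful}(i), and the lower bound from the invariant states supported on $W_{D}$ (Theorem \ref{theorem-general-hamiltonian-11}) together with the faithful invariant state on $W_{D}^{\perp}\cap V$ of Proposition \ref{inv-faithful}(ii). The paper's own proof is just a two-sentence citation of these same results; you have merely spelled out the details.
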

\begin{proof}
By Theorem \ref{explicit-inv-states}, every invariant state is supported on a subspace of ${\mathcal R}$. The same theorem and Proposition \ref{inv-faithful} imply that there exits an invariant state with support equal to ${\mathcal R}$.
\end{proof}

\subsection{Approach to equilibrium and attraction domains}

The decoherence free sub-algebra ${\mathcal N}({\mathcal T)}$ of the QMS, ${\mathcal T}$, of the AKV's model was studied and characterized in Ref.\cite{agq} It was proved there that ${\mathcal N}({\mathcal T)}\subset{\mathcal F}({\mathcal T})$ where ${\mathcal F}({\mathcal T})$ is the subset of fixed point of ${\mathcal T}$, consequently, ${\mathcal N}({\mathcal T)}={\mathcal F}({\mathcal T})$ since the opposite inequality always holds true. If there exists a faithful invariant state in ${\mathcal B}(\initsp)$, from a result of Frigerio and Verri\cite{af,fv} one can conclude that for any normal state $\eta\in {\mathcal B}(\initsp)$ there exists the limit $\tau = \lim_{t\to\infty}{\mathcal T}_{*t}(\eta)$. Unfortunately, as a consequence of Theorem \ref{explicit-inv-states}, any invariant state $\tau$ for the AKV's model is singular, indeed $V^{\perp}\setminus\{e_{0}\}= \textrm{Span}\{e_{1}, \psi, \psi', Z\psi, Z^{*}\psi'\}\subset \ker(\tau)$. Then, to analyze the approach to equilibrium property and attraction domains, it is necessary to restrict our attention to the evolution on hereditary sub-algebras ${\mathcal A}_{W_{D}} = P_{W_{D}}{\mathcal B}(\initsp)P_{W_{D}}$ and ${\mathcal A}_{W_{D}^{\perp}\cap V} = P_{W_{D}^{\perp}\cap V}{\mathcal B}(\initsp)P_{W_{D}^{\perp}\cap V}$.

\begin{lemma}\label{app-equil} 
Let ${\mathcal T}_{W_{D}^{\perp}\cap V, t}= P_{W_{D}^{\perp}\cap V}{\mathcal T}_{t} P_{W_{D}^{\perp}\cap V}$ be the hereditary semigroup ac\-ting on the subalgebra ${\mathcal A}_{W_{D}^{\perp}\cap V}$. Then ${\mathcal N}\Big({\mathcal T}_{W_{D}^{\perp}\cap V} \Big) \subset {\mathcal F}\Big({\mathcal T}_{W_{D}^{\perp}\cap V}\Big)$.
\end{lemma}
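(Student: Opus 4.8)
The plan is to avoid any direct analysis of the generator on the corner ${\mathcal A}_{W_{D}^{\perp}\cap V}$ and instead \emph{transport} the inclusion ${\mathcal N}({\mathcal T})\subset{\mathcal F}({\mathcal T})$ already available for the full AKV semigroup (Ref.\cite{agq}, recalled above) down to the hereditary semigroup. The point is that $W_{D}^{\perp}\cap V$ is not merely subharmonic but is a \emph{reducing} subspace for the whole GKSL data of the model: the orthogonal projection $p:=P_{W_{D}^{\perp}\cap V}$ commutes with the effective Hamiltonian $H_{eff}$ and with every Kraus operator $L^{\omega_{j}}_{k}$. Granting this, $p\in{\mathcal F}({\mathcal T})$, the corner ${\mathcal A}_{W_{D}^{\perp}\cap V}=p{\mathcal B}(\initsp)p$ is ${\mathcal T}$-invariant, and the restriction ${\mathcal T}_{t}|_{{\mathcal A}_{W_{D}^{\perp}\cap V}}$ coincides with the hereditary semigroup ${\mathcal T}_{W_{D}^{\perp}\cap V,t}=p\,{\mathcal T}_{t}\,p$.

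To establish the commutations I would argue as follows. By Lemma \ref{projections}(iv) and Proposition \ref{inv-faithful}, $p$ is supported on $V\subset\{e_{0},e_{1},\psi,\psi'\}^{\perp}$, so the rank-one Kraus operators $L^{\omega_{1}}_{1}\propto|\psi\rangle\langle e_{1}|$, $L^{\omega_{1}}_{2}\propto|e_{1}\rangle\langle\psi|$, $L^{\omega_{3}}_{1}\propto|e_{0}\rangle\langle\psi'|$ (and their adjoints) annihilate $p(\initsp)$ and hence commute with $p$; the same remark shows $p$ commutes with the summands of $H_{eff}$ in (\ref{effect-H}) supported on $\textrm{span}\{\psi,e_{1},\psi'\}$. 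For the two remaining summands I would use Remark \ref{Z} together with the fact (Proposition \ref{inv-faithful}) that $q=|Z|P_{\{\psi,Z^{*}\psi'\}^{\perp}}$ and $ZqZ^{*}$ are the orthogonal projections onto $\textrm{Im}|Z|\cap V$ and $\textrm{Im}P_{3}\cap V$: since $q\leq|Z|$, $ZqZ^{*}\leq P_{3}$, $Z^{2}=Z^{*2}=0$, $ZZ^{*}=P_{3}$ and $Z^{*}Z=|Z|$, one checks $p|Z|=q=|Z|p$ and $pP_{3}=ZqZ^{*}=P_{3}p$, so $p$ commutes with $|Z|$ and $P_{3}$ and hence with $H_{eff}$. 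Finally, $Z$ and $Z^{*}$ map $W_{D}^{\perp}\cap V$ into itself by Proposition \ref{inv-faithful}(i), which (with $Z^{2}=Z^{*2}=0$) gives $pZ=Zp$ and $pZ^{*}=Z^{*}p$. Thus $p$ commutes with all the generators, as claimed.

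The lemma then follows from an essentially formal fact about a QMS ${\mathcal T}$ with a reducing corner ${\mathcal A}_{p}=p{\mathcal B}(\initsp)p$: because $p$ commutes with the generators, ${\mathcal L}$ leaves ${\mathcal A}_{p}$ invariant and ${\mathcal T}_{t}(x)={\mathcal T}_{p,t}(x)$ for every $x\in{\mathcal A}_{p}$, whence the decoherence-free and fixed-point algebras of the compression are obtained by intersection, ${\mathcal N}({\mathcal T}_{p})={\mathcal N}({\mathcal T})\cap{\mathcal A}_{p}$ and ${\mathcal F}({\mathcal T}_{p})={\mathcal F}({\mathcal T})\cap{\mathcal A}_{p}$. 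Applying this with $p=P_{W_{D}^{\perp}\cap V}$ and invoking ${\mathcal N}({\mathcal T})\subset{\mathcal F}({\mathcal T})$ from Ref.\cite{agq} gives ${\mathcal N}\big({\mathcal T}_{W_{D}^{\perp}\cap V}\big)={\mathcal N}({\mathcal T})\cap{\mathcal A}_{W_{D}^{\perp}\cap V}\subset{\mathcal F}({\mathcal T})\cap{\mathcal A}_{W_{D}^{\perp}\cap V}={\mathcal F}\big({\mathcal T}_{W_{D}^{\perp}\cap V}\big)$, which is the assertion.

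I expect the main obstacle to be the verification that $p$ commutes with the generators, and specifically with $|Z|$ and $P_{3}$: this is exactly where the special algebra of the interference operator ($|Z|$ a projection, $Z^{2}=0$, $TP_{2}T=P_{3}=ZZ^{*}$, $TP_{3}T=|Z|$) is used, and one must be careful that $q$ is genuinely the orthogonal projection onto $\textrm{Im}|Z|\cap V$ (not just $|Z|$ compressed by $P_{\{\psi,Z^{*}\psi'\}^{\perp}}$) so that $pP_{3}=ZqZ^{*}$ holds exactly. As a self-contained alternative not relying on Ref.\cite{agq}, the same commutations show that in the block decomposition $W_{D}^{\perp}\cap V=(\textrm{Im}|Z|\cap V)\oplus(\textrm{Im}P_{3}\cap V)$ the compressed Kraus operators are block-antidiagonal with a single unitary entry $Z|_{\textrm{Im}|Z|\cap V}$ and the compressed effective Hamiltonian is block-scalar; a short direct computation then identifies both ${\mathcal N}$ and ${\mathcal F}$ of the hereditary semigroup with the block-diagonal subalgebra $\{x\oplus ZxZ^{*}\}$, so that in fact ${\mathcal N}\big({\mathcal T}_{W_{D}^{\perp}\cap V}\big)={\mathcal F}\big({\mathcal T}_{W_{D}^{\perp}\cap V}\big)$.
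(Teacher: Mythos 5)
Your argument is correct, but it follows a genuinely different route from the paper's. The paper works directly inside the hereditary algebra: it takes a selfadjoint $a\in{\mathcal N}\big({\mathcal T}_{W_{D}^{\perp}\cap V}\big)$, invokes Theorem 3.2 of Ref.\cite{adffrr} to place $a$ in the commutant ${\mathcal C}_{0}'$ of the (restricted) Kraus operators and their adjoints, deduces from the support conditions and from $[a,Z]=[a,Z^{*}]=0$ that $a$ commutes with $H$ and lies in $\ker{\mathcal L}_{\omega_{1}}\cap\ker{\mathcal L}_{\omega_{2}}\cap\ker{\mathcal L}_{\omega_{3}}$, hence is a fixed point. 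You instead prove that $p=P_{W_{D}^{\perp}\cap V}$ commutes with $H_{eff}$ and with every Kraus operator --- a claim I checked and which does hold, using $p=q+ZqZ^{*}$, $Z^{2}=0$, $q\leq|Z|$, $ZqZ^{*}\leq P_{3}$ and $p\psi=p\psi'=pe_{0}=pe_{1}=0$ --- so that the corner is reducing, the hereditary semigroup is the honest restriction of ${\mathcal T}$, and ${\mathcal N}$ and ${\mathcal F}$ of the compression are obtained by intersecting with ${\mathcal A}_{p}$; the inclusion then transports from ${\mathcal N}({\mathcal T})\subset{\mathcal F}({\mathcal T})$ of Ref.\cite{agq}. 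What each approach buys: yours is shorter and exposes a structural fact (the reducing projection) that the paper only uses implicitly, but its main line depends on the external inclusion for the \emph{full} semigroup from Ref.\cite{agq} (a submitted proceedings paper), whereas the paper's proof is independent of that reference, leaning instead on the ${\mathcal C}_{0}'$ characterization of Ref.\cite{adffrr}. Your closing self-contained alternative --- block-antidiagonal compressed Kraus operators with a single unitary entry and block-scalar compressed Hamiltonian, giving ${\mathcal N}={\mathcal F}=\{x\oplus ZxZ^{*}\}$ --- actually proves the stronger equality and is consistent with Theorem \ref{explicit-inv-states}; either version is acceptable.
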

\begin{proof} Let $a$ be a selfadjoint element $a\in {\mathcal A}_{W_{D}^{\perp}\cap V}$, then $a$ is supported on a subspace of $V$, meaning that $\{e_{1}, \psi, \psi', Z\psi, Z^{*}\psi'\}\subset \ker a$. Also $e_{0}\in\ker a$ since $a$ is supported on $W_{D}^{\perp}$. Then $a$ commutes with $P_{0}$ and $P_{1}$. If moreover $a\in {\mathcal N}\Big({\mathcal T}_{W_{D}^{\perp}\cap V} \Big)$ then $a\in {\mathcal C}_{0}'$, since by Theorem 3.2 in Ref.\cite{adffrr} ${\mathcal N}({\mathcal T}_{W_{D}^{\perp}\cap V})\subset {\mathcal C}_{0}'$, where 
\begin{eqnarray}\label{pre-1} 
\begin{aligned}
{\mathcal C}_{0} &= \{D_{\omega_{\ell}}, D_{\omega_{\ell}}^{*} : \; 1\leq \ell \leq 3\} \\ &= \{|\psi\rangle \langle e_{1}|, |e_{1}\rangle\langle \psi|, Z, Z^{*}, |e_{0}\rangle \langle \psi'|, |\psi'\rangle \langle e_{0}| \}
\end{aligned}
\end{eqnarray} consists of the Kraus operators of the model along with their adjoints, restricted to ${\mathcal A}_{W_{D}^{\perp} \cap V}$. In particular $a$ commutes with $Z$ and $Z^{*}$, hence $a\big(\ker|Z|\big)\subset\ker|Z|$ and $a\big(\textrm{Im}|Z|\big)\subset\textrm{Im}|Z|$. Then $a$ commutes with $P_{2}= \textrm{Im}|Z| \oplus \ker |Z|$, an consequently, it commutes also with $P_{3}= I-P_{0} -P_{1} - P_{2}$. In conclusion $a$ commutes with $H$. 

Now, condition $\{e_{0}, e_{1}, \psi, \psi', Z\psi, Z^{*}\psi'\}\subset \ker a$ implies that $a\in \ker {\mathcal L}_{\omega_{1}}\cap \ker{\mathcal L}_{\omega_{3}}$. Moreover since $a$ commutes with $P_{3}$ and with $|Z|$, $[H_{\omega_{2}}, a]=0$. Commutation of $a$ with $Z, Z^{*}$ implies that $a\in \ker{\mathcal L}_{\omega_{2}}$. Consequently $a$ is a fixed point of ${\mathcal T}_{W_{D}^{\perp}\cap V}$. Since any operator is a linear combination of four selfadjoint elements, the same result holds true for any element $a\in {\mathcal A}_{W_{D}^{\perp}\cap V}$. This finishes the proof. \\
\end{proof}

\begin{corollary}\label{limit-state}
For any initial state $\eta\in {\mathcal A}_{W_{D}^{\perp}\cap V}$ there exists an invariant state $\eta_{\infty}= \lim_{t\to\infty} {\mathcal T}_{W_{D}^{\perp}\cap V, *t}(\eta)\in {\mathcal A}_{W_{D}^{\perp}\cap V}$. 
\end{corollary}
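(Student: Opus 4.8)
The proof is a direct application of the Frigerio--Verri ergodic theorem (Ref.\cite{af,fv}) to the hereditary semigroup $\mathcal{T}_{W_D^\perp\cap V}$ acting on the finite-dimensional algebra $\mathcal{A}_{W_D^\perp\cap V}$. The two ingredients it requires are already in place: by Proposition \ref{inv-faithful}(ii) the state $\rho$ displayed there is invariant for $\mathcal{T}_{W_D^\perp\cap V}$ and has support projection $P_{W_D^\perp\cap V}$, hence is a faithful normal invariant state for the reduced semigroup; and by Lemma \ref{app-equil} together with the always-valid reverse inclusion one has $\mathcal{N}(\mathcal{T}_{W_D^\perp\cap V})=\mathcal{F}(\mathcal{T}_{W_D^\perp\cap V})$.

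First I would confirm that the compression $\mathcal{T}_{W_D^\perp\cap V,t}=P_{W_D^\perp\cap V}\mathcal{T}_t P_{W_D^\perp\cap V}$ is genuinely an identity-preserving quantum Markov semigroup on $\mathcal{A}_{W_D^\perp\cap V}$. Since $P_{W_D^\perp\cap V}$ is the support of an invariant state it is subharmonic, so the compression is at least a sub-Markov semigroup; conservativity then follows because the faithful invariant state $\rho$ forces $\mathcal{T}_{W_D^\perp\cap V,t}(P_{W_D^\perp\cap V})=P_{W_D^\perp\cap V}$ (otherwise $\textrm{tr}(\rho\,\mathcal{T}_{W_D^\perp\cap V,t}(P_{W_D^\perp\cap V}))<1$, contradicting invariance of $\rho$ and its faithfulness on the reduced algebra). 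Equivalently, one checks directly with the explicit Kraus operators $D_{\omega_1},D_{\omega_2},D_{\omega_3}$, the relations $ZZ^*=P_3$ and $|Z|=Z^*Z$, and the fact that $Z,Z^*$ restrict to unitaries between $\textrm{Im}|Z|\cap\{\psi,Z^*\psi'\}^\perp$ and $\textrm{Im}P_3\cap\{\psi',Z\psi\}^\perp$ (Proposition \ref{inv-faithful}(i)), that $P_{W_D^\perp\cap V}$ is harmonic.

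With these facts the Frigerio--Verri theorem applies verbatim and yields, for every normal state $\eta\in\mathcal{A}_{W_D^\perp\cap V}$, the existence of $\eta_\infty=\lim_{t\to\infty}\mathcal{T}_{W_D^\perp\cap V,*t}(\eta)$, which is an invariant state; it is automatically supported in $W_D^\perp\cap V$ since the whole evolution stays inside that hereditary subalgebra. The only delicate point is the conservativity check of the first step; everything else is a citation. One could also avoid even that point by running the argument on the full fast recurrent subspace $\mathcal{R}=W_D\oplus(W_D^\perp\cap V)$, where a faithful invariant state exists by the corollary to Proposition \ref{inv-faithful}, and then peeling off the $W_D$-block by means of Theorem \ref{theorem-general-hamiltonian-11} and the harmonicity of $P_{W_D}$.
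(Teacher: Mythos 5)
Your proof is correct and follows essentially the same route as the paper, which likewise deduces the corollary from Lemma \ref{app-equil}, Proposition \ref{inv-faithful} and the Frigerio--Verri theorem. The only difference is that you explicitly verify conservativity of the compressed semigroup on the hereditary subalgebra, a detail the paper leaves implicit; this is a harmless and reasonable addition.
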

\begin{proof}
This is a consequence of Lemma \ref{app-equil}, Proposition \ref{inv-faithful} and the previously cited result of Frigerio and Verri\cite{af,fv}.
\end{proof}
 Now we shall give a full description of the evolution of any state supported on $W_{D}^{\perp} \cap V$. To do so, we start by describing explicitly the evolution of states supported either on $\textrm{Im}|Z|\cap \{\psi, Z^{*}\psi'\}^{\perp}$ or on $\textrm{Im}P_{3}\cap \{\psi', Z\psi\}^{\perp}$. 

\begin{lemma}\label{portions}
\begin{itemize}
\item[(i)] Let $\sigma$ be a state supported on $\textrm{Im}|Z|\cap\{\psi, Z^{*}\psi'\}^{\perp}$. Then ${\mathcal T}_{*t}(\sigma)$ converges to 
\[\frac{\Gamma_{\textrm{Re}, +, \omega_{2}}}{\Gamma_{\textrm{Re}, +, \omega_{2}} + \Gamma_{\textrm{Re}, -, \omega_{2}}} \sigma +   \frac{\Gamma_{\textrm{Re}, -, \omega_{2}}}{\Gamma_{\textrm{Re}, +, \omega_{2}} + \Gamma_{\textrm{Re}, -, \omega_{2}}} \; Z\sigma Z^{*}\]  as $t\to\infty$.
\item[(ii)] Let $\eta$ be a state supported on $\textrm{Im}P_{3}\cap\{\psi', Z\psi\}^{\perp}$. Then ${\mathcal T}_{*t} (\eta)$ converges to 
\[\frac{\Gamma_{\textrm{Re}, +, \omega_{2}}}{\Gamma_{\textrm{Re}, +, \omega_{2}} + \Gamma_{\textrm{Re}, -, \omega_{2}}} Z^{*}\eta Z +   \frac{\Gamma_{\textrm{Re}, -, \omega_{2}}}{\Gamma_{\textrm{Re}, +, \omega_{2}} + \Gamma_{\textrm{Re}, -, \omega_{2}}} \; \eta\]  as $t\to\infty$
\end{itemize} 
\end{lemma}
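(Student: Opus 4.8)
The plan is to reduce the evolution of these states to a two–state classical birth–death chain, whose long–time limit can be read off explicitly. For (i), fix $\sigma$ supported on $\textrm{Im}|Z|\cap\{\psi,Z^{*}\psi'\}^{\perp}$ and set $\tau:=Z\sigma Z^{*}$. By Proposition \ref{inv-faithful}(i), $Z$ restricts to a unitary from $\textrm{Im}|Z|\cap\{\psi,Z^{*}\psi'\}^{\perp}$ onto $\textrm{Im}P_{3}\cap\{\psi',Z\psi\}^{\perp}$, so $\tau$ is a state supported on $\textrm{Im}P_{3}\cap\{\psi',Z\psi\}^{\perp}$. In particular both $\sigma$ and $\tau$ are supported on subspaces of $V\cap W_{D}^{\perp}$, and one records the elementary relations $\{e_{0},e_{1},\psi,\psi'\}\subset\ker\sigma\cap\ker\tau$, $|Z|\sigma=\sigma|Z|=\sigma$, $P_{3}\sigma=\sigma P_{3}=0$, $P_{3}\tau=\tau P_{3}=\tau$ and $|Z|\tau=\tau|Z|=0$ (the last using $|Z|P_{3}=Z^{*}Z^{2}Z^{*}=0$ from Remark \ref{Z}).

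First I would check that $\mathcal{L}_{\omega_{1}*}$ and $\mathcal{L}_{\omega_{3}*}$ annihilate both $\sigma$ and $\tau$: the operators $D_{\omega_{1}}^{*}D_{\omega_{1}}$, $D_{\omega_{1}}D_{\omega_{1}}^{*}$, $D_{\omega_{3}}^{*}D_{\omega_{3}}$, $D_{\omega_{3}}D_{\omega_{3}}^{*}$ are proportional to the rank–one projections onto $e_{1},\psi,\psi',e_{0}$, and the $\omega_{1}$– and $\omega_{3}$–parts of the effective Hamiltonian in (\ref{effect-H}) are supported on $\textrm{span}\{e_{1},\psi\}$ and $\textrm{span}\{\psi'\}$; since each of $e_{0},e_{1},\psi,\psi'$ lies in $\ker\sigma\cap\ker\tau$, every summand of $\mathcal{L}_{\omega_{1}*}$ and of $\mathcal{L}_{\omega_{3}*}$ vanishes on $\sigma$ and on $\tau$. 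Hence $\mathcal{L}_{*}(\sigma)=\mathcal{L}_{\omega_{2}*}(\sigma)$ and $\mathcal{L}_{*}(\tau)=\mathcal{L}_{\omega_{2}*}(\tau)$. Substituting $\sigma$ and $\tau$ into the expression for $\mathcal{L}_{\omega_{2}*}(\rho)$ displayed in (\ref{inv-omega-two}) (valid for any state supported on $V\cap W_{D}^{\perp}$), and using $Z^{*}\sigma Z=0$, $Z\tau Z^{*}=Z^{2}\sigma Z^{*2}=0$, $Z^{*}\tau Z=|Z|\sigma|Z|=\sigma$ together with $[P_{3},\sigma]=[|Z|,\sigma]=[P_{3},\tau]=[|Z|,\tau]=0$, one obtains
\[
\mathcal{L}_{*}(\sigma)=2\,\Gamma_{\textrm{Re},-,\omega_{2}}(\tau-\sigma),\qquad
\mathcal{L}_{*}(\tau)=2\,\Gamma_{\textrm{Re},+,\omega_{2}}(\sigma-\tau).
\]

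Thus $\textrm{span}\{\sigma,\tau\}$ is invariant under $\mathcal{L}_{*}$, so $\mathcal{T}_{*t}(\sigma)=f(t)\sigma+g(t)\tau$ with $f(0)=1$, $g(0)=0$; writing $a=2\Gamma_{\textrm{Re},-,\omega_{2}}$ and $b=2\Gamma_{\textrm{Re},+,\omega_{2}}$, one has $\dot f=-af+bg$, $\dot g=af-bg$, hence $f+g\equiv1$ (conservation of trace) and $\dot f=-(a+b)f+b$, giving $f(t)=\frac{b}{a+b}+\frac{a}{a+b}e^{-(a+b)t}\to\frac{b}{a+b}$ and $g(t)\to\frac{a}{a+b}$ as $t\to\infty$ (recall $a+b>0$). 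Since $\tau=Z\sigma Z^{*}$ and $\frac{b}{a+b}=\frac{\Gamma_{\textrm{Re}, +, \omega_{2}}}{\Gamma_{\textrm{Re}, +, \omega_{2}} + \Gamma_{\textrm{Re}, -, \omega_{2}}}$, this is (i). Part (ii) is the mirror image: for $\eta$ supported on $\textrm{Im}P_{3}\cap\{\psi',Z\psi\}^{\perp}$, set $\eta':=Z^{*}\eta Z$; by Proposition \ref{inv-faithful}(i) this is a state supported on $\textrm{Im}|Z|\cap\{\psi,Z^{*}\psi'\}^{\perp}$ with $Z\eta'Z^{*}=ZZ^{*}\eta ZZ^{*}=P_{3}\eta P_{3}=\eta$, so the pair $(\eta',\eta)$ plays the role of $(\sigma,\tau)$ above, and solving the same linear system with initial data $f(0)=0$, $g(0)=1$ yields the limit $\frac{b}{a+b}\eta'+\frac{a}{a+b}\eta$, which is the stated expression.

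The one point requiring genuine care is the vanishing claim in the second paragraph — in particular $\mathcal{L}_{\omega_{3}*}(\tau)=0$, which would fail if $\tau$ were allowed a component along $\psi'$, and is precisely why the lemma restricts to $\textrm{Im}P_{3}\cap\{\psi',Z\psi\}^{\perp}$ rather than to all of $P_{3}(\initsp)$ (similarly $\sigma\perp\psi$ is needed for $\mathcal{L}_{\omega_{1}*}(\sigma)=0$, and $\sigma\perp Z^{*}\psi'$ is needed so that $\tau\perp\psi'$). Once the four distinguished vectors are correctly sorted into $\ker\sigma$ and $\ker\tau$, the remaining steps are routine substitutions using $Z^{2}=Z^{*2}=0$, $ZZ^{*}=P_{3}$ and $Z^{*}Z=|Z|$ from Remark \ref{Z}, and the scalar $2\times2$ ODE above; one may also note for consistency that both limits have exactly the form prescribed by Theorem \ref{explicit-inv-states}.
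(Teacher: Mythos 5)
Your proof is correct and follows essentially the same route as the paper's: both reduce the dynamics on $\mathrm{span}\{\sigma, Z\sigma Z^{*}\}$ to a two--state classical chain with rates $a,b$ and read off the stationary distribution $\big(\tfrac{b}{a+b},\tfrac{a}{a+b}\big)$, and you additionally supply the verification (that ${\mathcal L}_{\omega_{1}*}$ and ${\mathcal L}_{\omega_{3}*}$ annihilate both $\sigma$ and $Z\sigma Z^{*}$, and that their span is ${\mathcal L}_{*}$-invariant) which the paper only asserts. The one discrepancy is the factor $(N-1)$ in the rates (the paper's $a=2(N-1)\Gamma_{\textrm{Re},-,\omega_{2}}$, $b=2(N-1)\Gamma_{\textrm{Re},+,\omega_{2}}$ versus your $a=2\Gamma_{\textrm{Re},-,\omega_{2}}$, $b=2\Gamma_{\textrm{Re},+,\omega_{2}}$, reflecting an inconsistency between the normalization of the Kraus operators $L^{\omega_{2}}_{1,2}$ and the display (\ref{inv-omega-two}) that you worked from), which changes only the speed of convergence and not the limit, so your conclusion stands.
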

\begin{proof}
For any observable $x\in{\mathcal B}(\initsp)$ define $x_{t}= \textrm{tr}\big(x{\mathcal T}_{*t}(\sigma)\big)$ and $y_{t}= \textrm{tr}\big(x{\mathcal T}_{*t}(Z \sigma Z^{*})\big)$. Then after derivation with respect to $t$ we get that 
\begin{eqnarray}\label{system}
\begin{aligned}
\dot{x}_{t} = & -a x_{t} + a y_{t} \\ 
\dot{y}_{t} = & \; b x_{t} - b y_{t} 
\end{aligned}
\end{eqnarray} with $a= 2 (N-1)\Gamma_{\textrm{Re}, -, \omega_{2}}$ and $b=2 (N-1)\Gamma_{\textrm{Re}, +, \omega_{2}}$. Therefore $(x_{t}, y_{t})$ is a classical random walk with $Q$-matrix 
\begin{eqnarray}
Q= \begin{array}{cc}\left(\begin{array}{cc} -a & a \\b & -b 
\end{array}\right)\end{array}
\end{eqnarray} and stationary probability distribution $\pi=\big(\frac{b}{a+b}, \frac{a}{a+b} \big)$. Indeed, after solving system (\ref{system}) one gets, 
\[x_{t}= \textrm{tr} \Big(x \big(\frac{b + a e^{-t(a+b)}}{a+b} \sigma + \frac{a - a e^{-t(a+b)}}{a+b} Z\sigma Z^{*}\big) \Big)\] which implies that \[{\mathcal T}_{*t}(\sigma)= \frac{b + a e^{-t(a+b)}}{a+b} \sigma + \frac{a - a e^{-t(a+b)}}{a+b} Z\sigma Z^{*}\mapsto \frac{b}{a+b} \sigma + \frac{a}{a+b} Z\sigma Z^{*}\] as $t\to\infty$. This proves $(i)$.

To proof $(ii)$ take $\sigma = Z^{*}\eta Z$ and apply $(i)$. This finishes the proof.
\end{proof}

\begin{theorem}\label{approach-equilibrium}
Let $\eta$ be an initial state supported on $W_{D}^{\perp}\cap V$. Then 
\begin{eqnarray}
\begin{aligned}
\lim_{t\to\infty} {\mathcal T}_{*t}(\eta) =  \; \frac{\Gamma_{\textrm{Re}, +, \omega_{2}}}{\Gamma_{\textrm{Re}, +, \omega_{2}} + \Gamma_{\textrm{Re}, -, \omega_{2}}} \sigma + \; \frac{\Gamma_{\textrm{Re}, -, \omega_{2}}}{\Gamma_{\textrm{Re}, +, \omega_{2}} + \Gamma_{\textrm{Re}, -, \omega_{2}}} Z \sigma Z^{*}
\end{aligned}
\end{eqnarray} with $\sigma = \frac{1}{\textrm{tr}(q\eta)}q \eta q + \frac{1}{\textrm{tr}(P_{3}\eta)} Z^{*}\eta Z$.
\end{theorem}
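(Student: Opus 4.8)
The plan is to reduce everything to the orthogonal splitting $W_{D}^{\perp}\cap V = K_{2}\oplus K_{3}$ provided by Lemma \ref{projections}(iv), where $K_{2}:=\textrm{Im}|Z|\cap\{\psi,Z^{*}\psi'\}^{\perp}$ and $K_{3}:=\textrm{Im}P_{3}\cap\{\psi',Z\psi\}^{\perp}$, together with Proposition \ref{inv-faithful}, which says that $q$ is the orthogonal projection onto $K_{2}$, that $ZqZ^{*}$ is the one onto $K_{3}$, and that $Z\colon K_{2}\to K_{3}$ and $Z^{*}\colon K_{3}\to K_{2}$ are mutually inverse unitaries. First I would write the initial state in block form
\[
\eta = \eta_{22}+\eta_{23}+\eta_{32}+\eta_{33},\qquad \eta_{22}=q\eta q,\quad \eta_{33}=(ZqZ^{*})\eta(ZqZ^{*}),\quad \eta_{32}=\eta_{23}^{*},
\]
with $\eta_{22}$ supported on $K_{2}$, $\eta_{33}$ on $K_{3}$, and $\eta_{23},\eta_{32}$ the off--diagonal parts. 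From $q\le|Z|$ and $Z^{*2}=0$ one gets $Z^{*}q=0=qZ$, hence the bookkeeping identities $q\eta q=\eta_{22}$ and $Z^{*}\eta Z = Z^{*}\eta_{33}Z$, so the operator in the statement is $\sigma = q\eta q + Z^{*}\eta Z = \eta_{22}+Z^{*}\eta_{33}Z$, a state supported on $K_{2}$ obtained by folding the $K_{3}$--part of $\eta$ back onto $K_{2}$ through the unitary $Z^{*}$ (the two fractions $1/\textrm{tr}(q\eta)$, $1/\textrm{tr}(P_{3}\eta)$ are the normalizations turning $q\eta q$ and $Z^{*}\eta Z$ into states, whose convex combination with weights $\textrm{tr}(q\eta)$, $\textrm{tr}(P_{3}\eta)$, which sum to $1$, is exactly $q\eta q + Z^{*}\eta Z$).

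The diagonal blocks are then handled directly by Lemma \ref{portions}: extending that lemma from normalized states to arbitrary positive trace--class operators by homogeneity, and using linearity of ${\mathcal T}_{*t}$, one obtains
\[
\lim_{t\to\infty}{\mathcal T}_{*t}(\eta_{22}) = c_{+}\eta_{22}+c_{-}Z\eta_{22}Z^{*},\qquad
\lim_{t\to\infty}{\mathcal T}_{*t}(\eta_{33}) = c_{+}Z^{*}\eta_{33}Z+c_{-}\eta_{33},
\]
with $c_{\pm}=\Gamma_{\textrm{Re},\pm,\omega_{2}}/(\Gamma_{\textrm{Re},+,\omega_{2}}+\Gamma_{\textrm{Re},-,\omega_{2}})$.

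The heart of the argument is the off--diagonal block, which I expect to decay exponentially. For $\rho$ with $\rho=P_{3}\rho P_{2}$ and support in $V$ (in particular $\rho=\eta_{32}$) one has $Z\rho=0$ (as $\rho=P_{3}\rho$ and $ZP_{3}=0$) and $\rho Z=0$ (as $\rho=\rho q$ and $qZ=0$), hence the two jump terms $Z^{*}\rho Z$ and $Z\rho Z^{*}$ vanish; also $|Z|\rho=0$, $\rho|Z|=\rho$, $P_{3}\rho=\rho$, $\rho P_{3}=0$, so $\{|Z|,\rho\}=\rho=\{P_{3},\rho\}$, while $\Delta_{\omega_{1}},\Delta_{\omega_{3}}$ commute with $\rho$ and $[\Delta_{\omega_{2}},\rho]$ is a scalar multiple of $\rho$ (since $\Delta_{\omega_{2}}$ is a linear combination of $P_{3}$ and $|Z|$). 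Plugging this into (\ref{df-Lomega0a}) gives ${\mathcal L}_{*}(\rho)=\mu\,\rho$ with $\re\mu=-\tfrac12(\Gamma_{-,\omega_{2}}+\Gamma_{+,\omega_{2}})<0$, so ${\mathcal T}_{*t}(\rho)=e^{\mu t}\rho\to 0$; applying this to $\eta_{32}$ and, after taking adjoints, to $\eta_{23}$, we get $\lim_{t}{\mathcal T}_{*t}(\eta_{23}+\eta_{32})=0$. (Alternatively, the mere existence of $\lim_{t}{\mathcal T}_{*t}(\eta)$ is already granted by Corollary \ref{limit-state}, and this off--diagonal vanishing can be deduced from Theorem \ref{explicit-inv-states} plus invariance of the off--diagonal subspace; but the direct computation above is self-contained.)

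Finally I would assemble by linearity: $\lim_{t}{\mathcal T}_{*t}(\eta)=c_{+}(\eta_{22}+Z^{*}\eta_{33}Z)+c_{-}(Z\eta_{22}Z^{*}+\eta_{33})$, and using $ZZ^{*}=P_{3}$ and $P_{3}\eta_{33}P_{3}=\eta_{33}$ one checks $Z\sigma Z^{*}=Z\eta_{22}Z^{*}+\eta_{33}$, so the limit equals $c_{+}\sigma+c_{-}Z\sigma Z^{*}$ with $\sigma=q\eta q+Z^{*}\eta Z$, as claimed. The main obstacle is precisely the off--diagonal step: one must verify carefully that the space of operators $\rho=P_{3}\rho P_{2}$ supported on $V$ is ${\mathcal L}_{*}$--invariant and that ${\mathcal L}_{*}$ acts on it as multiplication by a scalar with strictly negative real part; everything else is linearity plus the already established Lemma \ref{portions}, Lemma \ref{projections} and Proposition \ref{inv-faithful}.
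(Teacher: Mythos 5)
Your proposal is correct, and it follows the same block decomposition $\eta=q\eta q+ZqZ^{*}\eta ZqZ^{*}+(\text{off-diagonal})$ and the same use of Lemma \ref{portions} for the two diagonal blocks as the paper does; where you genuinely diverge is in the treatment of the off-diagonal block, which is in fact the only delicate point. The paper first invokes Corollary \ref{limit-state} to know that $\lim_{t}{\mathcal T}_{*t}(\eta)$ exists and, by Theorem \ref{explicit-inv-states}, has the form $c_{+}\sigma+c_{-}Z\sigma Z^{*}$; it then \emph{compares} this with the limits of the two diagonal blocks and deduces a posteriori that ${\mathcal T}_{*t}\big(q\eta ZqZ^{*}+ZqZ^{*}\eta q\big)\to 0$. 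You instead prove this directly: you check that the subspace of operators of the form $P_{3}\rho P_{2}$ supported on $V$ is mapped by ${\mathcal L}_{*}$ to a scalar multiple of itself (all jump terms $Z\rho Z^{*}$, $Z^{*}\rho Z$ vanish because $Z\rho=0=\rho Z$, the anticommutators reduce to $\rho$, and $\Delta_{\omega_{2}}$ contributes only a phase), so that ${\mathcal T}_{*t}(\rho)=e^{\mu t}\rho$ with $\re\mu<0$. Your verification of the needed identities ($Z^{*}q=0=qZ$ from $q\le|Z|$ and $Z^{2}=0=Z^{*2}$, $\{|Z|,\rho\}=\rho=\{P_{3},\rho\}$, etc.) is sound, and this route is more self-contained — it does not need the existence of the limit from Frigerio--Verri nor the structure theorem, and it yields an explicit exponential decay rate for the coherences. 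A further point in your favour: you correctly note that the two normalizing fractions in the stated $\sigma$ would give $\operatorname{tr}\sigma=2$ if read literally, and that the intended object is the convex combination $q\eta q+Z^{*}\eta Z$ with weights $\operatorname{tr}(q\eta)+\operatorname{tr}(P_{3}\eta)=1$; the paper's own comparison step silently drops these weights, so your reading repairs a small normalization slip in the original argument. The only cosmetic caveat is that your constant $\re\mu=-\tfrac12(\Gamma_{-,\omega_{2}}+\Gamma_{+,\omega_{2}})$ should carry the factor $2(N-1)$ coming from the normalization of the Kraus operators $L^{\omega_{2}}_{k}$, but this does not affect its strict negativity nor the conclusion.
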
 
\begin{proof}
By Corollary \ref{limit-state} there exists an invariant state 
\begin{eqnarray}\label{rho-infty}
\eta_{\infty}= \frac{\Gamma_{\textrm{Re}, +, \omega_{2}}}{\Gamma_{\textrm{Re}, +, \omega_{2}} + \Gamma_{\textrm{Re}, -, \omega_{2}}} \sigma + \frac{\Gamma_{\textrm{Re}, -, \omega_{2}}}{\Gamma_{\textrm{Re}, +, \omega_{2}} + \Gamma_{\textrm{Re}, -, \omega_{2}}}Z\sigma Z^{*}
\end{eqnarray} where $\sigma$ is a state supported on $\textrm{Im}|Z|\cap\{\psi, Z^{*}\psi'\}^{\perp}$, such that $\lim_{t\to\infty}{\mathcal T}_{*t}(\eta)=\eta_{\infty}$. 

Every initial state $\eta$ supported on $W_{D}^{\perp}\cap V$ may be written in the form
\begin{eqnarray}\label{initial}
\eta= (q + ZqZ^{*}) \eta (q + ZqZ^{*})= q\eta q+ ZqZ^{*}\eta ZqZ^{*} + q\eta ZqZ^{*} + ZqZ^{*}\eta q
\end{eqnarray} The state $\frac{1}{\textrm{tr}(q \eta)} q\eta q$ is supported on $\textrm{Im}|Z|\cap\{\psi, Z^{*}\psi'\}^{\perp}$ and due to the above Lemma \ref{portions}, it is driven by the semigroup to the limit state 
\begin{eqnarray}\label{portion-1}
\frac{1}{\textrm{tr}(q \eta)} \Big(\frac{\Gamma_{\textrm{Re}, +, \omega_{2}}}{\Gamma_{\textrm{Re}, +, \omega_{2}} + \Gamma_{\textrm{Re}, -, \omega_{2}}}  q\eta q + \frac{\Gamma_{\textrm{Re}, -, \omega_{2}}}{\Gamma_{\textrm{Re}, +, \omega_{2}} + \Gamma_{\textrm{Re}, -, \omega_{2}}}  Zq\eta q Z^{*}\Big)
\end{eqnarray} as $t\to\infty$. Analogously, due to the same Lemma \ref{portions}, the state $\frac{1}{\textrm{tr}(P_{3}\eta)} ZqZ^{*}\eta ZqZ^{*}= \frac{1}{\textrm{tr}(P_{3}\eta)}P_{3}\eta P_{3}$ is supported on $\textrm{Im}P_{3}\cap\{\psi', Z\psi\}^{\perp}$ and 
\begin{eqnarray}\label{portion-2}
\begin{aligned}
\lim_{t\to\infty} {\mathcal T}_{*t}\big(\frac{1}{\textrm{tr}(P_{3}\eta)}P_{3}\eta P_{3}\big) = & 
\frac{1}{\textrm{tr}(P_{3}\eta)} \Big(\frac{\Gamma_{\textrm{Re}, +, \omega_{2}}}{\Gamma_{\textrm{Re}, +, \omega_{2}} + \Gamma_{\textrm{Re}, -, \omega_{2}}}  Z^{*}P_{3}\eta P_{3}Z \\ + & \frac{\Gamma_{\textrm{Re}, -, \omega_{2}}}{\Gamma_{\textrm{Re}, +, \omega_{2}} + \Gamma_{\textrm{Re}, -, \omega_{2}}}  P_{3}\eta P_{3}\Big)
\end{aligned}
\end{eqnarray}
Comparing (\ref{rho-infty}) with (\ref{portion-1}) and (\ref{portion-2}), we conclude that 
\[\lim_{t\to\infty}{\mathcal T}_{*t}\big(q\eta ZqZ^{*} + ZqZ^{*}\eta q\big)=0,\] 
\[\sigma = \frac{1}{\textrm{tr}(q \eta)} q\eta q + \frac{1}{\textrm{tr}(P_{3} \eta)} Z^{*}P_{3}\eta P_{3}Z\] and consequently, 
\[Z\sigma Z^{*}= \frac{1}{\textrm{tr}(q 
\eta)} Zq\eta q Z^{*}+ \frac{1}{\textrm{tr}(P_{3} \eta)} P_{3}\eta P_{3}\] This finishes the proof.
 \end{proof}
 
\begin{corollary}
The attraction domain of the invariant state \[\rho = \frac{\Gamma_{\textrm{Re}, +, \omega_{2}}}{\Gamma_{\textrm{Re}, +, \omega_{2}} + \Gamma_{\textrm{Re}, -, \omega_{2}}} \sigma +   \frac{\Gamma_{\textrm{Re}, -, \omega_{2}}}{\Gamma_{\textrm{Re}, +, \omega_{2}} + \Gamma_{\textrm{Re}, -, \omega_{2}}} \; Z\sigma Z^{*}\] where $\sigma$ is a state supported on $\textrm{Im}|Z|\cap\{\psi, Z^{*}\psi'\}^{\perp}$, consists of all those initial states $\eta$ such that 
\[\sigma = \frac{1}{\textrm{tr}(q \eta)} q\eta q + \frac{1}{\textrm{tr}(P_{3} \eta)} Z^{*}P_{3}\eta P_{3}Z\] 
\end{corollary}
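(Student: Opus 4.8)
The plan is to read the statement off directly from Theorem~\ref{approach-equilibrium}. Recall that the attraction domain of an invariant state $\rho$ is the set of initial states $\eta$ for which $\lim_{t\to\infty}{\mathcal T}_{*t}(\eta)=\rho$. As throughout this subsection (cf. Corollary~\ref{limit-state} and Theorem~\ref{approach-equilibrium}), the operative setting is the hereditary subalgebra ${\mathcal A}_{W_{D}^{\perp}\cap V}$: any component of an initial state supported on $W_{D}$ is invariant by Theorem~\ref{theorem-general-hamiltonian-11}, so it cannot be absorbed into $\rho$, and thus the only initial states that can be attracted to $\rho$ are those supported on $W_{D}^{\perp}\cap V$. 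So fix such an $\eta$.

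First I would invoke Theorem~\ref{approach-equilibrium}, which gives $\lim_{t\to\infty}{\mathcal T}_{*t}(\eta)=\frac{\Gamma_{\textrm{Re},+,\omega_{2}}}{\Gamma_{\textrm{Re},+,\omega_{2}}+\Gamma_{\textrm{Re},-,\omega_{2}}}\,\sigma_{\eta}+\frac{\Gamma_{\textrm{Re},-,\omega_{2}}}{\Gamma_{\textrm{Re},+,\omega_{2}}+\Gamma_{\textrm{Re},-,\omega_{2}}}\,Z\sigma_{\eta}Z^{*}$, where $\sigma_{\eta}=\frac{1}{\textrm{tr}(q\eta)}\,q\eta q+\frac{1}{\textrm{tr}(P_{3}\eta)}\,Z^{*}P_{3}\eta P_{3}Z$ is a state supported on $\textrm{Im}|Z|\cap\{\psi,Z^{*}\psi'\}^{\perp}=\textrm{Im}|Z|\cap V$. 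Next I would note that $\sigma$ is uniquely recoverable from $\rho$, so that $\lim_{t\to\infty}{\mathcal T}_{*t}(\eta)=\rho$ is genuinely equivalent to $\sigma_{\eta}=\sigma$ and not merely to the two limits sharing this functional form: compressing by $q$ and using that $q$ restricts to the identity on $\textrm{Im}|Z|\cap V\supseteq\textrm{supp}\,\sigma$, while $q\big(Z\sigma Z^{*}\big)q=0$ because $\textrm{Im}(Z\sigma Z^{*})\subseteq P_{3}(\initsp)$ is orthogonal to $\textrm{Im}\,q\subseteq P_{2}(\initsp)$, one obtains $q\rho q=\frac{\Gamma_{\textrm{Re},+,\omega_{2}}}{\Gamma_{\textrm{Re},+,\omega_{2}}+\Gamma_{\textrm{Re},-,\omega_{2}}}\,\sigma$, which pins down $\sigma$ (and the same compression applied to the limit recovers $\sigma_{\eta}$). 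Hence $\eta$ lies in the attraction domain of $\rho$ if and only if $\sigma=\sigma_{\eta}=\frac{1}{\textrm{tr}(q\eta)}\,q\eta q+\frac{1}{\textrm{tr}(P_{3}\eta)}\,Z^{*}P_{3}\eta P_{3}Z$, which is exactly the asserted characterization.

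The only delicate point — and I expect no real obstacle beyond it — is the bookkeeping of the degenerate cases $\textrm{tr}(q\eta)=0$ or $\textrm{tr}(P_{3}\eta)=0$: in those cases $\eta$ is supported entirely on one of the two mutually orthogonal pieces $\textrm{Im}|Z|\cap\{\psi,Z^{*}\psi'\}^{\perp}$ or $\textrm{Im}P_{3}\cap\{\psi',Z\psi\}^{\perp}$, the displayed formula for $\sigma_{\eta}$ is to be read with the vanishing term deleted, and the limit is then supplied directly by Lemma~\ref{portions} rather than by Theorem~\ref{approach-equilibrium}. Reconciling these boundary cases with the general formula, together with the compression identity above, completes the argument.
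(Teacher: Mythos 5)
Your proposal is correct and follows the same route as the paper, whose proof is simply ``Immediate from the above Theorem'': you read the characterization off from Theorem~\ref{approach-equilibrium} and add the (sound) observations that $\sigma$ is recoverable from $\rho$ by compression with $q$ and that the degenerate cases $\textrm{tr}(q\eta)=0$ or $\textrm{tr}(P_{3}\eta)=0$ are covered by Lemma~\ref{portions}. These elaborations are left implicit in the paper but do not constitute a different argument.
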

\begin{proof}
Immediate from the above Theorem.
\end{proof}

\begin{remark}\label{transport}\textbf{(Transport of states)}
Due to the result of Theorem \ref{limit-state} the total probability of an initial state $\eta$ is redistributed in the limit as $t\to\infty$; so that the probability in the portion of the final state $\eta_{\infty}$ supported on $\textrm{Im}|Z|\cap \{\psi, Z^{*}\psi'\}^{\perp}$ is \[\textrm{tr}\big(q \eta_{\infty}\big)= \frac{\Gamma_{\textrm{Re}, +, \omega_{2}}}{\Gamma_{\textrm{Re}, +, \omega_{2}} + \Gamma_{\textrm{Re}, -, \omega_{2}}}\] and the probability in the portion of the final state $\eta_{\infty}$ supported on $\textrm{Im}P_{3}\cap \{\psi', Z\psi\}^{\perp}$ is \[\textrm{tr}\big(P_{3} \eta_{\infty}\big)= \frac{\Gamma_{\textrm{Re}, -, \omega_{2}}}{\Gamma_{\textrm{Re}, +, \omega_{2}} + \Gamma_{\textrm{Re}, -, \omega_{2}}}\] 

By adjusting the values of the $\Gamma$'s, so that $\frac{\Gamma_{\textrm{Re}, +, \omega_{2}}} {\Gamma_{\textrm{Re}, -, \omega_{2}}}=e^{-\beta(\omega_{2})\omega_{2}}\to 0$ (equivalently $\beta(\omega_{2})\to\infty$), one can ``transport'' any initial state $\eta$ to a limit state $\eta_{\infty}$ concentrated (but non-necessarily supported) on the subspace $\textrm{Im}P_{3}\cap \{\psi', Z\psi\}^{\perp}$. In particular any initial state supported on $\textrm{Im}|Z|\cap \{\psi, Z^{*}\psi'\}^{\perp}$ can be transported, as $t\to\infty$, to a limit state concentrated on the subspace $\textrm{Im}P_{3}\cap \{\psi', Z\psi\}^{\perp}$. In this sense our modified AKV's model allows us to ``transport'' probability mass from any initial state to a state concentrated on $\textrm{Im}P_{3}\cap \{\psi', Z\psi\}^{\perp}$.  
\end{remark}

\begin{remark}\label{energy-gain}\textbf{(Energy gain)}
A simple computation shows that if at $t=0$ the system is in any initial state $\rho$ supported on the subspace $\textrm{Im}|Z|\cap \{\psi, Z^{*}\psi'\}^{\perp}$, then as $t\to\infty$ the energy of the system increases and the \textit{energy gain} during the process is given by \[\textrm{tr}(\rho_{\infty} H_{eff})- \textrm{tr}(\rho H_{eff})= (N-1) \frac{\Gamma_{\textrm{Re}, -, \omega_{2}}}{\Gamma_{\textrm{Re}, +, \omega_{2}} + \Gamma_{\textrm{Re}, -, \omega_{2}}} (\Gamma_{\textrm{Im},+,\omega_{2}}+ \Gamma_{\textrm{Im},-,\omega_{2}})\] which is independent of $\rho$. This is an indication that degenerate open systems (with a degenerate reference Hamiltonian) seems to be appropriate for modeling effective quantum energy transfer in photosynthesis, see Ref.\cite{Scholes-et-al} and the reference therein. We will continue the analysis of energy transfer when the initial state of the system is arbitrary in forthcoming work.
\end{remark}
 
\section*{Acknowledgment}
The authors are grateful to Franco Fagnola who suggested to work in this problem and gave valuable insight. The financial support from CONACYT-Mexico (Grant 221873) and PRODEP Red de Analisis Italia-UAM, is gratefully acknow\-led\-ged


\begin{thebibliography}{00}
\bibitem{AccardiFQ} Accardi L., Fagnola F. and Quezada R., {\em On three new principles in non-equilibrium statistical mechanics and Markov semigroups of weak coupling limit type}, Infin. Dimens. Anal. Quantum Probab. Relat. Top. \textbf{19} (2) (2016) 1650009 (37 pages), DOI: 10.1142/S0219025716500090. 

\bibitem{AgFfRr}
J. Agredo, F. Fagnola and R. Rebolledo,
{\em Decoherence-free subspaces of a quantum Markov semigroup},  {\it J. Math. Phys}, {\bf 55}, (2014),11--32.


\bibitem{agq}
Agredo J., Garc\'ia J.C. and Quezada R., {\em Decoherence in an excitation energy transport model}, Submitted to Proceedings of the 38th International Conference on Quantum Probability, Tokyo University of Science, 2019.

\bibitem{Arefeva} Aref'eva, Y., Volovich, I. and  Kozyrev, S. 2015. {\em Stochastic Limit Method and Interference in Quantum Many-particles Systems}. Theoretical and Mathematical Physics \textbf{183}(3) (2015) 782--799.

\bibitem{BolFF}
Bola\~nos-Serv\'in J.R. and Fagnola F, {\em On the structure of quantum Markov semigroups of weak coupling limit type}, IOP Conf. Series: Journal of Physics: Conf. Series 819 (2017) 012003, doi:10.1088/1742-6596/819/1/012003

\bibitem{BlOl03}
Ph. Blanchard,  and R. Olkiewicz, {\em Decoherence induced transition
from quantum to classical dynamics}, {\it Rev. Math. Phys},
{\bf 15}, (2003), 217--243 .

\bibitem{BlOl06}
Ph. Blanchard,  and R. Olkiewicz,
{\em Decoherence as irreversible dynamical process in open quantum systems}, {\it Open quantum systems. III,  Lectures Notes in Math.}, 1882, Springer, Berlin, (2006).

\bibitem{CSU12}
R. Carbone, E. Sasso,  and V. Umanit\`a,
{\em Decoherence for Quantum Markov Semigroups on Matrix Algebras}, {\it Annales Henri Poincar\'e}, {\bf 14}, (2013), 17--37.

\bibitem{Cruz-Q} 
Cruz de la Rosa M.A. and Quezada R., {\em Non-equilibrium steady states of a Markov generator of weak coupling limit type modeling absorption-emission of m and n photons}, Infin. Dimens. Anal. Quantum Probab. Relat. Top. \textbf{19} (4) (2016) 1650023 (17 pages), DOI: 10.1142/S0219025716500235


\bibitem{adffrr}
A. Dhahri, F. Fagnola and R. Rebolledo, {\em The decoherence-free
subalgebra of a quantum Markov semigroup with unbounded generator}, {\it Infin. Dimens. Anal. Quantum Probab. Relat},
Top., {\bf 13}, (2010), 413--433.

\bibitem{Kozyrev-Vol} Kozyrev S. and Volovich I., {\em Dark states in quantum photosynthesis}, arXiv:1603.07182v1 [physics.bio-ph] (2016).  

\bibitem{FF-VU} 
Fagnola F. and Umanit\`a V., {\em Generators of detailed balance quantum Markov semigroups}, Infinite Dimensional Analysis, Quantum Probab. Relat. Top. \textbf{10} (3) (2007) 335--363.  

\bibitem{FF-RR} Fagnola F. and Rebolledo R, {\em Subharmonic projections for a quantum Markov semi- group}, J. Math. Phys. \textbf{43} (2002) 1074--1082. 

\bibitem{af} A.Frigerio, {\em Quantum dynamical semigroups and approach to equilibrium}, Lett. Math. Phys. 2 (1977) 79--87.

\bibitem{fv} A. Frigerio and M. Verri, {\em Long-time asymptotic properties of dynamical semigroups on $W^{*}$-algebras}, Math. Z. 180 (1982) 275--286.

\bibitem{gggq}
Garc\'ia J.C., Gliouez S., Guerrero F. and Quezada R., {\em Entangled and dark stationary states of excitation energy transport models in quantum many-particles systems and photosynthesis}, Infinite Dimensional Analysis, Quantum Probab. Relat. Top. \textbf{21} (3) (2018) 1850018 (21 pages), DOI: 10.1142/S0219025718500182

\bibitem{gg} 
Garc\'ia J.C. and Guerrero F., {\em Some non-equilibrium invariant states for the asymmetric exclusion QMS at level one}, Infinite Dimensional Analysis, Quantum Probab. Relat. Top. \textbf{18} (01) (2015)  550007, doi.org/10.1142/S0219025715500071

\bibitem{LCW}I.L. Chuang, D.A. Lidar and K.B. Whaley, {\em Decoherence-free Subspaces for Quantum Computation}, Phys. Rev. Lett, \textbf{81}, (1998), 2594.

\bibitem{li12}D. Lidar, {\em Review of Decoherence Free Subspaces, Noiseless Subsystems, and Dynamical Decoupling},  Adv. Chem. Phys. {\bf 154}, (2014), 295--354.

\bibitem{Scholes-et-al}
G. D. Scholes, G. R. Fleming, A. Olaya-Castro, and R. van Grondelle, {\em Lessons from nature about solar light harvesting},  Nature Chem. {\bf 3}, (2011), 763--774. DOI: 10.1038/NCHEM.1145
\end{thebibliography}
\end{document}